\titleformat{\subsection}[runin]{\normalfont\bfseries}{\thesubsection}{1em}{}
\newcommand{\defeq}{\vcentcolon=}
\theoremstyle{plain}
    \newtheorem{lemma}{Lemma}[section]
    \newtheorem{theorem}{Theorem}
    \newtheorem{statement}{Proposition}[section]
\theoremstyle{definition}
    \newtheorem{example}{Example}[section]
      \newtheorem{remark}{Remark}[section]
    \newtheorem{definition}{Definition}
\newcommand{\Ker}[1]{\mathrm{Ker} \, #1}
\newcommand{\rank}[1]{\mathrm{rank} \, #1}
\newcommand{\corank}[1]{\mathrm{corank} \, #1}
\newcommand{\diff}[1]{\mathrm{d}  #1}
\newcommand{\difft}[1]{ \frac{\mathrm d }{\mathrm d #1} }
\newcommand{\R}{\mathbb{R}}
\newcommand{\Complex}{\mathbb{C}}
\newcommand{\T}{\mathrm{T}}
\newcommand{\Cont}{\mathrm{C}}
\newcommand{\CP}{\overline{\mathbb{C}}}
\newcommand{\RP}{\overline{\mathbb{R}}}
\newcommand{\Hom}{\mathrm{H}}
\newcommand{\Tr}[1]{\mathrm{Tr} \, #1}
\newcommand{\eps}{\varepsilon}
\newcommand{\LieBracket}{ [\, , ] }
\newcommand{\PoissonBracket}{ \{ \, , \} }
\newcommand{\g}{\mathfrak{g}}
\newcommand{\h}{\mathfrak{Ker}}
\newcommand{\so}{\mathfrak{so}}
\newcommand{\SO}{\mathrm{SO}}
\newcommand{\ad}{\mathrm{ad}}
\newcommand{\zenter}{\mathcal{Z}}
\newcommand{\M}{{M}}
	\newcounter{pd}
\title{Stability of relative equilibria of multidimensional rigid body}
\author{Anton Izosimov\footnote{Lomonosov Moscow State University and National Research University Higher School of Economics. e-mail: izosimov@mech.math.msu.su}}
\date{}
\begin{document}
\maketitle
\abstract{It is a classical result of Euler that the rotation of a torque-free three-dimensional rigid body about the short or the long axis is stable, whereas the rotation about the middle axis is unstable. This result is generalized to the case of a multidimensional body.}
\section{Introduction}
\subsection{Three-dimensional free rigid body}
The Euler problem in rigid body dynamics is one of the following equivalent problems.
\begin{enumerate}
\item The motion of a rigid body fixed at the center of mass under no external forces.
\item The motion of a rigid body which is free to move in space under no external forces.
\end{enumerate}
The second problem is reduced to the first one by passing to the coordinate system related to the center of mass. In both cases we can add a constant gravity field because the resulting torque of the gravity force with respect to the center of mass vanishes.
\par
Let us consider the problem of motion of a rigid body fixed at the center of mass acted on by no external forces. Then, as was observed by Euler, the evolution equations for the angular velocity do not involve the position coordinates of the body. Euler's equations have the form
\begin{align}\label{EulerForm}
\begin{aligned}
\, I_1\,\dot \omega_1 &= (I_2 - I_3)\,\omega_2\omega_3 ,\\
\, I_2\,\dot \omega_2&= (I_3 - I_1)\,\omega_3\omega_1,\\
\, I_3\,\dot \omega_3 &= (I_1 - I_2)\,\omega_1\omega_2
\end{aligned}
\end{align}
where $I_1, I_2, I_3$ are the principal moments of inertia, and  $\omega = (\omega_1, \omega_2, \omega_3)$ is the angular velocity vector written in principal axes. In terms of modern geometric mechanics, this system is obtained from the ``rigid body fixed at the center of mass'' problem by reduction with respect to the $\SO(3)$ action. 
\begin{remark}
Recall that the inertia tensor of a rigid body is a positive-definite quadratic form $\mathbb I$ which characterizes the distribution of mass in the body. The eigenvalues of $\mathbb I$ are called principal moments of inertia, and its eigenvectors are called principal axes. For a uniform box-shaped body, principal axes coincide with the axes of symmetry, and principal moments are inverse proportional to the lengths of these axes. See \cite{Arnold} for details.
\end{remark}
If we assume that the body is asymmetric, i.e. if $I_1, I_2,$ and $I_3$ are pairwise distinct, then the right-hand sides of the equations \eqref{EulerForm} vanish simultaneously if and only if the angular velocity vector is collinear to one of the three principal axes. Thus the fixed point set of the system \eqref{EulerForm} consists of three mutually orthogonal straight lines. These fixed points are stationary, or permanent, rotations of the body, i.e. such motions that the axis of rotation is time-independent. 
Stationary rotations are also called relative equilibria. \par

As was shown by Euler, stationary rotations about different principal axes have different dynamical features. Rotation about the axis of greatest moment of inertia or axis of least moment of inertia is stable, whereas rotation about the intermediate axis is unstable. 
This can be demonstrated by trying to spin a book about one of its symmetry axes. While the book spins fairly well about the longest and the shortest axis, spinning about the intermediate axis causes the book to ``tumble'', periodically reversing the direction of rotation.
%
%
\par The aim of the present paper is to establish a multidimensional generalization of this result. The problem was studied by a number of authors \cite{Oshemkov, Marshall, Spiegler, Casu2, Casu, Ratiu}, however the general answer has only been obtained in dimension four. As the dimension grows, the problem becomes too complicated from the computational point of view when being approached by direct methods. In the present paper, the problem is solved in arbitrary dimension by means of algebraic technique related to compatible Poisson brackets and Lie algebras.
\subsection{Multidimensional rigid body}
The possibility to generalize the free rigid body equations to the $n$-dimensional case was already mentioned by Frahm \cite{Frahm} and Weyl \cite{Weyl}.
Arnold \cite{Arnold} observed that, after the standard identification of $\R^3$ with the space of skew-symmetric $3\times 3$ matrices $\so(3)$, the equations \eqref{EulerForm} can be rewritten in the form
 \begin{align}\label{eae0}
\begin{cases}
 \dot \M= [\M, \Omega],\\
\M = \Omega J + J \Omega
\end{cases}
 \end{align}
 where
$\M \in \so(3)$ is the angular momentum,
 $\Omega \in \so(3)$ is the angular velocity, and $J = \mathrm{diag}(J_1, J_2, J_3)$ is a constant positive-definite diagonal matrix such that $I_1 = J_2 + J_3, I_2 = J_1 + J_3$, and $I_3 = J_1 + J_2$.\par
The multidimensional generalization of the equations \eqref{eae0} is straightforward: we just replace $3 \times 3$ matrices by $n \times n$ matrices. A somewhat better approach is to generalize not the equations but the problem. Consider an $n$-dimensional rigid body fixed at the center of mass acted on by no external forces. Fix a space frame and a body frame both centered at the center of mass of the body. Let $X(t) \in \SO(n)$ be the position of the body frame with respect to the space frame. Define $\Omega = X^{-1}\dot X$. This matrix is skew-symmetric and is called the angular velocity matrix. Define also a symmetric matrix $J$ by
$$
J_{ij} = \int x_ix_j \diff \mu
$$
where the coordinates $x_i$ are related to the body frame, and $\diff \mu$ is the density of the mass distribution. Then it can be proved that the evolution of the angular velocity matrix $\Omega$ is governed by the equations \eqref{eae0}. Note that the equations  \eqref{eae0} are equivalent to the conservation of the angular momentum in the space frame:
$$
	\difft{t}(XMX^{-1})= 0.
$$ See \cite{Arnold, Manakov2} for details.


\begin{remark}{Following \cite{FK}, we suggest that $J$ is called the mass tensor. The mass tensor should not be confused with the inertia tensor. The inertia tensor is the map $\mathbb I \colon \so(n) \to \so(n)$ which is given by
$
\mathbb I(\Omega) = J\Omega + \Omega J.
$ In three dimensions $\so(3)$ may be identified with $\R^3$, which may lead to a confusion between $\mathbb I$ and $J$. In higher dimensions these two operators act on different spaces.} \end{remark}
\subsection{Multidimensional rigid body as a completely integrable system}
Arnold showed that the system (\ref{eae0}) is Hamiltonian with respect to the Lie-Poisson bracket on the dual of the Lie algebra $\so(n)$, and therefore the invariants of the coadjoint representation are first integrals of the system. These first integrals are trivial in the sense that they are Casimir functions of the Lie-Poisson bracket and do not correspond to symmetries. Later, Mischenko \cite{Mischenko} found a family of non-trivial quadratic first integrals. They were shown to be involution with respect to the Lie-Poisson bracket by Dikii \cite{Dikii}. Dikii also observed that in the four-dimensional case Mischenko's first integrals are sufficient for complete Liouville integrability.
In his famous paper \cite{Manakov}, Manakov showed that the equations \eqref{eae0} can be rewritten in the form
\begin{align*}
\difft{t}(\M+\lambda J^2) = [\M+\lambda J^2, \Omega + \lambda J],
\end{align*}

 which implies that the functions
 $
f_{\lambda, k}= \Tr (\M+\lambda J^2)^k
 $
 are first integrals\footnote{Manakov also showed that the system (\ref{eae0}) can be embedded into a large class of integrable systems which are now called Manakov tops. Note that all the results of the present paper remain true for all generic Manakov tops.}.
Later it was proved by Fomenko, Mischenko \cite{MF} and Ratiu \cite{Manakov2} that these first integrals Poisson-commute and are sufficient for complete Liouville integrability.\par
Note that we will not be using the integrals of the system in their explicit form: they are complicated polynomials uneasy to deal with. Instead of considering the integrals, we will make use of the bi-Hamiltonian structure of the system which encodes all the information about them. The bi-Hamiltonian structure of  \eqref{eae0}  was discovered by Bolsinov \cite{Bolsinov, Bolsinov2}\footnote{The same bi-Hamiltonian structure was later rediscovered by  Morosi and Pizzocchero \cite{Manakov3}.}.

\subsection{Stability for the multidimensional rigid body}
Below we discuss what is known about stability of stationary rotations in the multidimensional case. A more detailed comparison of previously known results with the results of the present paper can be found in Section \ref{exSect}.\par
The first topological description of the four-dimensional rigid body problem was obtained by Oshemkov \cite{Oshemkov} who constructed bifurcation diagrams of the moment map.  As it is clear now, these diagrams can be used, in principle, to study stability of stationary rotations\footnote{See \cite{ts} where the relation between topology and stability in integrable systems is discussed. Also note that the topological approach to integrable systems was first proposed in the classical work of Smale \cite{Smale}.}.\par
The solution of the stability problem in dimension four was obtained by Feher and Marshall \cite{Marshall}, and later by another approach by Birtea and Ca\c{s}u \cite{Casu2}, Birtea, Ca\c{s}u, Ratiu, and Turhan \cite{Ratiu}. 
In the present paper these results receive a geometric interpretation (see Example \ref{4d}).\par
There was also an attempt to solve the stability problem in five dimensions, however only partial results are available, see Ca\c{s}u \cite{Casu}. \par

The multidimensional situation was studied in the thesis of
 Spiegler \cite{Spiegler}. He gave a sufficient condition for a stationary rotation to be stable in arbitrary even dimension. However, as it follows from the results of the present paper, this condition is far from necessary and sufficient (see Example \ref{spieler}). \par 
 In the present paper, the stability problem is solved almost completely in arbitrary dimension by means of the bi-Hamiltonian approach. Bi-Hamiltonian approach for studying topology and stability in integrable systems was suggested and developed in \cite{biham, JGP, SBS}. \par
 We also refer the reader to the author's preprint \cite{AS} where the stability problem for the multidimensional rigid body is studied by means of algebraic geometry.
\subsection{Structure of the paper}
The paper is organized as follows. All main results are contained in Section \ref{StabSection}. Section  \ref{StabSection1} is devoted to the classification of stationary rotations. In Section  \ref{StabSection2}, the notion of a parabolic diagram is defined and the main stability theorem is formulated. Section  \ref{exSect} contains some examples and compares the results of the paper to previously known results.
 In Section \ref{biHam}, the machinery which allows to prove stability in bi-Hamiltonian systems is presented. In Section \ref{biHamMRB}, the bi-Hamiltonian structure of the multidimensional rigid body is introduced.
 The proof of the main theorem is in Section \ref{instabSect} and Section \ref{stabSect}.
 Finally, the appendix contains the explicit classification of Lie algebras $\g_\lambda$ which arise in the bi-Hamiltonian geometry of the multidimensional rigid body.

  \section{Main results}\label{StabSection}
\subsection{Stationary rotations}\label{StabSection1}
We study the equations
\begin{align}\label{eae}
\begin{cases}
 \dot \M= [\M, \Omega],\\
\M = \Omega J + J \Omega
\end{cases}
 \end{align}
 where
$\M \in \so(n)$ is called the angular momentum matrix\footnote{To be precise, the angular momentum $M$ belongs to the dual space $\so(n)^*$. In what follows, we identify $\so(n)$ and $\so(n)^*$ by means of the Killing form $\Tr XY$.}, 
 $\Omega \in \so(n)$ is called the angular velocity matrix, and $J$ is a positive symmetric matrix called the mass tensor. Without loss of generality we may assume that $J$ is diagonal.\par
Before describing stationary rotations, consider how an $n$-dimensional body may rotate. At each moment of time, the angular velocity matrix $\Omega$ may be brought to a canonical form
\begin{align}\label{omegaForm}
\Omega = \left(\begin{array}{ccccccc}0 & \omega_1 &  &  &  &  &    \\-\omega_1 & 0 &  &  &  &  &    \\ &  & \ddots &  &  &  &    \\ &  &  & 0 & \omega_m &  &    \\ &  &  & -\omega_m & 0 &  &    \\ &  &  &  &  & 0 &    \\ &  &  &  &  &  & \ddots  \end{array}\right)
\end{align} 
by an orthogonal transformation.
In other words, $\R^n$ is decomposed into a sum of $m$ pairwise orthogonal two-dimensional planes $\Pi_1, \dots, \Pi_m$ and a space $\Pi_0$ of dimension $n-2m$ orthogonal to all these planes:
\begin{align}\label{decomp}
\R^n = \left(\bigoplus_{i=1}^{m} \Pi_i \right)\oplus \Pi_0.
\end{align}
 There is an independent rotation in each of the planes $\Pi_1, \dots, \Pi_m$, while $\Pi_0 = \Ker \Omega$ is immovable.

\begin{definition}
The eigenvectors of $J$ are called \textit{principal axes of inertia}.
\end{definition}
A classical three-dimensional result states that a rotation is stationary if and only if it is a rotation about a principal axis of inertia. In the multidimensional case, this is not always so. Stationary rotations of a multidimensional rigid body are described in \cite{NRE}. 

 \begin{statement}
Consider the system (\ref{eae}). Suppose that $J$ has pairwise distinct eigenvalues. Then $\M$ is an equilibrium point of the system if and only if there exists an orthonormal basis such that $J$ is diagonal, and $\Omega$ is block-diagonal of the following form
\begin{align*}
 \Omega = \left(\begin{array}{cccccc}\omega_1\Omega_1 &  & & & \\ & \ddots & & & \\ &  & \omega_{k}\Omega_k & & \\  & & & 0 & \\ & & & & \ddots  \end{array}\right),
\end{align*}
 where $\Omega_{i} \in \so(2m_i) \cap \SO(2m_i)$ for some $m_i > 0$, and $\omega_i$'s are distinct positive real numbers.
 \end{statement}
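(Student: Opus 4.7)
The plan is to reduce the equilibrium condition to a condition involving only $J$ and $\Omega^2$, and then exploit the hypothesis that $J$ has simple spectrum.

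First I would rewrite the equilibrium condition. Since $\M = J\Omega + \Omega J$, a direct expansion gives
\begin{align*}
[\M,\Omega] = [J\Omega + \Omega J,\Omega] = J\Omega^2 - \Omega^2 J = [J,\Omega^2],
\end{align*}
so $\M$ is an equilibrium if and only if $J$ commutes with $\Omega^2$. Since $J$ is assumed to have pairwise distinct eigenvalues, its centralizer in $\mathrm{Mat}(n,\R)$ consists precisely of matrices that are diagonal in the eigenbasis of $J$. Hence in any orthonormal basis that diagonalizes $J$, the symmetric matrix $\Omega^2$ is also diagonal, with eigenvalues lying in $(-\infty,0]$ because $\Omega^2 = -\Omega^\top \Omega$ is negative semidefinite.

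Next I would exploit the block structure. Group the basis vectors according to the common value of $\Omega^2$; call these index sets $S_0,S_1,\dots,S_r$, where $S_0$ corresponds to the zero eigenvalue and $S_j$ (for $j \ge 1$) corresponds to eigenvalue $-\omega_j^2$ with $\omega_j > 0$ and the $\omega_j$ pairwise distinct. Since $\Omega$ commutes with $\Omega^2$, it preserves each eigenspace $V_j = \ls{e_i : i \in S_j}$. On $V_0$ the restriction $\Omega|_{V_0}$ is skew-symmetric with vanishing square, hence identically zero. On $V_j$ with $j \ge 1$, the restriction $\Omega|_{V_j}$ is skew-symmetric and satisfies $(\Omega|_{V_j})^2 = -\omega_j^2 \,\mathrm{Id}$; writing $\Omega|_{V_j} = \omega_j \Omega_j$, the operator $\Omega_j$ is skew-symmetric and orthogonal (since $\Omega_j^\top \Omega_j = -\Omega_j^2 = \mathrm{Id}$), which forces $|S_j| = 2m_j$ to be even and $\Omega_j \in \so(2m_j)\cap\SO(2m_j)$. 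Performing a permutation of basis vectors so that each $S_j$ consists of consecutive indices (this keeps $J$ diagonal) produces exactly the block-diagonal form claimed in the statement.

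The converse implication is an immediate verification: if $\Omega$ has the stated block-diagonal form in a basis in which $J$ is diagonal, then $\Omega^2$ is a diagonal matrix (its $j$-th block equals $-\omega_j^2\,\mathrm{Id}$ on $V_j$ and zero on $V_0$), hence commutes with $J$, and by the identity $[\M,\Omega]=[J,\Omega^2]$ derived above, $\M$ is an equilibrium. The only slightly delicate point is the extraction of the block form of $\Omega|_{V_j}$ from the two conditions of skew-symmetry and $\Omega_j^2 = -\mathrm{Id}$; this is a standard fact about orthogonal complex structures, and it is the one spot where evenness of the block sizes is genuinely used.
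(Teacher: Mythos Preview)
Your argument is correct. The key identity $[\M,\Omega]=[J,\Omega^2]$ reduces the equilibrium condition to $[J,\Omega^2]=0$, and from there the simple spectrum of $J$ forces $\Omega^2$ to be diagonal in the eigenbasis of $J$; grouping by the eigenvalues of $\Omega^2$ and using that $\Omega$ preserves these eigenspaces yields exactly the claimed block form. The converse is immediate from the same identity.

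There is nothing to compare against in the paper itself: the statement is quoted from the reference \cite{NRE} and is not proved in the text. Your proof is essentially the natural one and is what one would expect to find in that reference. One small remark: you assert $\Omega_j\in\SO(2m_j)$ rather than merely $\Orth(2m_j)$; this is true, but since you only argued $\Omega_j^\top\Omega_j=\mathrm{Id}$, it is worth noting explicitly that the eigenvalues of $\Omega_j$ are $\pm i$ and, being non-real eigenvalues of a real matrix, occur in conjugate pairs, so $\det\Omega_j=1$. This is presumably what you have in mind when you invoke the ``standard fact about orthogonal complex structures.''
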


 \begin{definition}\label{regDef}
 	A stationary rotation $\M$ is \textit{regular} if there exists an orthonormal basis such that $J$ is diagonal, and $\Omega$ is of the form \eqref{omegaForm}.
Otherwise, $\M$ is \textit{exotic}. \par
 \end{definition}
 In other words, a rotation is regular if all the planes $\Pi_i$ entering (\ref{decomp}) are spanned by principal axes of inertia.
 Note that if all non-zero eigenvalues of $\Omega$ are distinct, then the rotation is automatically regular.
In the three-dimensional case, all stationary rotations are regular.\par
In the present paper, only regular stationary rotations are considered.
\begin{remark}
It was asserted in the announcement \cite{Vestnik} as well as in the earlier version of this paper that all exotic stationary rotations are unstable, however some technical details in the proof are still to be completed\footnote{
Namely, we can prove that all exotic equilibria are unstable provided that (\ref{eae}) is known to be a non-resonant system, which means that trajectories of the system are dense on almost all Liouville tori. Moreover, we can prove that the non-resonant condition is satisfied for a certain open subset of the phase space. Since the system is analytic, this should imply that it is non-resonant everywhere \cite{intsys}, however the proof of this latter assertion is unknown to the author.}. The proof will published elsewhere.
Also note that exotic stationary rotations arise as relative equilibria of the $n$ point masses problem \cite{Chenciner}.
\end{remark}
The main problem of the paper is to study regular stationary rotations for Lyapunov stability.  It will be assumed that the body is \textit{asymmetric}, i.e. all eigenvalues of $J$ are pairwise distinct.

\subsection{Parabolic diagrams and stability}\label{StabSection2}
Consider a regular stationary rotation. Then there exists an orthonormal basis such that $J$ is diagonal, and $\Omega$ is given by (\ref{omegaForm}).
In other words, there exists a decomposition (\ref{decomp}) in which all planes $\Pi_i$ are spanned by principal axes of inertia.\par
Define the notion of the parabolic diagram of a regular stationary rotation. 
\begin{enumerate} \item Draw a coordinate plane. 
\item  For each $2$-plane $\Pi_i, i > 0$, draw the parabola given by $y = \chi_i(x)$ where
\begin{align}\label{chiFormula}
\chi_{i}(x) = \frac{(x - \lambda_{i1}^{2})(x - \lambda_{i2}^{2})}{\omega_{i}^{2}(\lambda_{i1} + \lambda_{i2})^{2}},
\end{align}
 $\omega_{i}$ is the frequency of rotation in the plane $\Pi_i$,  and $ \lambda_{i1},  \lambda_{i2}$ are the eigenvalues of $J$ corresponding to the eigenvectors $e_{i1}, e_{i2} \in \Pi_i$. 
\item For each immovable principal axis $e_0 \in \Pi_0$, draw a vertical straight line through $\lambda^{2}$ where $\lambda$ is the eigenvalue of $J$ corresponding to the eigenvector $e_0$. 
\end{enumerate}
As a result, there is either a parabola or a vertical straight line passing through the square of each eigenvalue of $J$.
\begin{definition}
The obtained picture is the \textit{parabolic diagram} of a regular stationary rotation. 
\end{definition}
Parabolic diagrams for the three-dimensional rigid body are depicted in Figure \ref{3dpd}. See Section \ref{exSect} for more examples.
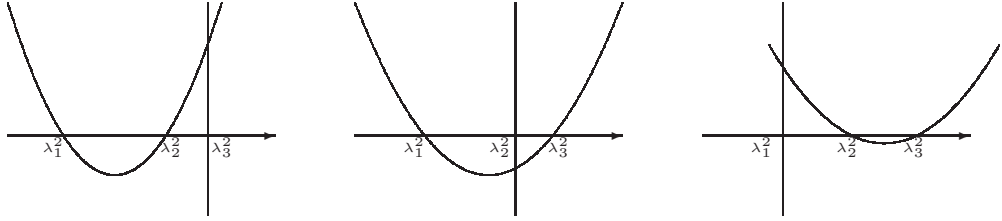
\begin{figure}[t]

{\begin{picture}(500,60)

\put(30,0){
\qbezier(0,80)(40,-50)(80,80)
\put(0,30){\vector(1,0){100}}
\put(75,0){\line(0,1){80}}
\put(13,24){\tiny{$\lambda_1^2$}}
\put(57,24){\tiny{$\lambda_2^2$}}
\put(76,24){\tiny{$\lambda_3^2$}}
}
\put(160,0){
\qbezier(0,80)(50,-50)(100,80)
\put(0,30){\vector(1,0){100}}
\put(60,0){\line(0,1){80}}
\put(18,24){\tiny{$\lambda_1^2$}}
\put(50,24){\tiny{$\lambda_2^2$}}
\put(72,24){\tiny{$\lambda_3^2$}}
}
\put(290,0){
\qbezier(25,64)(68,-10)(111,64)
\put(0,30){\vector(1,0){100}}
\put(30,0){\line(0,1){80}}
\put(18,24){\tiny{$\lambda_1^2$}}
\put(50,24){\tiny{$\lambda_2^2$}}
\put(75,24){\tiny{$\lambda_3^2$}}
}

\end{picture}}
\caption{Parabolic diagrams for the three-dimensional rigid body. Rotations around the long, middle, and short axis of inertia respectively.}\label{3dpd}
\end{figure}\par

\newpage
 The following theorem is the main result of the paper.
 
  \begin{theorem}\label{stabThm}
 Consider a regular stationary rotation of a multidimensional rigid body. 
 \begin{enumerate}
 \item Assume that
 \begin{enumerate}
  \item all intersections in the associated parabolic diagram are either real and belong to the upper half-plane, or infinite;
  \item the are no tangency points in the parabolic diagram.
  \end{enumerate}
Then the rotation is stable.
\item
Assume that
 there is at least one intersection in the parabolic diagram which is either complex or belongs to the lower half-plane. Then the rotation is unstable.
\end{enumerate}
 \end{theorem}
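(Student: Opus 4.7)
The approach is to exploit the bi-Hamiltonian structure of the Euler--Arnold system recalled in Section \ref{biHamMRB}. The Manakov integrals $f_{\lambda,k}$ form a family of Casimirs of a pencil $P_\lambda = P_0 + \lambda P_\infty$ of compatible Poisson brackets on $\so(n)$, and a regular stationary rotation $\M$ is a singular point of the associated Liouville foliation. Its stability type is governed by the \emph{singular spectrum} of the pencil at $\M$, i.e.\ the set of $\lambda \in \Complex \cup \{\infty\}$ for which the rank of $P_\lambda|_{\M}$ drops below its generic value on the symplectic leaf through $\M$. A direct block-by-block computation along the decomposition (\ref{decomp}) should show that, after the substitution $x = \lambda^2$, this set coincides with the set of intersection points of the parabolic diagram, with the $y$-coordinate recording the type (elliptic vs.\ hyperbolic vs.\ focus--focus) of the corresponding singularity. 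The particular denominator $\omega_i^2(\lambda_{i1}+\lambda_{i2})^2$ in the formula (\ref{chiFormula}) is precisely what one obtains by diagonalising $M+\lambda J^2$ on the $2\times 2$ block $\Pi_i$.

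For the instability part, I would linearize (\ref{eae}) at $\M$ and exhibit the characteristic polynomial of the linearization as a product over pairs of objects in the parabolic diagram (parabola--parabola, parabola--line, line--line at infinity). Each intersection point $(x_0, y_0)$ should contribute to the spectrum the roots of a quadratic of the form $\mu^2 + y_0 = 0$. Hence an intersection with $y_0 < 0$ produces a nonzero real eigenvalue and a complex intersection produces an eigenvalue with nonzero real part; in either case Lyapunov's first theorem on instability by linearization gives instability of $\M$.

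For the stability part, I would apply the bi-Hamiltonian stability criterion set up in Section \ref{biHam}. At each real singular value $\lambda_0$ of the pencil at $\M$ one canonically constructs a candidate Lyapunov function as a suitable combination of Casimirs of $P_{\lambda_0}$ (and of Casimirs of $P_\infty$ for the value at infinity). Using the classification of the algebras $\g_{\lambda_0}$ from the appendix, one reads off the Williamson type of the second differential of this function on the symplectic leaf; an intersection in the open upper half-plane corresponds to an \emph{elliptic} block and therefore contributes a sign-definite quadratic form. The no-tangency hypothesis ensures that distinct singular values of $\lambda$ correspond to independent directions in the tangent space, so the individual contributions assemble into a strict local extremum and hence a Lyapunov function for the whole flow.

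The main obstacle is the stability half. Concretely, one must match each combinatorial configuration of intersections to one of the pencil singularities whose algebra $\g_{\lambda_0}$ is listed in the appendix, check that the associated Casimir combination is locally sign-definite on the symplectic leaf, and then verify that the several local pieces can be glued into a single definite form. The tangency hypothesis is essential at this step: a tangency produces a non-semisimple singularity of the pencil (a Jordan block in the reduction), for which the sign-definiteness argument fails and the stability type is genuinely ambiguous without further information. It is precisely the exclusion of tangencies, together with the asymmetry of $J$, that turns the informal elliptic picture into a rigorous Lyapunov function.
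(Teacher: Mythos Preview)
Your overall strategy is the paper's strategy: compute the spectrum of the linearization via the bi-Hamiltonian machinery to obtain instability, and invoke the abstract stability criterion of Section~\ref{biHam} (Theorem~\ref{genStabThm}) for stability. A few points, however, are either imprecise or contain a genuine gap.

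\medskip
\textbf{Instability.} The eigenvalues are not quite roots of $\mu^2+y_0=0$. What one actually gets (Proposition~\ref{spectra}) is $\sigma = c/\sqrt{-y_0}$ for an explicit real factor $c$ depending on the eigenvalues of $J$ and the intersection abscissa. The sign analysis still works as you say, \emph{provided} $c\neq 0$. But $c$ can vanish, and the paper has to deal with this: it shows $c=0$ forces $\omega_i^2=\omega_j^2$, and then closes the argument by a perturbation-plus-heteroclinic step. Your sketch does not cover this degenerate case.

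Also, the substitution ``$x=\lambda^2$'' is not what happens: the pencil parameter $\lambda$ \emph{is} the abscissa on the parabolic diagram (Proposition~\ref{specProp}); the squares are those of the eigenvalues of $J$, not of $\lambda$.

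\medskip
\textbf{Stability.} Here is the real gap. You propose to read off the Williamson type from the classification of $\g_\lambda$ in the appendix and conclude ellipticity. But look at the appendix: for an intersection in the upper half-plane one can get $\g_\lambda \simeq \mathfrak u(l_z,r_z)$ with both $l_z,r_z>0$, which is \emph{not} compact, so no automatic ellipticity follows from the algebra alone. The paper's point (and the reason Definition~\ref{compDef} is phrased the way it is) is that the relevant object is not $\g_\lambda$ but the linear pencil $\diff_\lambda\Pi(M)$ on $\g_\lambda^*$, and this pencil can be compact even when $\g_\lambda$ is not. Proving this is the heart of Section~\ref{sectComp}: one must exhibit an element $X\in\zenter(\Ker\mathcal B)$ making $\mathcal B_X$ positive definite on $\Ker P_\lambda/K$, and the paper produces one via the explicit choice $x_i=-m_i/b_i$ and a direct sign check using the upper-half-plane hypothesis. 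Nothing in your outline substitutes for this computation; without it the stability argument does not close.

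\medskip
Finally, the no-tangency hypothesis enters earlier and more concretely than you suggest: it is exactly the diagonalizability condition of Definition~\ref{defDiag} (Proposition~\ref{diagMan}), which is one of the hypotheses of Theorem~\ref{genStabThm}. It is not (only) about ``independence of directions'' for gluing Lyapunov pieces; it is a structural hypothesis on the pencil needed before the compactness criterion can even be invoked.
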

 \begin{remark}
 When speaking about real, complex, or infinite intersections, the parabolic diagram is considered as a curve in $\Complex \mathrm{P}^2$.
 \end{remark}
 \begin{remark}
 	In \cite{JGP}, a weaker version of the first statement of Theorem \ref{stabThm} was announced. It included an additional requirement that $\dim \Pi_0 \leq 2$. In the present paper, the technique of \cite{JGP} is extended, so that the mentioned requirement could be omitted: it seems to be quite natural to consider rotations with a large number of fixed axes.
 \end{remark}
 \begin{remark}
 	Note that this theorem solves the stability problem for an open dense subset of regular stationary rotations.

 \end{remark}

\begin{remark}
It is proved in the preprint \cite{AS} that condition 1b) of Theorem \ref{stabThm} can be omitted, so a regular stationary rotation is stable if and only if all intersections in the associated parabolic diagram are either real and belong to the upper half-plane, or infinite. The proof uses methods from algebraic geometry.
\end{remark}

\subsection{Examples}\label{exSect}

\begin{figure}[p]

{\begin{picture}(500,70)

\put(30,-10){
\qbezier(24,80)(47,-50)(70,80)
\qbezier(20,75)(100,-20)(180,75)
\put(0,30){\vector(1,0){190}}
\put(28,24){\tiny{$\lambda_1^2$}}
\put(57,24){\tiny{$\lambda_2^2$}}
\put(75,24){\tiny{$\lambda_3^2$}}
\put(112,24){\tiny{$\lambda_4^2$}}
}

\put(250,-10){
\qbezier(24,80)(47,-50)(70,80)
\qbezier(64,80)(87,-50)(110,80)

\put(0,30){\vector(1,0){140}}
\put(28,24){\tiny{$\lambda_1^2$}}
\put(57,24){\tiny{$\lambda_2^2$}}
\put(68,24){\tiny{$\lambda_3^2$}}
\put(97,24){\tiny{$\lambda_4^2$}}
}

\end{picture}}
\caption{Parabolic diagrams for the four-dimensional rigid body. Stable rotation.}\label{4dpd1}
\end{figure}
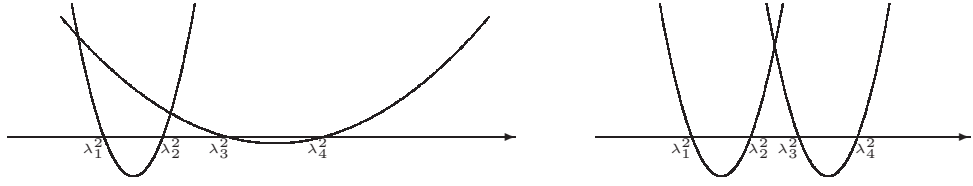
\begin{figure}[p]

{\begin{picture}(500,70)

\put(120,-10){
\qbezier(15,80)(71,-50)(127,80)
\qbezier(56,80)(91,-50)(126,80)

\put(0,30){\vector(1,0){140}}
\put(37,24){\tiny{$\lambda_1^2$}}
\put(67,24){\tiny{$\lambda_2^2$}}
\put(94,24){\tiny{$\lambda_3^2$}}
\put(107,24){\tiny{$\lambda_4^2$}}
}

\end{picture}}
\caption{Parabolic diagram for the four-dimensional rigid body. Unstable rotation.}\label{4dpd2}
\end{figure}
\begin{figure}[p]

{\begin{picture}(500,70)

\put(30,-0){
\qbezier(10,80)(75,-60)(140,80)
\qbezier(50,80)(80,-80)(110,80)

\put(0,30){\vector(1,0){140}}
\put(33,24){\tiny{$\lambda_1^2$}}
\put(54,24){\tiny{$\lambda_2^2$}}
\put(88,24){\tiny{$\lambda_3^2$}}
\put(107,24){\tiny{$\lambda_4^2$}}
}
\put(240,-0){
\qbezier(10,80)(75,-60)(140,80)
\qbezier(50,80)(80,-50)(110,80)

\put(0,30){\vector(1,0){140}}
\put(33,24){\tiny{$\lambda_1^2$}}
\put(56,24){\tiny{$\lambda_2^2$}}
\put(93,24){\tiny{$\lambda_3^2$}}
\put(107,24){\tiny{$\lambda_4^2$}}
}

\end{picture}}
\caption{Parabolic diagrams for the four-dimensional rigid body. Unstable rotation.}\label{4dpd3}
\end{figure}
\begin{figure}[p]

{\begin{picture}(500,70)

\put(120,5){
\qbezier(30,80)(75,-100)(120,80)
\qbezier(20,80)(80,-30)(140,80)

\put(0,30){\vector(1,0){140}}
\put(37,24){\tiny{$\lambda_1^2$}}
\put(54,24){\tiny{$\lambda_2^2$}}
\put(93,24){\tiny{$\lambda_3^2$}}
\put(105,24){\tiny{$\lambda_4^2$}}
}

\end{picture}}
\caption{Parabolic diagram for the four-dimensional rigid body. Stable rotation.}\label{4dpd4}
\end{figure}
\begin{figure}[p]
{\begin{picture}(500,80)

\put(120,5){
\qbezier(30,80)(75,-100)(120,80)
\qbezier(46,80)(83,-60)(120,80)
\put(0,30){\vector(1,0){140}}
\put(37,24){\tiny{$\lambda_1^2$}}
\put(55,24){\tiny{$\lambda_2^2$}}
\put(90,24){\tiny{$\lambda_3^2$}}
\put(105,24){\tiny{$\lambda_4^2$}}
}

\end{picture}}
\caption{Parabolic diagram for the four-dimensional rigid body. Hamiltonian Hopf bifurcation.}\label{4dpd5}
\end{figure}
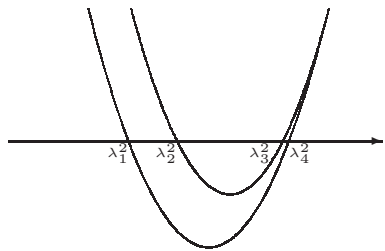
\begin{example}[Three-dimensional rigid body] Parabolic diagrams for the three-dimensional rigid body are depicted in Figure \ref{3dpd}. The classical results on stability are immediately recovered.
\end{example}\label{4d}
\begin{example}[Four-dimensional rigid body] Let $e_1, e_2,e_3,e_4$ be the principal axes sorted in order of increasing eigenvalues of $J$. There are three possibilities.
\begin{enumerate}
	\item $\Pi_1 = \langle e_1, e_2 \rangle,\Pi_2 = \langle e_3, e_4 \rangle$. The rotation is stable (see Figure \ref{4dpd1}; note that there is an intersection at infinity in the second diagram).
	\item $\Pi_1 = \langle e_1, e_3 \rangle,\Pi_2 = \langle e_2, e_4 \rangle$. The rotation is unstable (see Figure \ref{4dpd2}).
		\item $\Pi_1 = \langle e_1, e_4 \rangle,\Pi_2 = \langle e_2, e_3 \rangle$. In this case, stability depends on the ratio of angular velocities. If $\omega_1 \gg \omega_2$, then the rotation is unstable (see Figure \ref{4dpd3}; note that there is a complex intersection in the second diagram).  If $\omega_1 \ll \omega_2$, then the rotation is stable (see Figure \ref{4dpd4}).

	\end{enumerate}
	The conclusions of items 1, 2, 3 above coincide with the results of \cite{Marshall, Casu2, Ratiu}. Of course, the papers \cite{Marshall, Casu2, Ratiu} do not use the language of parabolic diagrams and give stability conditions in terms of some inequalities. However, the translation from the language of parabolic diagrams to the language of inequalities is straightforward.
	
	\par
			Note that there is a case with a tangency point in the upper half-plane (Figure \ref{4dpd5}) when Theorem \ref{stabThm} is not applicable. It is claimed in \cite{Marshall, Ratiu} that this rotation is unstable, however this conclusion seems to be incorrect. This follows from the results of \cite{AS} and can also be deduced from the bifurcation diagrams constructed by Oshemkov \cite{Oshemkov}.
		
	\end{example}
\begin{example}[Two-dimensional rotation] Suppose that there is only one plane of rotation $\Pi_1$, i.e. the body is rotating about a subspace of codimension two. 
Sort principal axes in order of increasing eigenvalues of $J$. Then Theorem \ref{stabThm} implies that the rotation is stable if and only if the plane of rotation is spanned by two adjacent axes: $\Pi_1 = \langle e_i, e_{i+1} \rangle$. This result can be viewed as a natural generalization of the Euler
theorem.
\end{example}
\begin{example}[Rotation with one fixed axis] Suppose that there is only fixed axis, i.e. $\Pi_0$ is one-dimensional. 
Sort principal axes in order of increasing eigenvalues of $J$. Assume that the fixed axis is in the even place (2nd, or 4th, or 6th, ...). Then the rotation is unstable. This result can also be viewed as a natural generalization of Euler's theorem
about instability of rotation about the middle axis.
\end{example}
\begin{example}[Spiegler's theorem]\label{spieler}
Below is the main result of the work \cite{Spiegler}, reformulated in terms of the present paper.
\begin{theorem}[Spiegler \cite{Spiegler}]\label{spThm}
 Consider a regular stationary rotation of a $2m$-dimensional rigid body. Sort principal axes in order of increasing eigenvalues of $J$. Assume that \begin{enumerate}\item all planes of rotation are spanned by two adjacent axes: $\Pi_1 = \langle e_1, e_2 \rangle,\Pi_2 = \langle e_3, e_4 \rangle, \dots $;\item $|\omega_1| > \dots > |\omega_n|$, or $|\omega_1| < \dots < |\omega_n|$.\end{enumerate} Then the rotation is stable. 
 \end{theorem}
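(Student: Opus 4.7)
The plan is to derive Theorem~\ref{spThm} as a direct consequence of Theorem~\ref{stabThm}. Because the body is $2m$-dimensional and the $m$ planes of rotation together span $\R^{2m}$, the subspace $\Pi_0$ of fixed vectors is trivial, so the parabolic diagram consists of $m$ parabolas and no vertical lines. What I would then have to check is that every pairwise intersection of two of these parabolas is either real and lies in the upper half-plane or is at infinity, and that no two parabolas are tangent.

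\textbf{Structure of the diagram.} The hypothesis $\Pi_k=\langle e_{2k-1},e_{2k}\rangle$ says that the $k$-th parabola $\chi_k$ from \eqref{chiFormula} has the two distinct roots $\lambda_{2k-1}^2<\lambda_{2k}^2$, and the roots of distinct parabolas are \emph{non-interleaving}: for $i<j$ one has $\lambda_{2i-1}^2<\lambda_{2i}^2<\lambda_{2j-1}^2<\lambda_{2j}^2$. Each $\chi_k$ opens upwards, is strictly negative between its roots, and strictly positive elsewhere.

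\textbf{Intersections of two parabolas.} Fix $i<j$ and set $p(x)\defeq\chi_i(x)-\chi_j(x)$. At $x=\lambda_{2i}^2$ one has $\chi_i=0$ and $\chi_j>0$, so $p<0$; at $x=\lambda_{2j-1}^2$ one has $\chi_i>0$ and $\chi_j=0$, so $p>0$. Hence $p$ has a real root in $(\lambda_{2i}^2,\lambda_{2j-1}^2)$, a region where both parabolas are positive, so this intersection already lies in the upper half-plane. To locate the second root I would compare the leading coefficients $1/\alpha_k$, where $\alpha_k\defeq\omega_k^2(\lambda_{2k-1}+\lambda_{2k})^2$. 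If $\alpha_i<\alpha_j$, then $p(x)\to+\infty$ at both ends of the real line, and combined with $p(\lambda_{2i-1}^2)<0$ this forces a second root to the left of $\lambda_{2i-1}^2$, where both $\chi_i$ and $\chi_j$ are again positive. Symmetrically, if $\alpha_i>\alpha_j$ the second root will lie to the right of $\lambda_{2j}^2$. If $\alpha_i=\alpha_j$, then $p$ is affine, has a single finite root, and the second intersection is at infinity, which is admissible by Theorem~\ref{stabThm}. In every case the two roots lie in disjoint open intervals, so they are distinct and no tangency can occur.

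\textbf{Main obstacle.} Once the three cases $\alpha_i<\alpha_j$, $\alpha_i=\alpha_j$, $\alpha_i>\alpha_j$ are disposed of as above, the hypotheses of part~1 of Theorem~\ref{stabThm} are satisfied and the rotation is stable. The main difficulty I anticipate is purely organizational: the careful sign bookkeeping at infinity and the three-way case split according to $\mathrm{sign}(\alpha_i-\alpha_j)$; no deeper estimate seems to be required. I would also remark that the monotonicity hypothesis on the $|\omega_k|$ never enters the argument, which suggests that this plan actually gives a slight strengthening of Theorem~\ref{spThm}.
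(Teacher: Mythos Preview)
Your proposal is correct and follows exactly the approach the paper takes: the paper simply states that ``it is easy to see that Theorem~\ref{stabThm} implies Theorem~\ref{spThm}'' and that condition~2 can be omitted, without spelling out any details. Your sign analysis of $p=\chi_i-\chi_j$ at the four root points $\lambda_{2i-1}^2<\lambda_{2i}^2<\lambda_{2j-1}^2<\lambda_{2j}^2$ together with the three-way split on the leading coefficient is precisely the verification the paper leaves implicit, and your observation that the monotonicity hypothesis on $|\omega_k|$ is never used is exactly the strengthening the paper highlights.
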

 It is easy to see that Theorem \ref{stabThm} implies Theorem \ref{spThm}. Moreover, Theorem \ref{stabThm} implies that condition 2 of Theorem \ref{spThm} can be omitted, and there are much more stability cases not covered by the result of Spiegler (see e.g. Figure \ref{4dpd1}).\par
Note that Spiegler's approach to the problem is based on the method known as the Arnold enegy-Casimir method (see \cite{AEC, Arnold}). As he proves, condition of Theorem \ref{spThm} is necessary and sufficient for the Hessian of the energy to be positive-definite on the coadjoint orbit. By comparing Theorem \ref{spThm} with Theorem \ref{stabThm}, we conclude that for the majority of stable stationary rotations the Hessian of the energy is indefinite, so the energy-Casimir method fails. For these rotations, another Lyapunov function is needed to prove stability. Such a function can be explicitly found in small dimensions, as it was done in \cite{Marshall, Casu2, Casu}, however it is not clear how to proceed for general $n$. The method of the present paper allows to prove the existence of a Lyapunov function without finding it explicitly. 
\end{example}

\section{Bi-Hamiltonian structures and stability}\label{biHam}
In this section, basic definitions and theorems related to stability in bi-Hamiltonian systems are formulated. Most of them can be found in \cite{JGP, SBS}. Basic notions from Poisson geometry used throughout the section can be found in \cite{Zung}.
	\subsection{Basic notions}
	
\begin{definition}
	Two Poisson brackets on a manifold $\mathrm M$ are called \textit{compatible}, if any linear combination of them is a Poisson bracket again.
The \textit{Poisson pencil} generated by two compatible Poisson brackets $P_0, P_\infty$ is the set
\begin{align*}\Pi \defeq \{P_\lambda = P_0 - \lambda P_\infty\}_{\lambda \in \CP}.\end{align*}
A vector field $\mathbf X$ is bi-Hamiltonian with respect to a pencil $\Pi$ if it is hamiltonian with respect to all brackets of the pencil, i.e. for any $\lambda \in \CP$ there exists a (complex-valued) smooth function $H_\lambda$ such that
$$
\mathbf X = P_\lambda \diff H_\lambda.
$$
\end{definition}
The notion of a bi-Hamiltonian system was introduced by F.Magri \cite{Magri}, I.\,Gelfand and I.\,Dorfman \cite{GD}.
	\begin{remark}\label{complCotSpace}
		For complex values of $\lambda$, the bracket $P_\lambda$ should be treated as a complex-valued Poisson bracket on complex-valued functions. The corresponding Poisson tensor is a bilinear form on the complexified cotangent space at each point.
	\end{remark}
\begin{definition}
The \textit{rank of a pencil $\Pi$ at a point $x$} is the number
\begin{align*}
\rank \Pi(x) \defeq \max_{\lambda \in \CP} \rank P_\lambda(x).
\end{align*}
The \textit{rank of a pencil $\Pi$} (on a manifold $\mathrm M$) is the number
\begin{align*}
\rank \Pi \defeq \max_{x \in \mathrm M} \rank \Pi(x).
\end{align*}
\end{definition}
\begin{definition}
	The \textit{spectrum} of a pencil $\Pi$ at a point $x$ is the set
	\begin{align*}
	\Lambda_\Pi(x) \defeq \{ \lambda \in \CP \mid \rank P_\lambda(x) < \rank \Pi(x)\}.
	\end{align*}
\end{definition}
When $\Pi$ is fixed, the notation $\Lambda(x)$ is also used.\par \smallskip
By $\g_{\lambda}(x)$, denote the Lie algebra structure defined on $\Ker P_\lambda(x)$ by the linear part of $P_\lambda$ at the point $x$. The commutator in $\g_\lambda$ is given by
$$
[\xi, \eta]_\lambda \defeq \diff \{f,g\}_\lambda(x)
$$
where $\xi, \eta \in \Ker P_\lambda$, and $f,g$ are any smooth functions such that $\diff f(x) = \xi, \diff g(x) = \eta$.\par
The algebra $\g_\lambda(x)$ is mainly considered only for $\lambda \in \Lambda(x)$.\
\begin{remark}
	For $\lambda \in \RP$, the algebra $\g_\lambda$ is real. However, for complex values of $\lambda$, the space $\Ker P_\lambda(x)$ is a subspace of $\T^*_xM \otimes \Complex$,  and therefore $\g_\lambda$ is considered as a complex Lie algebra.
\end{remark}
Let $\mathbf X$ be a system which is bi-Hamiltonian with respect to $\Pi$, and let $x$ be an equilibrium point of $\mathbf X$. \begin{definition}\label{regEq}
Say that $x$ is \textit{regular} if the following condition holds:
	\begin{align*}
	\Ker P_\alpha(x) = \Ker P_\beta(x) \mbox{ for all } \alpha, \beta \notin \Lambda(x).
	\end{align*}
\end{definition}
\begin{remark}
Under some additional technical assumptions, regularity is equivalent to the following: $x$ is an equilibrium for \textbf{all} systems which are bi-Hamiltonian with respect to the pencil $\Pi$ (see \cite{biham, JGP}). Equilibria not satisfying this condition are normally unstable.

\end{remark}
For a regular equilibrium $x$, denote $\h(x) \defeq \Ker P_\alpha(x) = \Ker P_\beta(x)$.
 It is easy to see that $\h(x) \subset \g_\lambda(x)$ is a Lie subalgebra for each $\lambda$.

 \subsection{Spectral formula for bi-Hamiltonian systems}
 Let $x$ be a regular equilibrium of a bi-Hamiltonian system $\mathbf X = P_\lambda \diff H_\lambda(x)$. 
 Let $\xi \in \g_\lambda(x) = \Ker P_\lambda(x)$. Denote
$$
\ad_{\lambda}\xi(\eta) \defeq [\xi,\eta]_\lambda
$$
where $\LieBracket_\lambda$ is the commutator in $\g_\lambda(x)$. \par Note that $P_\lambda \diff H_\lambda(x) = \mathbf X(x) = 0$, so $\diff H_\lambda(x) \in \Ker P_\lambda(x) = \g_\lambda(x)$. Consequently, the operator $\ad_{\lambda}\diff H_\lambda(x)$ is well-defined.
\begin{statement}\label{invOfK}
	The subalgebra $\h(x) \subset \g_\lambda(x)$ is invariant with respect to $\ad_{\lambda}\diff H_\lambda(x)$, i.e.
	$$
	[\diff H_\lambda(x), \h(x)]_\lambda \subset \h(x).
	$$
\end{statement}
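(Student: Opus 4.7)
The plan is to unravel the bracket $[\diff H_\lambda(x), \eta]_\lambda$ via its definition in terms of a Poisson bracket of functions, and then exploit the fact that the bi-Hamiltonian vector field $\mathbf X$ preserves every bracket of the pencil. Concretely, pick $\eta \in \h(x)$ and a smooth function $g$ with $\diff g(x) = \eta$. By the definition of $[\cdot,\cdot]_\lambda$, one has $[\diff H_\lambda(x), \eta]_\lambda = \diff\{H_\lambda,g\}_\lambda(x)$. The desired inclusion $[\diff H_\lambda(x), \eta]_\lambda \in \h(x)$ is therefore equivalent to
\[
P_\alpha(x)\,\diff\{H_\lambda,g\}_\lambda(x) = 0 \qquad \text{for every } \alpha \notin \Lambda(x),
\]
since $\h(x) = \Ker P_\alpha(x)$ for any such $\alpha$.

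The key identity is $\{H_\lambda,g\}_\lambda = \mathcal{L}_{\mathbf X} g$, which follows directly from $\mathbf X = P_\lambda \diff H_\lambda$. Because $\mathbf X$ is Hamiltonian with respect to \emph{every} bracket of the pencil, its flow preserves each $P_\alpha$, i.e.\ $\mathcal{L}_{\mathbf X} P_\alpha = 0$. Using the Leibniz rule for $\mathcal{L}_{\mathbf X}$ applied to the contraction $P_\alpha(\diff g, \cdot)$, together with the fact that $\mathcal{L}_{\mathbf X}$ commutes with $\diff$, one obtains the commutation
\[
P_\alpha\,\diff\{H_\lambda,g\}_\lambda \;=\; P_\alpha\,\diff(\mathcal{L}_{\mathbf X} g) \;=\; \mathcal{L}_{\mathbf X}(P_\alpha\,\diff g).
\]

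Now evaluate at the equilibrium. Since $\mathbf X(x) = 0$, for any vector field $Y$ the Lie derivative $\mathcal{L}_{\mathbf X} Y$ at $x$ reduces to $-(D_x \mathbf X)(Y(x))$. Applying this to $Y = P_\alpha \diff g$, we get $Y(x) = P_\alpha(x)\eta$, which vanishes because $\eta \in \h(x) = \Ker P_\alpha(x)$ for $\alpha \notin \Lambda(x)$. Hence $\mathcal{L}_{\mathbf X}(P_\alpha \diff g)(x) = 0$, which gives the required identity.

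The main obstacle, and the reason the statement is nontrivial, is the interplay between the two brackets involved: $\eta$ lies in the kernel of $P_\alpha(x)$ for \emph{generic} $\alpha$, while the commutator $[\cdot,\cdot]_\lambda$ that we must control is built from the \emph{singular} bracket $P_\lambda$ whose linear part encodes $\g_\lambda(x)$. The bridge is precisely the invariance of the whole pencil under $\mathbf X$ (which relies on $\mathbf X$ being Hamiltonian with respect to every $P_\alpha$ by bi-Hamiltonianity), together with the trick of converting the differential of the $P_\lambda$-bracket $\{H_\lambda,g\}_\lambda$ into a Lie derivative of $g$ along $\mathbf X$. Once this translation is made, the fact that $x$ is an equilibrium makes the terminal step immediate.
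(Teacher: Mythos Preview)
Your argument is correct. Both your proof and the paper's begin identically, writing $[\diff H_\lambda(x),\eta]_\lambda = \diff\{H_\lambda,g\}_\lambda(x)$ and noting $\{H_\lambda,g\}_\lambda = \mathbf X(g)$. From here the paper takes a shorter, purely algebraic route: it rewrites $\mathbf X(g) = \{H_\alpha,g\}_\alpha$ for any $\alpha\notin\Lambda(x)$, so that $[\diff H_\lambda(x),\eta]_\lambda = [\diff H_\alpha(x),\eta]_\alpha$, and this lies in $\Ker P_\alpha(x)=\h(x)$ simply because $\g_\alpha(x)$ is by construction a Lie algebra on $\Ker P_\alpha(x)$. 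Your route instead invokes the differential-geometric fact $\mathcal L_{\mathbf X}P_\alpha=0$ and the Leibniz rule to commute $P_\alpha$ past $\diff(\mathcal L_{\mathbf X}g)$, then evaluates at the equilibrium. The two approaches are close cousins (both ultimately rest on the Jacobi identity for $P_\alpha$), but the paper's version avoids the Lie-derivative formalism and the linearization step entirely, needing only one extra rewriting; yours has the advantage of making the role of the invariance of $P_\alpha$ under the flow fully explicit.
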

\begin{proof}
	Let $\xi \in \h(x)$. By definition,
	$$[\diff H_\lambda(x), \xi]_\lambda = \diff \{H_\lambda, g\}_\lambda(x)$$ where $g$ is any function such that $\diff g(x) = \xi$. Further,
	$$ \{H_\lambda, g\}_\lambda =  \mathbf X(g) = \{H_\alpha, g\}_\alpha$$
	for any $\alpha \in \CP$. So,
	$$[\diff H_\lambda(x), \xi]_\lambda = \diff  \{H_\alpha, g\}_\alpha(x) = [\diff H_\alpha, \xi]_\alpha.$$
	If $\alpha \notin \Lambda(x)$, then  $[\diff H_\alpha, \xi]_\alpha \in \Ker P_\alpha(x) = \h(x)$, so $[\diff H_\lambda(x), \xi]_\lambda \in \h(x)$, q.e.d.
\end{proof}

Let $x$ be a regular equilibrium point of a bi-Hamiltonian system. Then all symplectic leaves of generic brackets $P_\alpha, \alpha \notin \Lambda(x)$ are tangent to each other. Denote their common tangent space by $\mathrm T(x)$.\par
The following statement is used to find the spectrum of a bi-Hamiltonian system linearized at a regular equilibrium point.
\begin{lemma}
\label{specLemma}
Suppose that $\Pi$ is a Poisson pencil on a finite-dimensional manifold, and $\mathbf X = P_\lambda \diff H_\lambda$ is a system which is bi-Hamiltonian with respect to $\Pi$. Let $x$ be a regular equilibrium of $\mathbf X$. 

 Then the spectrum of the linearization of $\mathbf X$ at $x$ restricted to $\mathrm T(x)$ is given by
 $$
 \sigma(\diff \mathbf X\mid_{\mathrm T(x)}) = \bigcup_{\lambda \in \Lambda(x)} \sigma\left((\ad_\lambda\,\diff H_\lambda(x))\mid_{\g_\lambda(x) / \h(x)}\right)
 $$
  where $\sigma(P)$ stands for the spectrum of the operator $P$.
 \end{lemma}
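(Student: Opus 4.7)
The plan is to express $\diff \mathbf{X}|_{\T(x)}$ intrinsically as an operator on the quotient $\T^*_xM/\h(x)$, and then decompose this operator blockwise using the Jordan--Kronecker normal form of the pencil $\{P_\lambda(x)\}$ of skew bilinear forms on $\T^*_xM$. The upshot will be that each singular value $\lambda \in \Lambda(x)$ contributes exactly one block, and the action on that block coincides with $\ad_\lambda \diff H_\lambda(x)$ acting on $\g_\lambda(x)/\h(x)$.

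First I would fix a generic $\alpha \notin \Lambda(x)$ and use the realization $\mathbf{X} = P_\alpha \diff H_\alpha$. Since $\mathbf{X}(x) = 0$, the differential $\diff H_\alpha(x)$ lies in $\Ker P_\alpha(x) = \h(x)$. The map $P_\alpha(x) \colon \T^*_xM \to \T_xM$ has kernel $\h(x)$ and image $\T(x)$, so it induces an isomorphism $\bar P_\alpha \colon \T^*_xM/\h(x) \to \T(x)$, while the Hessian $\mathrm{Hess}\,H_\alpha(x)$ descends to a well-defined symmetric form on $\T^*_xM/\h(x)$. A direct computation of $\diff \mathbf{X}|_{\T(x)}$ at the equilibrium (the derivative of $P_\alpha$ contributes nothing, because $\diff H_\alpha(x) \in \h(x)$) then shows that, under $\bar P_\alpha$, the operator $\diff \mathbf{X}|_{\T(x)}$ is conjugate to the endomorphism of $\T^*_xM/\h(x)$ built from $\mathrm{Hess}\,H_\alpha(x)$ via the nondegenerate skew form induced by $P_\alpha(x)$.

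Next I would invoke the Jordan--Kronecker decomposition of $\{P_\lambda(x)\}$ on $\T^*_xM$: the space splits canonically into Kronecker summands and Jordan summands, the latter indexed by $\lambda \in \Lambda(x)$. By Definition~\ref{regEq}, every Kronecker summand is contained in the common kernel $\h(x)$, so on the quotient $\T^*_xM/\h(x)$ only the Jordan part survives. The linear part of $P_\lambda$ at $x$ equips the $\lambda$-summand, together with its share of $\h(x)$, with the Lie algebra structure of $\g_\lambda(x)$, and Proposition~\ref{invOfK} guarantees that the corresponding piece of $\h(x)$ is a subalgebra stable under $\ad_\lambda \diff H_\lambda(x)$, so the adjoint descends to $\g_\lambda(x)/\h(x)$. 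Finally, by expanding the compatibility identity $P_\alpha \diff H_\alpha = P_\lambda \diff H_\lambda$ to second order at $x$ and reading it in coordinates adapted to the Jordan block at $\lambda$, I would match the restriction of the Step~1 endomorphism to the $\lambda$-block with $\ad_\lambda \diff H_\lambda(x)$ on $\g_\lambda(x)/\h(x)$; taking the union of spectra over $\lambda \in \Lambda(x)$ gives the stated formula.

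The main obstacle is precisely this last identification. One has to track how the Hessian $\mathrm{Hess}\,H_\alpha(x)$, once re-expressed in Jordan--Kronecker coordinates adapted to a singular value $\lambda$, reassembles into the intrinsic adjoint operator $\ad_\lambda \diff H_\lambda(x)$ rather than into an $\alpha$-dependent modification of it. This hinges on a careful Taylor expansion of the bi-Hamiltonian identity $P_\alpha \diff H_\alpha = P_\lambda \diff H_\lambda$ together with the explicit form of the pencil on a $\lambda$-Jordan block, and is where essentially all the bookkeeping of the argument is concentrated.
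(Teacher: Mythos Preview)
The paper does not actually prove this lemma in the text; it just records that it ``easily follows from the results of \cite{JGP, SBS}''. So there is no in-paper argument to compare against, and your strategy via the Jordan--Kronecker decomposition of the pencil of skew forms $\{P_\lambda(x)\}$ on $\T^*_x\mathrm M$ is the natural one and presumably close in spirit to what those references do.

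There is, however, a genuine gap in your Step~2. You assert that the $\lambda$-summand in the Jordan--Kronecker decomposition, together with $\h(x)$, carries the Lie algebra structure of $\g_\lambda(x)$, and you then set out to match $\diff\mathbf X$ on that summand with $\ad_\lambda \diff H_\lambda(x)$ on $\g_\lambda(x)/\h(x)$. But $\g_\lambda(x) = \Ker P_\lambda(x)$, and modulo $\h(x)$ this is only the \emph{kernel} of $P_\lambda$ on the $\lambda$-Jordan part, not the whole $\lambda$-Jordan part. The two coincide exactly when every Jordan block at $\lambda$ has minimal size~$2$, i.e.\ when the pencil is diagonalizable at $x$ in the sense of Definition~\ref{defDiag} (cf.\ Proposition~\ref{diagCrit}). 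The lemma assumes only regularity, and the paper applies it (via Proposition~\ref{spectra}) in the instability proof without excluding tangencies in the parabolic diagram, where by Proposition~\ref{diagMan} the pencil is \emph{not} diagonalizable and the $\lambda$-Jordan part is strictly larger than $\g_\lambda(x)/\h(x)$.

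To repair this you must argue that on a higher-size Jordan block at $\lambda$ the conjugated linearization is block-triangular with diagonal blocks conjugate to the action of $\ad_\lambda \diff H_\lambda(x)$ on $\g_\lambda(x)/\h(x)$, so that the \emph{set} of eigenvalues agrees even though the underlying spaces have different dimensions. This is an additional step, not the ``bookkeeping'' you anticipate: the obstacle you flag is about matching two operators on the same space, whereas the actual difficulty in the non-diagonalizable case is a dimension mismatch that your outline does not acknowledge.
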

Note that the restriction of $\ad_\lambda\,\diff H_\lambda(x)$ to ${\g_\lambda(x) / \h(x)}$ is well-defined since $\h(x)$ is invariant with respect to $\ad_{\lambda}\diff H_\lambda(x)$ (see Proposition \ref{invOfK}).\par
\smallskip 
The proof of Lemma \ref{specLemma} easily follows from the results of \cite{JGP, SBS}.
\subsection{Linearization of a Poisson pencil and nonlinear stability}
\begin{definition}
	Let $\g$ be a (real or complex) Lie algebra, and let $\mathfrak B$ be a skew-symmetric bilinear form on $\g$. 
	Then $\mathfrak B$ can be considered as a Poisson tensor on the dual space $\g^{*}$. Assume that the corresponding bracket is compatible with the Lie-Poisson bracket. In this case the Poisson pencil $\Pi(\g, \mathfrak B)$ generated by these two brackets is called the \textit{linear pencil} associated with the pair $(\g, \mathfrak B)$.
\end{definition}

\begin{statement}
	A form $\mathfrak B$ on $\g$ is compatible with the Lie-Poisson bracket if and only if this form is a Lie algebra $2$-cocycle, i.e. 
	\begin{align*}
		\diff \mathfrak B(\xi,\eta,\zeta) \defeq \mathfrak B([\xi,\eta],\zeta) + \mathfrak B([\eta,\zeta],\xi) + \mathfrak B([\zeta,\xi],\eta) = 0
	\end{align*}
	for any $\xi, \eta, \zeta \in \g$.
\end{statement}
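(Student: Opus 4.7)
The plan is to test the Jacobi identity for the candidate bracket $\{\,,\,\}_{\mathrm{LP}} + \{\,,\,\}_{\mathfrak{B}}$ on linear functions on $\g^*$ and recognize the resulting obstruction as precisely the cocycle equation. Throughout I identify $\xi \in \g$ with the linear function $f_\xi(x) \defeq \langle x, \xi \rangle$ on $\g^*$, so that $\diff f_\xi(x) = \xi$ is a constant covector field.

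First I would settle the preliminaries. The constant bracket $\{f,g\}_{\mathfrak{B}}(x) \defeq \mathfrak{B}(\diff f(x), \diff g(x))$ is itself Poisson: skew-symmetry and the Leibniz rule are immediate, and its Jacobiator vanishes since $\{f_\xi, f_\eta\}_{\mathfrak{B}} = \mathfrak{B}(\xi,\eta)$ is a constant whose bracket with anything is zero, and the Jacobiator is tensorial in the differentials of its arguments. Given that $\{\,,\,\}_{\mathrm{LP}}$ is also Poisson, compatibility of the two brackets is equivalent to the Jacobi identity for their sum (equivalently, to the vanishing of the cross Schouten bracket $[\pi_{\mathrm{LP}}, \pi_{\mathfrak{B}}]_S$, using $[\pi_1+\pi_2,\pi_1+\pi_2]_S=2[\pi_1,\pi_2]_S$ when both $\pi_i$ are Poisson).

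Next I would run the Jacobi computation on three linear functions $f_\xi, f_\eta, f_\zeta$. Using $\{f_\xi, f_\eta\}_{\mathrm{LP}} = f_{[\xi,\eta]}$ and $\{f_\xi, f_\eta\}_{\mathfrak{B}} = \mathfrak{B}(\xi,\eta)$ (a constant, hence killed by further bracketing), one computes
\[
\{\{f_\xi,f_\eta\},f_\zeta\} \;=\; f_{[[\xi,\eta],\zeta]} + \mathfrak{B}([\xi,\eta],\zeta).
\]
Summing cyclically over $(\xi,\eta,\zeta)$, the Jacobi identity for the commutator in $\g$ annihilates the linear part, leaving exactly
\[
\mathfrak{B}([\xi,\eta],\zeta) + \mathfrak{B}([\eta,\zeta],\xi) + \mathfrak{B}([\zeta,\xi],\eta) \;=\; 0,
\]
which is $\diff\mathfrak{B}(\xi,\eta,\zeta) = 0$. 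Thus if the sum is Poisson, then $\mathfrak{B}$ is a cocycle; conversely, if $\mathfrak{B}$ is a cocycle, Jacobi holds on all triples of linear functions.

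The only step requiring a little care — the main (mild) obstacle — is promoting the Jacobi identity from linear to arbitrary smooth functions. I would handle this by recalling that the Jacobiator of any pair of bivector fields is $C^\infty$-linear in the differentials of its three arguments, so its pointwise value at $x$ is determined by $\diff f(x), \diff g(x), \diff h(x)$ alone; since every covector on $\g^*$ at $x$ is the differential of some linear function on $\g^*$, vanishing of the Jacobiator on linear functions forces vanishing everywhere. Once this tensoriality is invoked, the equivalence between compatibility and the cocycle condition is complete.
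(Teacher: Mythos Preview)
Your argument is correct and is the standard one. Note, however, that the paper does not actually supply a proof for this proposition: it is stated as a known fact and then the exposition moves on to the next construction. So there is nothing to compare against; your write-up fills in exactly the routine verification the paper omits, and each step (the constant bivector being Poisson, the cross-term in the Schouten bracket reducing to the cyclic sum, and the tensoriality of the Jacobiator allowing the check on linear functions) is sound.
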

Below is the central construction of the theory discussed in the present section. Let $\Pi$ be an arbitrary Poisson pencil on a manifold $\mathrm M$, and $x \in \mathrm M$. As before, denote the Lie algebra on $\Ker P_\lambda(x)$ by $\g_{\lambda}(x)$.
It turns out that apart from the Lie algebra structure, $\g_{\lambda}$ carries one more additional structure.
\begin{statement}\label{PAlphaOnKerPLambda}\quad\par
\begin{enumerate}
	\item For any $\alpha$ and $\beta$  the restrictions of $P_{\alpha}(x), P_{\beta}(x)$ on $\g_{\lambda}(x)$ coincide up to a constant factor.
	\item The $2$-form $P_{\alpha}|_{\g_{\lambda}}$ is a $2$-cocycle on $\g_{\lambda}$.
\end{enumerate}
\end{statement}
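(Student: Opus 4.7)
For Part 1, the plan is to exploit the fact that a Poisson pencil is, by definition, a one-parameter (projective) family spanning a two-dimensional linear space of tensors. Hence any three members $P_\alpha, P_\beta, P_\lambda$ are linearly dependent, and one can write $P_\alpha = c_1 P_\lambda + c_2 P_\beta$ for suitable scalars $c_1, c_2$ depending only on $\alpha, \beta, \lambda$. Restricting this identity to $\g_\lambda(x) = \Ker P_\lambda(x)$ kills the first summand, yielding $P_\alpha|_{\g_\lambda(x)} = c_2\, P_\beta|_{\g_\lambda(x)}$, which is the desired proportionality. (The degenerate cases in which $\lambda$ coincides with $\alpha$ or $\beta$, or in which a pencil member is ``at infinity'', are handled by the same argument after clearing denominators.)

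For Part 2, the plan is to extract the cocycle identity from the compatibility of $P_\alpha$ and $P_\lambda$. Because every linear combination $P_\alpha + t P_\lambda$ is again a Poisson bracket, expanding its Jacobi identity as a polynomial in $t$ and collecting the mixed term produces the cross-Jacobi identity
\begin{align*}
\sum_{\mathrm{cyc}}\Bigl(\{\{f,g\}_\alpha, h\}_\lambda + \{\{f,g\}_\lambda, h\}_\alpha\Bigr) = 0,
\end{align*}
where the sum is over cyclic permutations of $f,g,h$. I choose functions with $\diff f(x) = \xi$, $\diff g(x) = \eta$, $\diff h(x) = \zeta \in \Ker P_\lambda(x)$ and evaluate at $x$. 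Every summand of the first type equals $\langle \diff\{f,g\}_\alpha(x),\, P_\lambda(x)\,\zeta\rangle$ and therefore vanishes, since $\zeta \in \Ker P_\lambda(x)$. What remains is exactly
\begin{align*}
\sum_{\mathrm{cyc}} P_\alpha(x)\bigl([\xi,\eta]_\lambda,\,\zeta\bigr) = 0,
\end{align*}
which is the $2$-cocycle condition for $P_\alpha|_{\g_\lambda(x)}$.

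Before concluding, one must verify that $[\xi,\eta]_\lambda = \diff\{f,g\}_\lambda(x)$ itself lies in $\g_\lambda(x)$, so that the cocycle identity is genuinely a condition on the Lie algebra $\g_\lambda(x)$ and does not depend on extending the form $P_\alpha|_{\g_\lambda}$ outside of $\Ker P_\lambda(x)$. This closure is a consequence of the Jacobi identity for $P_\lambda$ alone: for any test function $h$ one has $\langle \diff\{f,g\}_\lambda(x), P_\lambda(x)\diff h\rangle = \{\{f,g\}_\lambda, h\}_\lambda(x)$, and applying Jacobi together with $\diff f(x), \diff g(x) \in \Ker P_\lambda(x)$ shows that this quantity vanishes for every $h$. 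The main obstacle is therefore not computational but bookkeeping: one must track carefully which covectors lie in $\Ker P_\lambda(x)$ so that each ``wrong'' term in the expansions genuinely vanishes pointwise at $x$. Once this is in place, both assertions reduce to the two short algebraic manipulations sketched above.
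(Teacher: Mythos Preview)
The paper does not actually supply a proof of this proposition; it is stated without proof in Section~\ref{biHam}, whose opening line refers the reader to \cite{JGP, SBS} for the details. Your argument is correct and is the standard one: Part~1 follows immediately from the affine relation $P_\alpha = c_1 P_\lambda + c_2 P_\beta$ restricted to $\Ker P_\lambda(x)$, and Part~2 follows from the mixed term of the Jacobi identity for $P_\alpha + tP_\lambda$, with the three terms of type $\{\{f,g\}_\alpha,h\}_\lambda(x)$ vanishing because $\diff h(x)\in\Ker P_\lambda(x)$; your verification that $[\xi,\eta]_\lambda\in\Ker P_\lambda(x)$ via the Jacobi identity for $P_\lambda$ alone is also correct and necessary to make the cocycle identity intrinsic to $\g_\lambda(x)$.
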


Consequently, $P_{\alpha}|_{\g_{\lambda}}$ defines a linear Poisson pencil on $\g_{\lambda}^{*}$. Since $P_{\alpha}|_{\g_{\lambda}}$ is defined up to a constant factor, the pencil is well-defined. Denote this pencil by $\diff_{\lambda} \Pi(x)$.
\begin{definition}
	The pencil $\diff_{\lambda} \Pi(x)$ is called the \textit{$\lambda$-linearization} of the pencil $\Pi$ at $x$.
\end{definition}

Now, let $\mathfrak B$ be a $2$-cocycle on a Lie algebra $\g$.
For an arbitrary element $\nu \in \Ker \mathfrak B$, define the bilinear form $$\mathfrak B_\nu(\xi, \eta) \defeq \mathfrak B([\nu, \xi], \eta).$$ The cocycle identity implies that this form is symmetric. Furthermore, $\Ker \mathfrak B_\nu \supset \Ker \mathfrak B$, therefore $\mathfrak B_\nu$ is a well-defined symmetric form on the vector space $\g / \Ker \mathfrak B$.

 \begin{definition}\label{compDef}
 A linear pencil $\Pi(\g, \mathfrak B)$ is \textit{compact} if there exists $\nu \in \zenter(\Ker \mathfrak B)$ such that $\mathfrak B_{\nu}$ is positive-definite on $\g / \Ker \mathfrak B$. 
\end{definition}
\begin{remark}The notation $\zenter(\g)$ stands for the center of the Lie algebra $\g$. \end{remark} 
Definition \ref{compDef} is motivated by the following statement.
\begin{statement}\label{compcomp}
	Any linear pencil on a compact semisimple Lie algebra is compact.
\end{statement}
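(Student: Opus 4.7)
The plan is to exploit two classical facts about compact semisimple Lie algebras: Whitehead's second lemma, asserting $H^{2}(\g, \R) = 0$, and the negative-definiteness of the Killing form $K$. Together they reduce an arbitrary $2$-cocycle to a single element of $\g$ on which the compactness condition can be checked directly.

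First, I would invoke Whitehead's lemma to write $\mathfrak B$ as a coboundary: there exists $\phi \in \g^{*}$ with $\mathfrak B(\xi, \eta) = \phi([\xi, \eta])$. Since $K$ is non-degenerate, $\phi(\cdot) = K(\nu, \cdot)$ for a unique $\nu \in \g$, and the $\ad$-invariance of $K$ gives
\begin{align*}
\mathfrak B(\xi, \eta) = K(\nu, [\xi, \eta]) = K([\nu, \xi], \eta).
\end{align*}
Non-degeneracy of $K$ now identifies $\Ker \mathfrak B$ with the centralizer $\{\xi \in \g : [\nu, \xi] = 0\}$. By definition of the centralizer, $\nu$ commutes with every element of $\Ker \mathfrak B$, so $\nu \in \zenter(\Ker \mathfrak B)$ as required by Definition \ref{compDef}.

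It remains to verify that $\mathfrak B_\nu$ is positive-definite on $\g / \Ker \mathfrak B$. Using $\ad$-invariance of $K$ once more,
\begin{align*}
\mathfrak B_\nu(\xi, \eta) = \mathfrak B([\nu, \xi], \eta) = K([\nu, [\nu, \xi]], \eta) = -K([\nu, \xi], [\nu, \eta]),
\end{align*}
so $\mathfrak B_\nu(\xi, \xi) = -K([\nu, \xi], [\nu, \xi])$. Because $-K$ is positive-definite on a compact semisimple Lie algebra, this expression is non-negative and vanishes precisely when $[\nu, \xi] = 0$, i.e. when $\xi \in \Ker \mathfrak B$. Hence $\mathfrak B_\nu$ descends to a positive-definite symmetric form on the quotient $\g / \Ker \mathfrak B$.

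The only non-routine ingredient is Whitehead's lemma, which packages an arbitrary $2$-cocycle into the explicit form $\mathfrak B(\xi, \eta) = K(\nu, [\xi, \eta])$; once that is in hand, everything else is sign-tracking using $\ad$-invariance of $K$ and its negative-definiteness. Producing the element $\nu$ from an abstract cocycle is thus the genuine obstacle, though in the semisimple compact setting it is resolved for free.
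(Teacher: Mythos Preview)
Your proof is correct and follows essentially the same argument as the paper: invoke $H^{2}(\g)=0$ to write $\mathfrak B(\xi,\eta)=K(\nu,[\xi,\eta])$, observe that $\Ker\mathfrak B$ is the centralizer of $\nu$ so $\nu\in\zenter(\Ker\mathfrak B)$, and then use $\ad$-invariance and negative-definiteness of the Killing form to obtain $\mathfrak B_\nu(\xi,\xi)=-K([\nu,\xi],[\nu,\xi])>0$ on the quotient. The paper's proof is the same computation, only more tersely written.
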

\begin{proof}	 Let $\g$ be a compact semisimple Lie algebra. Since $\Hom^2(\g) = 0$, any cocycle $\mathfrak B$ on $\g$ has the form $\mathfrak B(\xi, \eta) = \langle \zeta,[\xi, \eta] \rangle$, where $\langle\, , \rangle$ is the Killing form, $\zeta \in \g$. Take $\nu = \zeta$. Note that $\Ker \mathfrak B$ is the centralizer of $\nu$, so $\nu \in \zenter(\Ker \mathfrak B)$. Further,
$$
\mathfrak B_\nu(\xi, \xi) =\langle \nu, [[\nu, \xi], \xi] \rangle = -\langle [\nu, \xi], [\nu, \xi] \rangle > 0,
$$
so the pencil is compact.
\end{proof}
Another motivation for Definition \ref{compDef} is the following fact: let a system $\mathbf X$ be bi-Hamiltonian with respect to a compact linear pencil; then all trajectories of $\mathbf X$ are bounded.\par
In the present paper, there will be non-trivial examples of compact linear pencils on non-compact Lie algebras $\mathfrak{u}(p,q)$ and $\mathfrak{u}(p,q) \ltimes \Complex^{p+q}$ arising as $\lambda$-linearizations of the pencil related to the multidimensional rigid body (see Appendix).

\begin{definition}\label{defDiag}
	A pencil $\Pi$ is called diagonalizable at $x$ if
	\begin{align}\label{diagCond}
	\dim \Ker \left( P_\alpha(x)|_{P_\lambda(x)}\right) = \corank \Pi(x) \mbox{ for all } \lambda \in \Lambda_\Pi(x), \alpha \neq \lambda.
	\end{align}
\end{definition}
Note that if (\ref{diagCond}) is satisfied for some $\alpha \neq \lambda$, then it is satisfied for any $\alpha \neq \lambda$ (see Proposition \ref{PAlphaOnKerPLambda}).\par\smallskip
The following theorem is used to prove nonlinear stability for a bi-Hamiltonian system.
\begin{theorem}
\label{bhStabThm}
Suppose that $\Pi$ is a Poisson pencil on a finite-dimensional manifold, and $\mathbf X$ is bi-Hamiltonian with respect to $\Pi$. Let $x$ be an equilibrium point of $\mathbf X$. Assume that  
 \begin{enumerate}
	\item $\rank \Pi(x) = \rank \Pi$.
	\item The equilibrium $x$ is regular.
		\item The spectrum of $\Pi$ at $x$ is real: $\Lambda_\Pi(x) \subset \RP$.

	\item The pencil $\Pi$ is diagonalizable at $x$.
	\item For each $\lambda \in \Lambda_\Pi(x)$ the $\lambda$-linearization $\diff_\lambda \Pi(x)$ is compact.
 \end{enumerate}
 Then $x$ is Lyapunov stable.
 \end{theorem}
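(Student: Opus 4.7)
The plan is to build an explicit Lyapunov function as a linear combination of Casimirs of the brackets $P_\lambda$ for $\lambda\in\Lambda_\Pi(x)$, restricted to the common symplectic leaf of generic brackets of the pencil. Condition 1 guarantees that the symplectic foliations of all generic $P_\alpha$, $\alpha\notin\Lambda_\Pi(x)$, have the same maximal dimension in a neighborhood of $x$; condition 2 guarantees that they share a common leaf $\mathcal O$ through $x$ with tangent space $\mathrm T(x)$. Since $\mathbf X$ is bi-Hamiltonian, $\mathbf X$ is tangent to $\mathcal O$, and any Casimir of any bracket $P_\lambda$ is a first integral of $\mathbf X$. To prove Lyapunov stability it suffices to produce a first integral $F$ whose restriction $F|_{\mathcal O}$ has a strict local minimum (or maximum) at $x$; indeed, the values of the Casimirs of a fixed generic bracket separate leaves, so stability along the leaves, together with this separation, yields stability on the whole manifold.

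Next I would set up the Hessian computation. For each real $\lambda_0\in\Lambda_\Pi(x)$ (using condition 3) and each covector $\nu_{\lambda_0}\in\zenter(\h(x))$, I would invoke the standard splitting/linearization result for Poisson structures to obtain a local Casimir $F_{\lambda_0}$ of $P_{\lambda_0}$ near $x$ with $\diff F_{\lambda_0}(x)=\nu_{\lambda_0}$. The key technical fact, which follows from the results of \cite{JGP,SBS} exactly as in the spectral Lemma \ref{specLemma}, is that the Hessian of $F_{\lambda_0}|_{\mathcal O}$ at $x$ coincides, through the identification $\T(x)\simeq T_x^*\mathcal O\simeq\g_{\lambda_0}(x)/\h(x)\oplus\cdots$, with the symmetric bilinear form $\mathfrak B_{\nu_{\lambda_0}}$ of Definition \ref{compDef} on $\g_{\lambda_0}/\h$. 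By compactness of $\diff_{\lambda_0}\Pi(x)$ (condition 5), we may choose $\nu_{\lambda_0}$ so that this form is positive-definite on $\g_{\lambda_0}(x)/\h(x)$.

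Now I would invoke diagonalizability (condition 4) to assemble these pieces. The diagonalizability hypothesis gives a Jordan--Kronecker-type direct-sum decomposition
\[
\mathrm T(x)\;\cong\;\bigoplus_{\lambda_0\in\Lambda_\Pi(x)}\bigl(\g_{\lambda_0}(x)/\h(x)\bigr)
\]
which is orthogonal with respect to each of the Hessians $\Hess\,(F_{\lambda_0}|_{\mathcal O})$: the Hessian of $F_{\lambda_0}|_{\mathcal O}$ vanishes on every summand $\g_{\lambda_1}(x)/\h(x)$ with $\lambda_1\neq\lambda_0$. Setting $F\defeq\sum_{\lambda_0\in\Lambda_\Pi(x)}F_{\lambda_0}$, the Hessian of $F|_{\mathcal O}$ at $x$ is therefore block-diagonal with positive-definite blocks, hence positive-definite on $\T(x)=T_x\mathcal O$. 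Combined with the fact that $F$ is a first integral of $\mathbf X$, this produces the desired Lyapunov function on $\mathcal O$; adjoining the Casimirs of a generic bracket $P_\alpha$ (whose differentials are a basis of $\h(x)$) to $F$ extends this to a Lyapunov function for $\mathbf X$ on $\mathrm M$, which gives the conclusion.

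The main obstacle is the orthogonality claim in the last step: showing that the Hessian of a Casimir of $P_{\lambda_0}$, restricted to $\mathcal O$, is trivial on the subspace $\g_{\lambda_1}/\h$ of $\T(x)$ associated with a different spectral value $\lambda_1$. This is precisely where condition 4 is used, and it is the analogue for the Hessian of the ``diagonalization'' of the linearized operator used in Lemma \ref{specLemma}. Verifying it requires a careful unpacking of the Jordan--Kronecker structure of the pair $(P_\alpha,P_\lambda)$ at $x$ in the sense of Definition \ref{defDiag}; once this orthogonality is established, the remainder of the argument is the formal construction above.
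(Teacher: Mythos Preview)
Your proposal matches the approach the paper indicates: the paper does not give a self-contained proof of this theorem but refers to \cite{JGP}, and the accompanying remark sketches exactly your strategy --- construct a Lyapunov function as a combination of Casimirs of the brackets in the pencil, with positive-definite Hessian on the common symplectic leaf, using conditions 3--5 to control the Hessians via the linearizations $\diff_\lambda\Pi(x)$. Your outline is a faithful and more detailed version of that sketch, and the obstacle you flag (the orthogonality/block-diagonality of the Hessians under the Jordan--Kronecker decomposition coming from condition 4) is indeed the technical core handled in \cite{JGP, SBS}.
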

 See \cite{JGP} for the proof.
\begin{remark}
	The idea of the proof can be explained as follows. A bi-Hamiltonian system automatically possesses a large number of first integrals: these are the Casimir functions of all brackets of the pencil. The Hessians of these functions are controlled by linear parts of the corresponding brackets. The conditions of the theorem allow to show that there exists a linear combination of Casimir functions with a positive-definite Hessian, so that this combination can be used as a Lyapunov function.
	
\end{remark}

\subsection{Generalized stability theorem}
In this section, a stronger stability result is formulated which allows to proceed for those points where $\rank \Pi(x) < \rank \Pi$. The condition $\rank \Pi(x) = \rank \Pi$ can only be omitted under some additional technical assumptions.\par
Let $\mathbf X$ be a system which is bi-Hamiltonian with respect to a pencil $\Pi$, and let $x$ be a regular equilibrium point of $\mathbf X$. Then for each $\alpha \notin \Lambda(x)$, the linear part of $P_\alpha(x)$ defines a natural Lie algebra structure on $\h(x) = \Ker P_\alpha(x)$. For $\lambda \in \Lambda(x)$, there is a strict inclusion $\h(x) \subset \Ker P_\lambda(x)$. However, since $P_\lambda(x)$ is a linear combination of $P_\alpha(x)$ and $P_\beta(x)$  for any $\alpha \neq \beta \in \CP$, the subspace $\h(x)$ is a subalgebra in $ \Ker P_\lambda(x)$ for any $\lambda$. Thus, $\h(x)$ carries a structure of a Lie pencil. Denote by $\zenter_\alpha (\h(x))$ the center of $\h(x)$ with respect to the Lie structure $\LieBracket_\alpha$.
\begin{definition}\label{regEq2}
Say that $x$ is \textit{strongly regular} if it is regular, and $$\zenter(\h(x)) \defeq \zenter_\alpha (\h(x))$$ does not depend on $\alpha$.
\end{definition}
\begin{remark}
	The center of the kernel is important by the following reason: if $f$ is a Casimir function of $P$, then $\diff f(x) \in \zenter(\Ker P(x))$. Moreover, if the transverse Poisson structure to $P$ at the point $x$ is linearizable, then the differentials of Casimir functions span $\zenter(\Ker P(x))$.
\end{remark}
\begin{definition}\label{finePencil}
A pencil $\Pi$ is called \textit{fine} at a point $x$ if there exists $\alpha \notin \Lambda(x)$ and an open neighborhood $U \ni \alpha$ such that
	\begin{enumerate}
		\item  for each $\beta \in U$ the transverse Poisson structure to $P_\beta$ at the point $x$ is linearizable, and its linear part is compact;
		\item for any $f_\alpha \in \zenter(P_\alpha)$ there exists a family $f_\beta$ depending continuously on $\beta$ and defined for $\beta \in U$ such that $f_\beta \in \zenter(P_\beta)$, i.e. any Casimir function of $P_\alpha$ can be ``approximated'' by Casimir functions of nearby brackets of the pencil.
		\end{enumerate}
\end{definition}
\begin{remark}The notation $\zenter(P)$ stands for the set of (local) Casimir functions of the Poisson bracket $P$. \end{remark} 
The following theorem is a generalization of Theorem \ref{bhStabThm}.
\begin{theorem}
\label{genStabThm}
Suppose that $\Pi$ is a Poisson pencil on a finite-dimensional manifold, and $\mathbf X$ is a dynamical system which is bi-Hamiltonian with respect to $\Pi$. Let $x$ be an equilibrium point of $\mathbf X$. Assume that  
 \begin{enumerate}

	\item The pencil $\Pi$ is fine at $x$.

	\item The equilibrium point $x$ is strongly regular.
	\item The spectrum of $\Pi$ at $x$ is real: $\Lambda_\Pi(x) \subset \RP$.
	
	\item The pencil $\Pi$ is diagonalizable at $x$.
	\item For each $\lambda \in \Lambda_\Pi(x)$ the $\lambda$-linearization $\diff_\lambda \Pi(x)$ is compact.
 \end{enumerate}
 Then $x$ is Lyapunov stable.
 \end{theorem}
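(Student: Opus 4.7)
The plan is to produce a smooth function $F$ defined in a neighborhood of $x$ whose differential vanishes at $x$, which is a first integral of $\mathbf X$, and whose Hessian $\diff^2 F(x)$ is positive-definite on the entire tangent space $\T_x M$. Such an $F$ is a classical Lyapunov function and forces stability at once. This mirrors the route outlined in the remark following Theorem \ref{bhStabThm} (cf.\ \cite{JGP}); the essential difference is that here $\rank \Pi(x) < \rank \Pi$ is permitted, so the symplectic leaf through $x$ is a singular stratum on which one cannot apply Theorem \ref{bhStabThm} directly. The role of the fineness hypothesis is precisely to replace restriction to this singular leaf by a local splitting at $x$.

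First I would assemble candidate first integrals using fineness. Fix $\alpha \notin \Lambda(x)$ and the open neighborhood $U \ni \alpha$ supplied by Definition \ref{finePencil}. For every local Casimir $f_\alpha \in \zenter(P_\alpha)$, condition (2) provides a continuous family $f_\beta \in \zenter(P_\beta)$, $\beta \in U$, and hence a family of first integrals of $\mathbf X$; any convex integral $\int_U f_\beta\, \mathrm d\mu(\beta)$ is then still a first integral. Strong regularity of $x$ ensures that $\diff f_\beta(x) \in \zenter(\h(x))$, a subspace independent of $\beta$; consequently averaging in $\beta$ does not introduce spurious first-order terms at $x$, and the Hessians of the averaged functions can be meaningfully compared.

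Next I would split $\T_x M$ into a part transverse to, and a part tangent to, the generic symplectic leaf at $x$, and control $\diff^2 F(x)$ on each separately. The linearizability of the transverse Poisson structure to $P_\beta$ at $x$ (condition (1) of fineness), combined with compactness of its linear part, supplies via Weinstein's splitting theorem a positive-definite quadratic Casimir on the transverse directions; by condition (2) of fineness this Casimir can be realized as a combination of functions $f_\beta$. On the leaf-tangent directions, hypotheses (3), (4), (5) of the present theorem coincide with hypotheses (3), (4), (5) of Theorem \ref{bhStabThm}, while the regularity part of strong regularity provides (2) there, so the leafwise construction from \cite{JGP} yields a second piece of $F$ whose Hessian is positive-definite on that block.

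The main obstacle will be gluing the two pieces. Combining the transverse Weinstein Casimir with the leaf-tangent Lyapunov function from Theorem \ref{bhStabThm} introduces cross terms in the Hessian at $x$; these must be dominated by rescaling one piece by a small parameter, and the estimate requires the families $f_\beta$ to be sufficiently smooth in $\beta$ near $\alpha$, which has to be extracted from the interplay of conditions (1) and (2) of Definition \ref{finePencil} (condition (2) alone gives only continuity). A second subtlety is that $P_\alpha$ itself need not have maximal rank at $x$, so one must argue that $\zenter(\h(x))$ coincides with the span of $\diff f_\beta(x)$ as $f_\alpha$ ranges over $\zenter(P_\alpha)$ and $\beta$ ranges over $U$ — this is where the linearizability of the transverse Poisson structure is essential, and it is the technical core of the argument.
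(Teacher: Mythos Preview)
Your proposal is correct in spirit and aligns with the paper's approach: the paper itself gives no detailed argument for Theorem \ref{genStabThm}, stating only that ``the proof of Theorem \ref{genStabThm} repeats the proof of Theorem \ref{bhStabThm}'' (the latter being proved in \cite{JGP}), and the remark following Theorem \ref{bhStabThm} confirms that the mechanism is exactly the construction of a Lyapunov function as a combination of Casimir functions of the pencil, with the Hessians controlled by the linear parts of the brackets. Your sketch is in fact more detailed than what the paper provides, and the decomposition you propose --- transverse directions handled via the compact linearizable transverse Poisson structure furnished by fineness, leaf-tangent directions handled as in Theorem \ref{bhStabThm} --- is the natural way to organize the argument and is consistent with the remark preceding Definition \ref{regEq2} on why the center of the kernel matters.
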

 It is easy to see that if $\rank \Pi(x) = \rank \Pi$, then the pencil $\Pi$ is fine at $x$, and regularity is equivalent to strong regularity. So, Theorem \ref{genStabThm} is a generalization of Theorem \ref{bhStabThm}.
 The proof of Theorem \ref{genStabThm} repeats the proof of Theorem \ref{bhStabThm}.

 \section{Bi-Hamiltonian structure of the multidimensional rigid body}\label{biHamMRB}
  Denote the standard Lie bracket on $\so(n)$ by $\LieBracket_\infty$ and the corresponding Lie-Poisson bracket on $\so(n)^{*}$ by $\{\,,\}_{\infty}$. The latter is given by
 $$
 \{f,g\}_{\infty}(M) \defeq \langle M, [\diff f, \diff g]_\infty\rangle 
 $$
for $M \in \so(n)^{*} \simeq \so(n)$ and $f,g \in \Cont^\infty(\so(n)^*)$. By $\langle \,,\rangle$ we denote the Killing form $$\langle X,Y\rangle = \Tr XY,$$ and $\so(n)^*$ is identified with $\so(n)$ by means of this form.\par
 The following was observed by Arnold \cite{Arnold}.
 \begin{statement}
 	The equations (\ref{eae}) are Hamiltonian with respect to the bracket $\PoissonBracket_\infty$.  The Hamiltonian is given by the kinetic energy
	$$
		H_\infty \defeq \frac{1}{2}\langle \Omega, M \rangle.
	$$
 \end{statement}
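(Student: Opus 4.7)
The plan is to unfold the definition of the Lie-Poisson bracket, compute the Hamiltonian vector field $X_{H_\infty}$ associated with the kinetic energy, and check that it coincides with the right-hand side of the Euler equations \eqref{eae}.

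First, I would recall the general formula for Hamiltonian vector fields of a Lie-Poisson bracket. For any smooth $f \in \Cont^\infty(\so(n)^*)$, using the identification $\so(n) \simeq \so(n)^*$ via the Killing form, the differential $\diff f(M) \in \so(n)$ is characterized by $\langle \diff f(M), \delta M\rangle = \frac{d}{dt}|_{t=0} f(M + t\,\delta M)$. Then for $f, g \in \Cont^\infty(\so(n)^*)$,
$$\{f,g\}_\infty(M) = \langle M, [\diff f(M), \diff g(M)]_\infty\rangle.$$
A short calculation using the $\ad$-invariance of the Killing form gives that the associated Hamiltonian vector field reads
$$
\dot M = [M, \diff H(M)],
$$
since $\langle M, [\diff H, \diff g]\rangle = \langle [M, \diff H], \diff g\rangle$ for every test function $g$. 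This matches the first equation in \eqref{eae} provided $\diff H_\infty(M) = \Omega$.

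Next, I would verify $\diff H_\infty(M) = \Omega$. Introduce the inertia operator $\mathbb I\colon \so(n)\to\so(n)$, $\mathbb I(\Omega) = J\Omega + \Omega J$, so that $M = \mathbb I(\Omega)$. The key observation is that $\mathbb I$ is self-adjoint with respect to the Killing form: for any $A,B\in\so(n)$,
$$
\langle \mathbb I(A),B\rangle = \Tr((JA + AJ)B) = \Tr(A(BJ+JB)) = \langle A,\mathbb I(B)\rangle,
$$
using $\Tr(AJB) = \Tr(BAJ) = \Tr(JBA)$ and so on. Since $J$ is positive-definite, $\mathbb I$ is invertible (its spectrum on the basis $E_{ij}-E_{ji}$ consists of the positive numbers $J_i + J_j$), and one has $H_\infty = \frac12\langle \mathbb I^{-1}(M),M\rangle$. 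Differentiating this quadratic form and using the self-adjointness of $\mathbb I^{-1}$ gives $\diff H_\infty(M) = \mathbb I^{-1}(M) = \Omega$.

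Combining the two steps, the Hamiltonian vector field of $H_\infty$ with respect to $\{\,,\}_\infty$ is exactly $\dot M = [M,\Omega]$, while the algebraic relation $M = J\Omega + \Omega J$ is built into the definition of $H_\infty$ via $\Omega = \mathbb I^{-1}(M)$. There is no real obstacle here; the only point to be careful about is the correct sign convention in the formula $\dot M = [M,\diff H]$ and the use of $\ad$-invariance of $\langle\,,\rangle$ to move the bracket onto $M$. Once those are in place, the proof is essentially a one-line identification.
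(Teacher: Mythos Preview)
Your argument is correct and is precisely the standard verification one would give. Note, however, that the paper does not actually supply a proof of this statement: it is attributed to Arnold and stated without proof, so there is no ``paper's own proof'' to compare against. Your computation --- identifying $\diff H_\infty(M)=\Omega$ via the self-adjointness of the inertia operator and then using $\ad$-invariance of the Killing form to obtain $\dot M=[M,\diff H_\infty]$ --- is the expected one and is complete as written.
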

Now introduce a second operation on $\so(n)$ defined by
$$
	[X,Y]_{0} \defeq XJ^{2}Y - YJ^{2}X. 
$$
  \begin{statement}\quad\par
  \begin{enumerate}
  \item
 	$\LieBracket_{0}$ is a Lie bracket compatible with the standard Lie bracket. In other words, any linear combination of these brackets defines a Lie algebra structure on $\so(n)$.
	\item The corresponding Lie-Poisson bracket $\{\,,\}_{0}$ on $\so(n)^{*}$ given by
  $$
 \{f,g\}_{0} \defeq \langle M, [\diff f,\diff g]_0\rangle.
 $$ is  compatible with the Lie-Poisson bracket $\{\,,\}_{\infty}$.
	\end{enumerate}
 \end{statement}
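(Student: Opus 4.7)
\medskip

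\noindent\textbf{Proof plan.}
The plan is to reduce the entire statement to a single general fact: for any fixed matrix $A$ the formula
$$
[X,Y]_{A} \defeq XAY - YAX
$$
defines a Lie bracket on $\gl(n)$. Indeed, expanding $[[X,Y]_A,Z]_A+[[Y,Z]_A,X]_A+[[Z,X]_A,Y]_A$ one gets a sum of twelve terms of the form $\pm UAVAW$ which visibly cancel in cyclic pairs; this is a one-line verification that does not depend on any property of $A$.

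With this observation in hand, the proof of the statement breaks into three small steps. First, I check that $\LieBracket_0=\LieBracket_{J^{2}}$ does restrict to $\so(n)$: since $J^{2}$ is symmetric and $X,Y$ are skew, $(XJ^{2}Y)^{\T}=Y^{\T}J^{2}X^{\T}=YJ^{2}X$, so $[X,Y]_0^{\T}=-[X,Y]_0$. Skew-symmetry of the bracket itself in $(X,Y)$ is immediate, and the Jacobi identity has just been proved in general. Second, observing that the standard Lie bracket on $\so(n)$ is $[X,Y]_\infty = XY-YX = [X,Y]_{\Id}$, I note the key algebraic identity
$$
\alpha\,[X,Y]_\infty + \beta\,[X,Y]_0 \;=\; [X,Y]_{\alpha \Id + \beta J^{2}}.
$$
Because $\alpha \Id + \beta J^{2}$ is still symmetric, the first step applies verbatim and shows that the right-hand side is again a Lie bracket on $\so(n)$. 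This proves part 1.

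Third, I pass from compatibility of the Lie brackets to compatibility of the corresponding Lie--Poisson brackets. This is the standard functorial statement: if $\LieBracket$ is any Lie bracket on $\g$, then $\{f,g\}(M)\defeq\langle M,[\diff f,\diff g]\rangle$ is a Poisson bracket on $\g^{*}$, the correspondence being linear in the bracket. Applying this to the one-parameter family $\LieBracket_\infty + \lambda \LieBracket_{0}$ gives the one-parameter family $\{\,,\}_\infty+\lambda\{\,,\}_{0}$, each member of which is a Poisson bracket; hence $\{\,,\}_\infty$ and $\{\,,\}_{0}$ are compatible, which is part 2.

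\medskip\noindent\textbf{Expected obstacles.} There is essentially no obstacle: the only piece of real work is the direct Jacobi check for $[X,Y]_{A}$, which is a short combinatorial cancellation. One should just be careful to state the functoriality ``compatible Lie brackets $\Rightarrow$ compatible Lie--Poisson brackets'' cleanly, since this is what bridges the algebraic compatibility on $\so(n)$ and the Poisson-geometric compatibility on $\so(n)^{*}$ asserted in the proposition.
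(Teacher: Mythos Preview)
Your proof is correct. The paper, however, does not actually prove this proposition: it is stated without proof and treated as a known fact (the bi-Hamiltonian structure is attributed to Bolsinov). Immediately after the statement the paper simply writes the pencil in the unified form
\[
[X,Y]_\lambda \;=\; X(J^{2}-\lambda E)Y - Y(J^{2}-\lambda E)X,
\]
which is precisely your observation that every linear combination $\alpha\LieBracket_\infty+\beta\LieBracket_0$ equals $\LieBracket_{\alpha E+\beta J^{2}}$. So your argument is the natural justification of the step the paper takes for granted: the direct Jacobi check for $[X,Y]_A=XAY-YAX$, the closure of $\so(n)$ under it when $A$ is symmetric, and the linear functoriality of the Lie--Poisson construction. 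Nothing is missing and nothing diverges from the paper's viewpoint; you have simply supplied the details the paper omits.
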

Consequently, a Lie pencil is defined on $\so(n)$, and a Poisson pencil is defined on $\so(n)^{*}$.  Write down these pencils in the form
\begin{align}\label{LiePencil}
		[X,Y&]_\lambda = [X,Y]_0 - \lambda[X,Y]_\infty =X (J^{2} - \lambda E) Y - Y (J^{2} - \lambda E) X,
		\end{align}
		for $X,Y \in \so(n)$,
		and
		\begin{align}\label{PoissonPencil}
 		\{f,g\}_{\lambda}&= \{f,g\}_{0} - \lambda\{f,g\}_{\infty} =  \langle M, \diff f (J^{2} - \lambda E) \diff g - \diff g (J^{2} - \lambda E) \diff f\rangle
\end{align}
for $M \in \so(n)^{*}$ and $f,g \in \Cont^\infty(\so(n)^*)$.\par \smallskip
The Poisson tensor corresponding to the bracket $\{\,,\}_{\lambda}$ reads
		\begin{align}\label{PoissonTensor}
 		P_{\lambda}(M)(X,Y) &=  \langle M, X (J^{2} - \lambda E)Y - Y(J^{2} - \lambda E) X\rangle
\end{align}
where $M \in \so(n)^*$, and $X, Y \in \mathrm{T}^*_M(\so(n)^*) = \so(n)$.
\begin{statement}[Bolsinov \cite{Bolsinov2, biham1}]
The system  (\ref{eae}) is Hamiltonian with respect to any bracket $\{\,,\}_{\lambda}$, so it is bi-Hamiltonian. The Hamiltonian is given by
\begin{align}\label{hLambdaFormula}
H_{\lambda} \defeq -\frac{1}{2}\langle (J + \sqrt{\lambda} E)^{-1}\Omega(J+\sqrt{\lambda} E)^{-1},M \rangle.
\end{align}
\end{statement}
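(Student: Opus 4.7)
The plan is to verify $P_\lambda\,\diff H_\lambda = [M,\Omega]$ by direct computation. Set $K \defeq J + \sqrt{\lambda}\,E$, so that $J^2 - \lambda E = K\widetilde{K}$ with $\widetilde{K} \defeq J - \sqrt{\lambda}\,E$. The first step is to compute $\diff H_\lambda$. Because $K$ is diagonal, it commutes with $J$, and so the inertia operator $\mathbb{I}(X) \defeq XJ + JX$ intertwines with conjugation by $K^{-1}$:
\[
\mathbb{I}(K^{-1}XK^{-1}) = K^{-1}(XJ + JX)K^{-1} = K^{-1}\mathbb{I}(X)K^{-1}.
\]
Together with the self-adjointness of $\mathbb{I}$ with respect to the Killing form, a short variational calculation (rewriting $H_\lambda = -\frac{1}{2}\langle K^{-1}\Omega K^{-1},\mathbb{I}(\Omega)\rangle$ and varying $\Omega = \mathbb{I}^{-1}(M)$) yields $\diff H_\lambda = -K^{-1}\Omega K^{-1}$.

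Next, from formula (\ref{PoissonTensor}) the Hamiltonian vector field equals $X_{H_\lambda}(M) = M\,\diff H_\lambda\,(J^2-\lambda E) - (J^2-\lambda E)\,\diff H_\lambda\,M$. Substituting and using that $K$ and $\widetilde{K}$ are both diagonal (hence commute) collapses each adjacent $K^{-1}K$ pair, producing
\[
X_{H_\lambda}(M) = \widetilde{K}\,\Omega\, K^{-1} M - M K^{-1}\,\Omega\,\widetilde{K}.
\]
It remains to show this equals $[M,\Omega] = M\Omega - \Omega M$.

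For the last step I would expand $M = J\Omega + \Omega J$ in both occurrences of $M$ and apply the partial-fractions identity $J K^{-1} = E - \sqrt{\lambda}\,K^{-1}$ to eliminate $K^{-1}$ wherever it is adjacent to a $J$. After grouping, the result splits into an ``$\Omega^{2}$'' part and an ``$\Omega K^{-1}\Omega$'' part; the coefficient of the latter reduces to $J - \widetilde{K} = \sqrt{\lambda}\,E$, which is central in $\so(n)$ and therefore contributes a trivial commutator. What survives is $[\widetilde{K},\Omega^{2}] = [J,\Omega^{2}]$ (again modulo a central scalar). Finally, a one-line computation from $M = J\Omega + \Omega J$ shows $[M,\Omega] = [J,\Omega^{2}]$, so $X_{H_\lambda}(M) = [M,\Omega]$ as required. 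The main obstacle is this final verification: roughly eight terms appear after expansion and nothing simplifies visibly until one recognises that every ``mixed'' bilinear in $\Omega$ and $K^{-1}$ must carry a central coefficient; once this organising principle is spotted, the collapse to $[J,\Omega^{2}]$ is forced, and the dependence on $\sqrt{\lambda}$ cancels out automatically.
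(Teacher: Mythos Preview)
The paper does not prove this proposition; it is attributed to Bolsinov via the citations \cite{Bolsinov2, biham1} and only used later through the formula $\diff H_\lambda = -(J+\sqrt{\lambda}E)^{-1}\Omega(J+\sqrt{\lambda}E)^{-1}$. Your direct verification is therefore supplying content the paper omits, and it is correct.

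Your computation of $\diff H_\lambda$ is fine; the self-adjointness/intertwining trick collapses the two variation terms into $-\langle K^{-1}\Omega K^{-1},\delta M\rangle$ exactly as you say. In the final step, the algebra is in fact cleaner than your narrative suggests. With $\widetilde K = J-\sqrt\lambda E$ and the identity $K^{-1}J = E-\sqrt\lambda K^{-1}$, expanding $M=J\Omega+\Omega J$ gives
\[
\widetilde K\,\Omega K^{-1}M = \widetilde K\,\Omega^{2} + \widetilde K\,\Omega K^{-1}\Omega\,\widetilde K,
\qquad
M K^{-1}\Omega\,\widetilde K = \widetilde K\,\Omega K^{-1}\Omega\,\widetilde K + \Omega^{2}\widetilde K,
\]
so the mixed $\Omega K^{-1}\Omega$ pieces are literally the same symmetric expression and cancel on the nose, leaving $[\widetilde K,\Omega^{2}]=[J,\Omega^{2}]=[M,\Omega]$. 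One small quibble: $\sqrt\lambda E$ is not an element of $\so(n)$, so calling it ``central in $\so(n)$'' is slightly off; the correct statement is simply that $[\sqrt\lambda E,\,\cdot\,]=0$ as matrices.
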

\begin{remark}
	The matrix $J + \sqrt{\lambda} E$ is invertible for any $\lambda$ if the proper value of the square root is chosen.

	Note that the function $H_\lambda$ written here is different from the one given by Bolsinov. The difference is a Casimir function of $P_\lambda$.
\end{remark}

\section{Proof of Theorem 1: instability}\label{instabSect}
The proof consists of the following steps:
\begin{enumerate}
 \item check that a regular stationary rotation (in the sense of Definition \ref{regDef}) is a regular equilibrium (in the sense of Definition \ref{regEq}), so that Lemma \ref{specLemma} can be applied (Section \ref{sectReg});
 \item describe the spectrum $\Lambda(M)$ (Section \ref{sectSpec});
 \item describe the adjoint operators $\ad_\lambda$ (Section \ref{sectLin});
 \item find the spectrum of the linearized system using Lemma  \ref{specLemma} (Section \ref{sectLinSpec}).
\end{enumerate}
Fix some notation which is used throughout the proofs.\par It is only regular stationary rotations which are considered. So, assume that there exists an orthonormal basis such that $J$ is diagonal, while $\Omega$ and $M$ are block-diagonal with two-by-two blocks on the diagonal (Definition \ref{regDef}).
Denote by $\lambda_{i}$ the diagonal elements of $J$ \textbf{in this basis}. Note that this means that $\lambda_{i}$ are possibly different for different rotations. However, they are unique up to a permutation and coincide with the eigenvalues of $J$.\par
By $\omega_{i}$'s, denote the non-zero entries of the matrix $\Omega$ as in (\ref{omegaForm}).
By $m_{i} = (\lambda_{2i-1} + \lambda_{2i})\omega_{i}$, denote the non-zero entries of the matrix $M$. The notation $M_i$ stands for the diagonal two-by-two blocks of $M$, i.e.
\begin{align}\label{MBlock}
	M_i \defeq \left(\begin{array}{cc}0 & m_i \\-m_i & 0\end{array}\right).
\end{align}
The number $n$ stands for the dimension of the body, and $m$ stands for the number of non-zero $\omega_i$'s, that is for the number of two-dimensional planes in the decomposition (\ref{decomp}).\par
For a fixed $\lambda$, let $A \defeq J^2 - \lambda E$ if $\lambda \neq \infty$ or $A \defeq E$ otherwise. By $a_i$, denote the diagonal entries of the matrix $A$. Clearly, $a_i = \lambda_i^2 - \lambda$ if $\lambda \neq \infty$, and $a_i = 1$ otherwise. It is also convenient to represent $A$ as
\begin{align}\label{AForm}
A = \left(\begin{array}{cccccc}A_1 &  &  &  &  &  \\ & \ddots &  &  &  &  \\ &  & A_m &  &  &  \\ &  &  & a_{2m+1} &  &  \\ &  &  &  & \ddots &  \\ &  &  &  &  & a_n\end{array}\right),
\end{align}
where $A_{i}$ are two-by-two diagonal matrices, and $a_i$ are numbers. \par 
Further, extend the definition of $\chi_i(x)$ given by (\ref{chiFormula}) to the point $\infty$.
$$
	\chi_{i}(\infty) \defeq  \frac{1}{\omega_{i}^{2}( \lambda_{2i-1} +  \lambda_{2i} )^{2}} = \frac{1}{m_i^2}.
$$
Note that for each $\lambda \in \CP$ the following equality holds.
\begin{align}\label{chiFla2}
\chi_i(\lambda) = \frac{a_{2i-1}a_{2i}}{m_i^2}.
\end{align}
 \subsection{Regularity}\label{sectReg}
Let $M$ be a regular stationary rotation (in the sense of Definition \ref{regDef}). Find a basis such that $J$ is diagonal, and $M$ is block-diagonal. Introduce the following subspaces:
 
 \begin{itemize}
 	\item $K \subset \so(n)$ is generated by $\{E_{2i-1, 2i} - E_{2i, 2i-1}\}_{i = 1, \dots, m}$ and $\{E_{ij} - E_{ji}\}_{2m < i < j \leq n}$.
 	\item $V_{ij} \subset \so(n)$ is generated by 
	$E_{2i-1,2j-1} - E_{2j-1,2i-1}$, $E_{2i-1,2j} - E_{2j,2i-1}$, $E_{2i,2j-1} - E_{2j-1,2i}$, $E_{2i,2j} - E_{2j,2i}$.
	\item $W_{ij} \subset \so(n)$ is generated by $E_{2i-1, j} - E_{j, 2i-1}$, $E_{2i,j} - E_{j,2i}$.
  \end{itemize}
Clearly, the following vector space decomposition holds
\begin{align}\label{sonDec}
 	\so(n) = K \oplus \bigoplus\limits_{1 \leq i < j \leq m}V_{ij} \oplus 
	\bigoplus_{\substack{1 \leq i \leq m, \\ 2m <j \leq n}} W_{ij}.
\end{align}
\begin{statement}\label{KcommonKer}
	The space $K$ belongs to the common kernel of all brackets of the pencil at the point $M$. All spaces $V_{ij}, W_{i}$ are mutually orthogonal with respect to all brackets of the pencil.
\end{statement}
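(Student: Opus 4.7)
The plan is to reduce both claims to a single compact formula for the pairing $P_\lambda$ and then read each statement off from the block structure of $M$. Using the symmetry of $A \defeq J^2 - \lambda E$ (resp.\ $A \defeq E$ at $\lambda = \infty$) and the cyclic property of the trace,
\begin{align*}
P_\lambda(M)(X, Y) = \langle M,\, XAY - YAX \rangle = \langle MXA - AXM,\, Y \rangle
\end{align*}
for all $X, Y \in \so(n)$. Consequently $X \in \Ker P_\lambda(M)$ if and only if $MXA = AXM$, and two subspaces are $P_\lambda$-orthogonal whenever the image of one under the map $X \mapsto MXA - AXM$ is Killing-orthogonal to the other.

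To verify that $K \subset \Ker P_\lambda(M)$, I would treat the two families of generators separately. For $X = E_{2i-1,2i} - E_{2i,2i-1}$ with $i \leq m$, both $M_i$ and the restriction of $X$ to the $i$th block are scalar multiples of $\left(\begin{smallmatrix} 0 & 1 \\ -1 & 0 \end{smallmatrix}\right)$, whose square is $-I_2$. Hence $M_i X_i = X_i M_i = -m_i I_2$, and since $M$ and $A$ are both block-diagonal with respect to the decomposition (\ref{decomp}), both $MXA$ and $AXM$ equal $-m_i A_i$ inside the $i$th block and vanish elsewhere. For $X = E_{ij} - E_{ji}$ with $i, j > 2m$, the support of $X$ lies entirely in $\Pi_0$, where $M$ vanishes, so $MX = XM = 0$ and hence $MXA - AXM = 0$ trivially. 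Both verifications are valid for every $\lambda \in \CP$.

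For the mutual orthogonality of $V_{ij}$ and $W_{ij}$, I would exploit once more that $M$ and $A$ are block-diagonal with respect to $\R^n = \Pi_1 \oplus \cdots \oplus \Pi_m \oplus \Pi_0$. Each $X \in V_{ij}$ (respectively $X \in W_{ij}$) is supported precisely in the off-diagonal block pair indexed by $(i,j)$ and $(j,i)$; left and right multiplication by block-diagonal matrices preserves this support pattern, so $MXA - AXM$ stays supported in the same pair of blocks. The Killing form pairs matrices entrywise, so any $Y$ drawn from a different $V_{kl}$ or $W_{kl}$ lives in a disjoint block pair and the pairing $\langle MXA - AXM,\, Y\rangle$ vanishes identically. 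The only mildly subtle step is the block calculation for $K$, which hinges on the coincidence $M_i X_i = X_i M_i = -m_i I_2$ peculiar to $2 \times 2$ skew-symmetric matrices; everything else is bookkeeping in the block decomposition.
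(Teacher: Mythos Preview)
Your proof is correct and follows exactly the approach the paper intends: the paper's own proof is simply ``Use \eqref{PoissonTensor}'', and you have written out the direct verification from that formula. One tiny wording point: for the orthogonality of $W_{ij}$ against $W_{il}$ with $j\neq l$ you need the finer block decomposition $\Pi_1\oplus\cdots\oplus\Pi_m\oplus\langle e_{2m+1}\rangle\oplus\cdots\oplus\langle e_n\rangle$ (with each fixed axis its own block) rather than the coarser one with $\Pi_0$ as a single block---but $M$ and $A$ are block-diagonal for this finer decomposition too, so your argument goes through unchanged.
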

\begin{proof}
	Use (\ref{PoissonTensor}).
\end{proof}
Proposition \ref{KcommonKer} implies that the rank of a bracket $P_\lambda$ drops if and only if this bracket is degenerate on one of the spaces $V_{ij}$ or $W_{ij}$. Calculate the forms $P_\lambda$ on these spaces.\par
Identify $V_{ij}$ with the space of two-by-two matrices, and $W_{ij}$ with $\R^{2}$. 
Let the matrices $M_i$ be defined by (\ref{MBlock}). Let also the numbers $a_j$ and the matrices $A_i$ be defined by (\ref{AForm}).

\begin{statement}\label{PLambdaRestr}
	The form $P_{\lambda}$ restricted to $V_{ij}$ reads
	$$
		P_{\lambda}(X,Y) = 2\Tr( M_{i}XA_{j}Y^{\mathrm{t}} + M_{j}X^{\mathrm{t}}A_{i}Y).
	$$
	
	The form  $P_{\lambda}$ restricted to $W_{ij}$ reads
	$$
		P_{\lambda}(v,w) = -2a_{j} v^{\mathrm{t}}M_{i}w. 
	$$	
\end{statement}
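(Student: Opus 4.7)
The proof is a direct unpacking of the general formula
\[
P_\lambda(M)(X,Y) = \langle M,\, XAY - YAX\rangle = \Tr\bigl(M(XAY - YAX)\bigr),
\]
with $A = J^2 - \lambda E$ (resp.\ $A = E$ when $\lambda = \infty$), applied in turn to the subspaces $V_{ij}$ and $W_{ij}$. Since $A$ and $M$ are both block-diagonal in the chosen basis (with $2\times 2$ blocks $A_1,\dots,A_m$ and $M_1,\dots,M_m$ on the rotating planes, and scalars $a_k$ on the fixed axes where $M$ vanishes), the computation reduces to block matrix multiplication that collapses to a single $2\times 2$ trace. Proposition \ref{KcommonKer} guarantees that $V_{ij}$ and $W_{ij}$ are pairwise orthogonal with respect to every $P_\lambda$, so it suffices to restrict to one such subspace at a time.

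For the $V_{ij}$ case, the plan is to identify $X \in V_{ij}$ with a $2\times 2$ matrix $X'$ sitting in block position $(i,j)$, with $-{X'}^{\mathrm t}$ at $(j,i)$ forced by skew-symmetry. Then $AY$ has block $A_i Y'$ at $(i,j)$ and $-A_j {Y'}^{\mathrm t}$ at $(j,i)$, and multiplying by $X$ leaves nonzero contributions only in the diagonal blocks $(i,i)$ and $(j,j)$ of $XAY$. Pairing with $M$ (which is block-diagonal) kills every block except those two, and those yield traces of the form $\Tr(M_i X' A_j {Y'}^{\mathrm t})$ and $\Tr(M_j {X'}^{\mathrm t} A_i Y')$. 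The skew part $-YAX$ doubles each term once one uses skew-symmetry of $M_i, M_j$, symmetry of $A_i, A_j$, and cyclicity of the trace — this is the source of the overall factor $2$ in the stated formula.

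For the $W_{ij}$ case ($j > 2m$, corresponding to a fixed axis), an element $v\in W_{ij}$ is represented as a $2\times 1$ column in block position $(i,j)$ with $-v^{\mathrm t}$ at $(j,i)$. The same block calculation now has only a single surviving diagonal contribution, namely the $(i,i)$-block, because the analogous $(j,j)$-block must be paired against $M_{jj}=0$ (the axis is fixed, so $M$ has no block there). The scalar $a_j$ enters because $A$ acts as multiplication by $a_j$ on the $j$-th coordinate, producing the compact expression $-2a_j\, v^{\mathrm t} M_i w$.

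The main obstacle is not conceptual but bookkeeping: one must consistently track block indices and the two roles of each subspace (as row-block and as column-block), and verify that the ``cross'' blocks of $XAY$ vanish because $Y$ has zeros outside $(i,j)$ and $(j,i)$. It is also useful to check the limiting case $\lambda = \infty$, where $A = E$ and $A_i, A_j, a_j$ reduce to $E, E, 1$, so the two formulas specialize consistently to the corresponding restrictions of the standard Lie--Poisson bracket $\{\,,\}_\infty$.
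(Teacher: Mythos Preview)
Your proposal is correct and is precisely the computation the paper has in mind: its entire proof is the single line ``Use (\ref{PoissonTensor}),'' and you have spelled out the block-matrix bookkeeping that this entails. The only thing to watch is the overall sign, which depends on the (unspecified) convention for identifying $V_{ij}$ with $2\times 2$ matrices; your derivation of the factor $2$ via skew-symmetry of $M_i$, symmetry of $A_j$, and trace cyclicity is exactly right.
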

\begin{proof}
	Use (\ref{PoissonTensor}).
\end{proof}
Now calculate $P_{\lambda}$ on $V_{ij}$ in coordinates. Let
$$
   X = \left(\begin{array}{cc}a & b \\c & d\end{array}\right) \in V_{ij}, \quad Y = \left(\begin{array}{cc}e & f \\g & h\end{array}\right) \in V_{ij}.
$$

Explicit calculation shows that
	\begin{align*}
		P_{\lambda}(X,Y) &= 2(m_{i}a_{2j-1}c +m_{j}a_{2i-1}b)e + 2(m_{j}a_{2i}d - m_{i}a_{2j-1}a)g + \\
		 &+ 2(m_{i}a_{2j}d -m_{j}a_{2i-1}a)f  - 2(m_{i}a_{2j}b + m_{j}a_{2i}c)h.
	\end{align*}
Consequently, $X \in \Ker P_{\lambda}$ if and only if
	\begin{align*}
\begin{cases}
	m_{i}a_{2j-1}c +m_{j}a_{2i-1}b = 0,\\
	m_{j}a_{2i}d - m_{i}a_{2j-1}a = 0,\\
	m_{i}a_{2j}d -m_{j}a_{2i-1}a = 0,\\
	m_{i}a_{2j}b + m_{j}a_{2i}c = 0.
\end{cases}
\end{align*}
This system can be split into two two-by-two systems, and the determinant of both of them equals
$$
\mathrm det = m_{j}^{2}a_{2i-1}a_{2i} - m_{i}^{2}a_{2j-1}a_{2j}. 
$$ 
So, the following is true.
\begin{statement}\label{degonVij}
 $P_{\lambda}$ is degenerate on $V_{ij}$ if and only if \begin{align}\label{specEq}m_{j}^{2}a_{2i-1}a_{2i} - m_{i}^{2}a_{2j-1}a_{2j} = 0.\end{align}

If  $P_{\lambda}$ is degenerate on $V_{ij}$, then its kernel is given by
\begin{align}\label{alphabeta}
X =  \left(\begin{array}{cc}
	\alpha m_{j}a_{2i} & \beta m_{i}a_{2j-1},\\
	-\beta m_{j}a_{2i-1} &\alpha m_{i}a_{2j-1}
	\end{array}\right)
\end{align}
where $\alpha$ and $\beta$ are arbitrary numbers.

\end{statement}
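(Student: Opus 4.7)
The statement is really a direct corollary of the coordinate formula for $P_\lambda$ that was derived just above in the excerpt, so the plan is to read off the kernel from that formula rather than doing anything new with the Poisson tensor. First I would recall from Proposition \ref{PLambdaRestr} that on $V_{ij}$ the form is
\[
P_\lambda(X,Y)=2\Tr\bigl(M_iXA_jY^{\mathrm t}+M_jX^{\mathrm t}A_iY\bigr),
\]
and then use the already-performed expansion of $P_\lambda(X,Y)$ in the entries $a,b,c,d$ of $X$ and $e,f,g,h$ of $Y$. Since $X\in\Ker P_\lambda|_{V_{ij}}$ means that the coefficient of each of $e,f,g,h$ vanishes, the kernel condition is exactly the $4\times 4$ homogeneous linear system displayed before the proposition.

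The key observation is that this $4\times 4$ system decouples: the equations coming from the coefficients of $e$ and $h$ involve only $b$ and $c$, while those coming from $f$ and $g$ involve only $a$ and $d$. I would write the two $2\times 2$ subsystems as matrix equations
\[
\begin{pmatrix} m_j a_{2i-1} & m_i a_{2j-1}\\ m_i a_{2j} & m_j a_{2i}\end{pmatrix}\begin{pmatrix} b\\ c\end{pmatrix}=0,
\qquad
\begin{pmatrix} -m_i a_{2j-1} & m_j a_{2i}\\ -m_j a_{2i-1} & m_i a_{2j}\end{pmatrix}\begin{pmatrix} a\\ d\end{pmatrix}=0,
\]
and then a one-line determinant computation shows that both $2\times 2$ matrices have the same determinant $m_j^2 a_{2i-1}a_{2i}-m_i^2 a_{2j-1}a_{2j}$. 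This immediately gives the ``iff'' part of the statement: $P_\lambda|_{V_{ij}}$ is degenerate precisely when this common determinant vanishes, i.e.\ when \eqref{specEq} holds.

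For the second half I would assume \eqref{specEq} holds and solve each $2\times 2$ system. Each matrix has rank one (the degenerate case), so its kernel is one-dimensional and is spanned by the cofactor row. Reading off the cofactors gives $(b,c)$ proportional to $(m_i a_{2j-1},-m_j a_{2i-1})$ and $(a,d)$ proportional to $(m_j a_{2i},m_i a_{2j-1})$, which is exactly the parametrization \eqref{alphabeta} with $\beta$ parametrizing the $(b,c)$-component and $\alpha$ the $(a,d)$-component.

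The only real ``obstacle'' is bookkeeping: one has to verify that the two $2\times 2$ determinants really coincide (so a single scalar equation controls both sub-systems) and that the cofactor vectors match the formula \eqref{alphabeta} on the nose, including the sign in the $(2,1)$ entry. Everything else is a consequence of Proposition \ref{PLambdaRestr} and the expansion already carried out in the text.
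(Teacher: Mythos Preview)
Your proposal is correct and follows exactly the approach taken in the paper: the paper's argument is precisely the coordinate computation displayed just before the proposition, observing that the four kernel equations split into two $2\times 2$ subsystems with common determinant $m_j^2a_{2i-1}a_{2i}-m_i^2a_{2j-1}a_{2j}$, and then reading off the kernel parametrization. Your write-up merely fills in the cofactor step to obtain \eqref{alphabeta}, which the paper leaves implicit.
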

Now study $P_{\lambda}$ on $W_{ij}$. 
\begin{statement}\label{degonWij}
 $P_{\lambda}$ is degenerate (and, consequently, zero) on $W_{ij}$ if and only if $\lambda = \lambda_{j}^{2}$.
\end{statement}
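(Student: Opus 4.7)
The plan is to read off the claim directly from the formula for $P_\lambda$ restricted to $W_{ij}$ already obtained in Proposition \ref{PLambdaRestr}, namely
$$
P_\lambda(v,w) = -2a_j\, v^{\mathrm t} M_i w,
$$
and to observe that the dependence on $\lambda$ is entirely carried by the scalar factor $a_j$, while the matrix $M_i$ is nondegenerate on the two-dimensional space $W_{ij} \cong \R^2$.

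First I would record the elementary fact that $M_i$ is invertible. Indeed, $M_i$ has the form \eqref{MBlock} with $m_i = (\lambda_{2i-1}+\lambda_{2i})\omega_i$; since $1 \le i \le m$ we are in a nontrivial plane of rotation, so $\omega_i \neq 0$, and since $J$ is positive definite the sum $\lambda_{2i-1}+\lambda_{2i}$ is strictly positive. Hence $m_i \neq 0$ and $M_i$ is a nondegenerate $2\times 2$ skew-symmetric matrix. Consequently the bilinear pairing $(v,w) \mapsto v^{\mathrm t}M_i w$ is a nondegenerate bilinear form on $\R^2 \times \R^2$.

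Next I would split on the value of $\lambda$. For $\lambda \in \Complex$ we have $a_j = \lambda_j^2 - \lambda$, so the scalar factor $-2a_j$ vanishes precisely when $\lambda = \lambda_j^2$; in that case $P_\lambda$ vanishes identically on $W_{ij}$, which in particular makes it degenerate. Conversely if $\lambda \neq \lambda_j^2$ then $-2a_j \neq 0$ and $P_\lambda|_{W_{ij}}$ is a nonzero scalar multiple of the nondegenerate form $v^{\mathrm t}M_i w$, hence nondegenerate. The case $\lambda = \infty$ is immediate since then $a_j = 1 \neq 0$. This gives both directions of the ``if and only if'', and simultaneously establishes the parenthetical remark that degeneracy on $W_{ij}$ is in fact equivalent to $P_\lambda|_{W_{ij}} \equiv 0$.

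I do not expect a genuine obstacle here: the entire content sits in Proposition \ref{PLambdaRestr}, and the argument only uses nonvanishing of $m_i$ together with the scalar nature of the $\lambda$-dependence. The only point that deserves explicit mention is the nonvanishing of $m_i$, which relies on the standing assumptions that the body is asymmetric with positive definite $J$ and that $\Pi_i$ (for $i \le m$) is an actual rotation plane with $\omega_i \neq 0$.
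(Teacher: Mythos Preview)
Your argument is correct and is exactly the approach the paper takes: the paper's proof is simply ``Use Proposition \ref{PLambdaRestr}'', and you have spelled out the obvious details behind that instruction. The only content is indeed that $m_i \neq 0$ and that the $\lambda$-dependence sits in the scalar $a_j$, both of which you handle correctly.
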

\begin{proof}
Use Proposition \ref{PLambdaRestr}.
\end{proof}
\begin{statement}\label{reg1}
	The intersection of kernels of all brackets of the pencil is exactly $K$. For almost all brackets the kernel is exactly $K$. Thus, $M$ is a regular equilibrium (in the sense of Definition \ref{regEq}), and $\h(M) = K$.
\end{statement}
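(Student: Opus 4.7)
The plan is to combine the decomposition \eqref{sonDec} with the orthogonality statement of Proposition \ref{KcommonKer}, reducing the problem to a block-by-block analysis that is then settled by the degeneracy criteria of Propositions \ref{degonVij} and \ref{degonWij}. By Proposition \ref{KcommonKer}, $K \subseteq \Ker P_\lambda(M)$ for every $\lambda$ and the summands of \eqref{sonDec} are mutually $P_\lambda$-orthogonal for every $\lambda$; consequently
\[
\Ker P_\lambda(M) = K \oplus \bigoplus_{i<j} \mathrm{Ker}\,\bigl(P_\lambda|_{V_{ij}}\bigr) \oplus \bigoplus_{i,j} \mathrm{Ker}\,\bigl(P_\lambda|_{W_{ij}}\bigr).
\]
Thus both claims will follow once I show that on each individual block $V_{ij}$ and $W_{ij}$ the form $P_\lambda$ is non-degenerate for all but finitely many $\lambda \in \CP$.

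For the blocks $W_{ij}$ this is immediate from Proposition \ref{degonWij}, which singles out the unique degenerate value $\lambda = \lambda_j^2$. For the blocks $V_{ij}$, Proposition \ref{degonVij} identifies the degeneracy locus with the zero set of \eqref{specEq}, and \eqref{chiFla2} rewrites this as the equation $\chi_i(\lambda) = \chi_j(\lambda)$. Since the body is asymmetric, the polynomials $\chi_i$ and $\chi_j$ have different root sets $\{\lambda_{2i-1}^2, \lambda_{2i}^2\}$ and $\{\lambda_{2j-1}^2, \lambda_{2j}^2\}$, so $\chi_i - \chi_j$ is a non-zero polynomial of degree at most two and therefore vanishes at only finitely many points of $\CP$ (including $\infty$ precisely when the leading coefficients agree).

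Consequently, the set of $\lambda$ for which $\Ker P_\lambda(M)$ properly contains $K$ is finite, which is the second assertion of the proposition. For the first, any $X \in \bigcap_\lambda \Ker P_\lambda(M)$ decomposes according to \eqref{sonDec}, and the orthogonality of the summands forces each $V_{ij}$- and $W_{ij}$-component of $X$ to lie in $\mathrm{Ker}(P_\lambda|_{V_{ij}})$ respectively $\mathrm{Ker}(P_\lambda|_{W_{ij}})$ for every $\lambda$ simultaneously; picking a single $\lambda$ outside the finite exceptional set kills all such components. Hence $\bigcap_\lambda \Ker P_\lambda(M) = K$, and combining both statements yields regularity in the sense of Definition \ref{regEq} together with $\h(M) = K$. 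The whole argument is essentially bookkeeping on top of Propositions \ref{KcommonKer}, \ref{degonVij}, and \ref{degonWij}; the only mildly delicate point is the observation $\chi_i \not\equiv \chi_j$, which follows at once from the distinctness of the eigenvalues of $J$, so no serious obstacle is anticipated.
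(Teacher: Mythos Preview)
Your argument is correct and follows exactly the route the paper takes: invoke Propositions~\ref{KcommonKer}, \ref{degonVij}, and \ref{degonWij} to see that only finitely many brackets are degenerate on each block $V_{ij}$, $W_{ij}$. The paper's own proof is a one-line appeal to these propositions, while you spell out the block decomposition of $\Ker P_\lambda(M)$ and the observation $\chi_i\not\equiv\chi_j$ explicitly; this extra detail is welcome but not a different method.
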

\begin{proof}
	Only finite number of brackets are degenerate on each of the spaces $V_{ij}$ and $W_{ij}$ (see Propositions \ref{degonVij}, \ref{degonWij}).
\end{proof}

\subsection{Description of the spectrum $\Lambda(\M)$}\label{sectSpec}
\begin{statement}\label{specProp}
	Let $M$ be a regular stationary rotation. Then $\Lambda(\M)$ is the set of horizontal coordinates of intersection points in the parabolic diagram of $M$.
\end{statement}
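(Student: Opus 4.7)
The plan is to identify each cause of rank drop of $P_\lambda(M)$ with a specific intersection in the parabolic diagram. By Proposition \ref{KcommonKer}, the subspace $K$ lies in the kernel of every bracket $P_\lambda$, and the remaining summands $V_{ij}$ and $W_{ij}$ in the decomposition \eqref{sonDec} are mutually orthogonal. Consequently $\rank \Pi(M) = \dim\so(n) - \dim K$, and $\lambda \in \Lambda(M)$ if and only if $P_\lambda$ degenerates on at least one $V_{ij}$ or $W_{ij}$. This reduces the problem to rewriting the degeneracy conditions of Propositions \ref{degonVij} and \ref{degonWij} in the language of the parabolic diagram.

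For the $V_{ij}$ summands, Proposition \ref{degonVij} gives the condition $m_j^2 a_{2i-1}a_{2i} = m_i^2 a_{2j-1}a_{2j}$. Dividing by $m_i^2 m_j^2$ and applying the identity \eqref{chiFla2} yields $\chi_i(\lambda) = \chi_j(\lambda)$, which is exactly the analytic expression for the parabolas associated with $\Pi_i$ and $\Pi_j$ meeting at horizontal coordinate $\lambda$. For the $W_{ij}$ summands, Proposition \ref{degonWij} gives $\lambda = \lambda_j^2$, and by construction the vertical line associated with the fixed axis $e_j$ is drawn at $x = \lambda_j^2$, so this corresponds precisely to a vertical line crossing a parabola at horizontal coordinate $\lambda_j^2$.

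The remaining case is $\lambda = \infty$. There $A = E$ and all $a_j = 1$, so no $W_{ij}$ can degenerate, while $V_{ij}$ degenerates if and only if $m_i^2 = m_j^2$; in view of the extended definition $\chi_i(\infty) = 1/m_i^2$, this reads $\chi_i(\infty) = \chi_j(\infty)$, which is the condition for two parabolas to meet at the point $[0:1:0]$ of $\CP^2$. Two vertical lines never share a horizontal coordinate because $J$ has distinct eigenvalues (and in any case the generators $E_{ij}-E_{ji}$ with $i,j>2m$ already lie in $K$), so every intersection in the diagram is accounted for and conversely. I do not anticipate any genuine obstacle; the only step requiring delicacy is the uniform treatment of finite $\lambda$ and $\lambda=\infty$, which is precisely what the extension $\chi_i(\infty) = 1/m_i^2$ was designed to handle.
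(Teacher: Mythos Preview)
Your argument is correct and follows essentially the same route as the paper: reduce rank drop to degeneracy on some $V_{ij}$ or $W_{ij}$ via the orthogonal decomposition, then translate Propositions \ref{degonVij} and \ref{degonWij} into the parabolic-diagram conditions using \eqref{chiFla2}. You are simply more explicit about the $\lambda=\infty$ case and the impossibility of two vertical lines sharing an abscissa, points the paper leaves implicit.
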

\begin{proof}
	By Proposition \ref{degonVij}, the bracket $P_\lambda$ is degenerate on $V_{ij}$ if and only if $$m_{j}^{2}a_{2i-1}a_{2i} - m_{i}^{2}a_{2j-1}a_{2j} = 0.$$ This equality can be rewritten as (see (\ref{chiFla2}))
	$$
		\chi_{i}(\lambda) = \chi_{j}(\lambda),
	$$	
	which means that $\lambda$ is the horizontal coordinate of the intersection point of two parabolas $y = \chi_i(x)$ and $y = \chi_j(x)$.
	Further, by Proposition \ref{degonWij}, $P_{\lambda}$ is degenerate on $W_{ij}$ if and only if $\lambda = \lambda_{j}^{2}$, which means that $\lambda$ is the horizontal coordinate of the intersection point of the vertical line $x = \lambda_{j}^2$ with any parabola.
\end{proof}

\subsection{Description of adjoint operators}\label{sectLin}	
Compute the restriction of the operator $\ad_\lambda\,\diff H_\lambda$ to the space  ${\Ker P_\lambda / K}$.
	Using Proposition \ref{KcommonKer}, the kernel $\Ker P_{\lambda}(M)$ can be decomposed in the following way
	$$
		\Ker P_{\lambda} = K \oplus  \bigoplus\limits_{1 \leq i < j \leq m} V_{ij}(\lambda) \oplus 
	\bigoplus_{\substack{1 \leq i \leq m, \\ 2m <j \leq n}}  W_{ij}(\lambda),
	$$
where \begin{align}\label{vijLambda} V_{ij}(\lambda) \defeq  \Ker\left( P_{\lambda}\mid_{  V_{ij}}\right) \subset V_{ij}, \quad W_{ij}(\lambda)  \defeq  \Ker\left( P_{\lambda}\mid_{  W_{ij}}\right) \subset W_{ij}.\end{align} 

The space $\Ker P_\lambda / K$ is decomposed as
	$$
		\Ker P_{\lambda} / K =  \bigoplus\limits_{1 \leq i < j \leq m} V_{ij}(\lambda) \oplus 
	\bigoplus_{\substack{1 \leq i \leq m, \\ 2m <j \leq n}}  W_{ij}(\lambda),
	$$

To compute $\ad_\lambda\,\diff H_\lambda$, note that $P_\lambda$ is a linear bracket, so the commutator in $\g_\lambda$ is simply the restriction of the bracket $\LieBracket_\lambda$  given by (\ref{LiePencil}) to the space $ \Ker P_\lambda$, so it is given by
	\begin{align}\label{BracketOnTheKernel}
		[X,Y]_{\lambda} = XAY - YAX.
	\end{align}
	
	Formula (\ref{hLambdaFormula}) implies that
	\begin{align}\label{dhLambdaFormula}
		\diff H_\lambda =  -(J + \sqrt{\lambda} E)^{-1}\Omega(J+\sqrt{\lambda} E)^{-1},
	\end{align}
	so 
	$\diff H_\lambda$ is a block-diagonal matrix with two-by-two blocks on the diagonal. Denote the space of such matrices by $L$. 
	Clearly,
	$
	L \subset K \subset \Ker P_\lambda
	$.
	 Since $\diff H_\lambda \in L$, it suffices to describe the operators $\ad_\lambda X$ for $X \in L$.
\begin{statement}
	 For any $X \in L$, the spaces $ V_{ij}(\lambda)$ and $W_{ij}(\lambda)$ are invariant with respect to the operator $\ad_{\lambda}X$.
\end{statement}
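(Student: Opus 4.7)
The plan is to split the statement into a support/block-structure part and a Lie-subalgebra part, and then combine them using the orthogonality of the decomposition (\ref{sonDec}) established in Proposition \ref{KcommonKer}.

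First I would show the purely ``supportwise'' inclusions $[X,V_{ij}]_\lambda \subseteq V_{ij}$ and $[X,W_{ij}]_\lambda \subseteq W_{ij}$ for every $X \in L$. Since the bracket on $\Ker P_\lambda$ is the restriction of the bracket (\ref{BracketOnTheKernel}), namely $[X,Y]_\lambda = XAY-YAX$, and since both $X \in L$ and $A$ are block-diagonal with the same $2\times 2$ (plus $1\times 1$ for the tail) block partition, this is a direct block-matrix computation. Writing $Y \in V_{ij}$ in terms of its $(i,j)$-block $y \in \mathrm{Mat}_{2\times 2}$, one gets that the $(i,j)$-block of $[X,Y]_\lambda$ is $X_iA_iy - yA_jX_j$ (and the $(j,i)$-block is its negative transpose, which confirms skew-symmetry as $A_k^{\mathrm t}=A_k$, $X_k^{\mathrm t}=-X_k$). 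All other blocks vanish. The analogous calculation for $Y \in W_{ij}$, using that the $j$-th block of $X \in L$ is zero since $j > 2m$, gives $(i,j)$-block equal to $X_iA_iy$, with all other blocks vanishing.

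Next I would observe that $L \subseteq K \subseteq \Ker P_\lambda$, and since $\Ker P_\lambda$ carries the Lie algebra structure $\g_\lambda$ and $V_{ij}(\lambda), W_{ij}(\lambda) \subseteq \Ker P_\lambda$ by definition (\ref{vijLambda}), we have $[X,Y]_\lambda \in \Ker P_\lambda$ whenever $Y \in V_{ij}(\lambda) \cup W_{ij}(\lambda)$. Combining this with the supportwise step above, $[X,Y]_\lambda$ lies in $V_{ij} \cap \Ker P_\lambda$ or $W_{ij} \cap \Ker P_\lambda$ respectively. Finally, the mutual $P_\lambda$-orthogonality of the summands in (\ref{sonDec}) asserted by Proposition \ref{KcommonKer}, together with $K \subseteq \Ker P_\lambda$, forces the identifications
\begin{equation*}
V_{ij} \cap \Ker P_\lambda \;=\; \Ker\bigl(P_\lambda|_{V_{ij}}\bigr) \;=\; V_{ij}(\lambda),
\qquad
W_{ij} \cap \Ker P_\lambda \;=\; W_{ij}(\lambda),
\end{equation*}
which yields the required invariance.

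The only potential obstacle is to keep the two-index bookkeeping straight during step one, in particular the case $j>2m$ for $W_{ij}$, where the ``$j$-block'' of $X$ is genuinely zero (so only the $X_iA_i$ term survives) rather than being a nontrivial $2\times 2$ block. Beyond that, the argument is essentially formal: the block-diagonal shape of $X$ and $A$ does the work in step one, and the abstract fact that $\Ker P_\lambda$ is a Lie subalgebra (plus the orthogonality of the decomposition) does the work in the remaining steps.
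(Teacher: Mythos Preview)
Your proposal is correct and follows essentially the same approach as the paper: first show the block-support inclusions $[X,V_{ij}]_\lambda\subseteq V_{ij}$ and $[X,W_{ij}]_\lambda\subseteq W_{ij}$ via the formula $[X,Y]_\lambda=XAY-YAX$, then use that $\Ker P_\lambda$ is a Lie subalgebra to conclude invariance of the intersection. The paper simply asserts $V_{ij}(\lambda)=V_{ij}\cap\Ker P_\lambda$ without comment, whereas you justify this identification via the $P_\lambda$-orthogonality from Proposition~\ref{KcommonKer}; this is a minor elaboration rather than a genuinely different route.
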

\begin{proof}
	Let $X \in L$ and $ Y \in V_{ij}$. Then, using  (\ref{BracketOnTheKernel}), show that $[X,Y]_\lambda \in V_{ij}$, which means that $\ad_\lambda X(V_{ij}) \subset V_{ij}$.
	 Further, $ V_{ij}(\lambda) = V_{ij} \cap \Ker P_{\lambda}$, and $\Ker P_{\lambda}$ is invariant with respect to $\ad_\lambda X$, so $ V_{ij}(\lambda)$ is invariant as the intersection of two invariant subspaces. The proof for $ W_{ij}(\lambda)$ is the same.
\end{proof}
Represent an element $X \in L$ as 
	$$
	X = \left(\begin{array}{cccc}0 & x_1 &   \\-x_1 & 0 &   \\ &  &  \ddots\end{array}\right).
	$$
\begin{statement}\label{eigen1}
	Let $ V_{ij}(\lambda) \neq 0$. Then 
	 the operator $\ad_\lambda{X}$, being written in coordinates $\alpha, \beta$ given by (\ref{alphabeta}), reads
	\begin{align*}
\left(\begin{array}{cc}0 & -a_{2i-1}(x_{i} - \dfrac{m_{j}}{m_{i}}x_{j}) \\a_{2i}(x_{i} - \dfrac{m_{j}}{m_{i}}x_{j}) & 0\end{array}\right).
\end{align*}

\end{statement}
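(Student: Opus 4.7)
The plan is to evaluate $\ad_\lambda X$ directly from the closed formula for the Lie bracket on $\Ker P_\lambda$ provided by (\ref{BracketOnTheKernel}), namely $[X,Y]_\lambda = XAY - YAX$, and then read off the matrix of the resulting linear map in the $(\alpha,\beta)$ coordinates on $V_{ij}(\lambda)$ supplied by (\ref{alphabeta}).

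First I would exploit the block structures. The matrix $X \in L$ is block-diagonal with $2\times 2$ diagonal blocks $X_k$; the matrix $A$ is block-diagonal with the same block pattern (see (\ref{AForm})); and an element $Y \in V_{ij}$ has only the $(i,j)$ and $(j,i)$ off-diagonal blocks nonzero, with $Y_{ji} = -Y_{ij}^{\mathrm t}$. Because $X$ and $A$ are block-diagonal, the $(k,l)$-block of $XAY - YAX$ equals $X_k A_k Y_{kl} - Y_{kl} A_l X_l$. Consequently the bracket $[X,Y]_\lambda$ is again supported in the $(i,j)$ and $(j,i)$ blocks, so it automatically lies in $V_{ij}$, and one only needs to compute the single $2\times 2$ block
\begin{equation*}
[X,Y]_\lambda\big|_{ij} = X_i A_i Y_{ij} - Y_{ij} A_j X_j.
\end{equation*}

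Next I would plug in the parametrization from (\ref{alphabeta}), where the entries of $Y_{ij}$ are $a = \alpha m_j a_{2i}$, $b = \beta m_i a_{2j-1}$, $c = -\beta m_j a_{2i-1}$, $d = \alpha m_i a_{2j-1}$, and carry out the two $2\times 2$ matrix products using $A_i = \mathrm{diag}(a_{2i-1},a_{2i})$ and $X_i$ with off-diagonal entry $x_i$. Each of the four resulting entries of $[X,Y]_\lambda\big|_{ij}$ will be a linear combination of $\alpha$ and $\beta$ in which the coefficient factors through the combination $m_j a_{2i-1}a_{2i}$ versus $m_i a_{2j-1}a_{2j}$. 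At exactly this step I would invoke the degeneracy condition $m_j^2 a_{2i-1}a_{2i} = m_i^2 a_{2j-1}a_{2j}$ from Proposition \ref{degonVij}, rewriting $a_{2j-1}a_{2j} = (m_j^2/m_i^2)\,a_{2i-1}a_{2i}$ wherever needed. After this substitution the diagonal entries of $[X,Y]_\lambda\big|_{ij}$ will collapse to zero (consistent with the image lying in $V_{ij}(\lambda)$), the $(1,2)$ and $(2,1)$ entries will factor as a scalar multiple of $(x_i - (m_j/m_i)x_j)$ times $\beta$ and $\alpha$ respectively, and the resulting coordinates $(\alpha',\beta')$ of the image, compared with (\ref{alphabeta}), will yield exactly the stated matrix.

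I do not expect a real conceptual obstacle: the argument is a bookkeeping exercise. The only minor subtlety is choosing the order of simplification, since the degeneracy relation can be applied to either of the pairs $(a_{2i-1}a_{2i}, a_{2j-1}a_{2j})$, and one must apply it in a way that produces the common factor $(x_i - (m_j/m_i)x_j)$ cleanly; otherwise the final identification with $(\alpha',\beta')$ via the normalization in (\ref{alphabeta}) becomes obscure. Once this is done, the two remaining entries $(1,2)$ and $(2,1)$ read off directly as $-a_{2i-1}(x_i - (m_j/m_i)x_j)\beta$ and $a_{2i}(x_i - (m_j/m_i)x_j)\alpha$, which is the claim.
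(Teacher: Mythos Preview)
Your approach is exactly what the paper does (its proof is the single line ``Use (\ref{BracketOnTheKernel})''), and the strategy of computing the $(i,j)$-block $X_iA_iY_{ij}-Y_{ij}A_jX_j$ and then invoking the degeneracy relation from Proposition \ref{degonVij} is the right one.

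One detail in your description is off and would trip you up if you followed it literally: the diagonal entries of $[X,Y]_\lambda\big|_{ij}$ do \emph{not} collapse to zero. Elements of $V_{ij}(\lambda)$, as parametrized by (\ref{alphabeta}), have all four entries generically nonzero; the $(1,1)$ and $(2,2)$ entries carry the $\alpha$-part, and the $(1,2)$ and $(2,1)$ entries carry the $\beta$-part. When you carry out the computation, all four entries of the resulting block are nonzero, each proportional to $(x_i - \tfrac{m_j}{m_i}x_j)$ after using the degeneracy relation, and you must match \emph{all four} against the template (\ref{alphabeta}) to extract $(\alpha',\beta')$. Doing so gives $\alpha' = -a_{2i-1}(x_i - \tfrac{m_j}{m_i}x_j)\beta$ and $\beta' = a_{2i}(x_i - \tfrac{m_j}{m_i}x_j)\alpha$, which is the claimed matrix. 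So your plan is sound, but replace the expectation ``diagonal entries vanish'' with ``all four entries match the parametrization consistently''.
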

\begin{proof}
Use (\ref{BracketOnTheKernel}).

\end{proof}

\begin{statement}\label{eigen2}
	Let $ W_{ij}(\lambda) \neq 0$. Then, after the natural identification of  $ W_{ij}(\lambda)=W_{ij}$ with $\R^2$, the matrix of $\ad_\lambda{X}$ reads
$$
 \left(\begin{array}{cc}0 & x_ia_{2i} \\-x_{i}a_{2i-1} & 0\end{array}\right).
$$

\end{statement}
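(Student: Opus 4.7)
The proposition is a direct computational verification, so the plan is to apply formula \eqref{BracketOnTheKernel} to a general element of $W_{ij}$ and read off the matrix in a chosen basis.

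First I would fix the natural basis $e^{1} = E_{2i-1,j} - E_{j,2i-1}$, $e^{2} = E_{2i,j} - E_{j,2i}$ of $W_{ij}$ (so $W_{ij}(\lambda) = W_{ij}$ when nonzero, by Proposition \ref{degonWij}), and write a generic element as $Y = v_{1} e^{1} + v_{2} e^{2}$. Then I would apply $[X,Y]_\lambda = XAY - YAX$, recalling that $X \in L$ has only one nonzero $2\times 2$ block, namely $\begin{pmatrix} 0 & x_{i} \\ -x_{i} & 0 \end{pmatrix}$ in the $i$-th position, while $A$ is diagonal with entries $a_{k}$.

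The key simplification is that the index $j$ satisfies $j > 2m$, so both the $j$-th row and the $j$-th column of $X$ (hence of $AX$) vanish. This kills the contributions of $YAX$ to the entries $(2i-1, j)$ and $(2i, j)$: indeed $(YAX)_{2i-1,j} = Y_{2i-1,j}(AX)_{j,j} = 0$, and similarly for $(2i,j)$. Consequently the entries of $[X,Y]_\lambda$ indexed by $(2i-1, j)$ and $(2i, j)$ come only from $XAY$, and a direct two-line calculation yields
\begin{align*}
([X,Y]_\lambda)_{2i-1,j} &= x_{i}(AY)_{2i,j} = x_{i} a_{2i} v_{2},\\
([X,Y]_\lambda)_{2i,j} &= -x_{i}(AY)_{2i-1,j} = -x_{i} a_{2i-1} v_{1}.
\end{align*}
Since $[X,Y]_\lambda$ is skew-symmetric and its support with respect to the chosen basis is governed precisely by these two entries, reading off the coefficients gives $v_{1}' = x_{i} a_{2i} v_{2}$, $v_{2}' = -x_{i} a_{2i-1} v_{1}$, which is the matrix claimed in the proposition.

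There is no real obstacle here — the entire content is bookkeeping of indices. The one point that needs to be handled carefully is the sign convention implicit in identifying $W_{ij}$ with $\mathbb{R}^{2}$: the coefficient $v_{2}'$ is determined by the $(2i, j)$ entry with the convention $e^{2} = E_{2i,j} - E_{j,2i}$, which gives the minus sign in the lower-left corner of the matrix. One should also note that after verifying the formula on the representative $2\times 2$ block, the remaining rows/columns of $[X,Y]_\lambda$ are forced by skew-symmetry, so no additional computation is needed.
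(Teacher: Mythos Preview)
Your proposal is correct and follows exactly the approach the paper indicates (the paper's proof is simply ``Use \eqref{BracketOnTheKernel}''); you have merely filled in the straightforward bookkeeping. One small imprecision: $X \in L$ generally has $m$ nonzero $2\times 2$ blocks, not just one, but only the $i$-th block interacts with rows/columns $2i-1,\,2i,\,j$, so the computation is unaffected.
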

\begin{proof}
Use (\ref{BracketOnTheKernel}).
\end{proof}

\begin{statement}\label{eigene}\quad\par
	\begin{enumerate} \item Let $ V_{ij}(\lambda) \neq 0$. Then 
	the eigenvalues of $\ad_\lambda{X}$ restricted to $ V_{ij}(\lambda)$ are
	$\pm \nu_{ij}(X)$, where
\begin{align*}
\nu_{ij}(X) &= \sqrt{-\chi_{i}(\lambda)}(m_{j}x_{j} -m_{i}x_{i}).
\end{align*}
\item 	Let $ W_{ij}(\lambda) \neq 0$. Then the eigenvalues of $\ad_\lambda{X}$ restricted to $ W_{ij}(\lambda)$ are
	$\pm \mu_{i}(X)$, where
	$$
	\mu_{i}(X) = \sqrt{-\chi_{i}(\lambda)}m_{i}x_{i}.
	$$

\end{enumerate}
\end{statement}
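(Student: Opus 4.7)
The statement is a direct consequence of Propositions \ref{eigen1} and \ref{eigen2}, so the plan is essentially a two-line computation packaged into a coordinate-free conclusion. I would proceed as follows.

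First I would recall that for a $2\times 2$ matrix of the form $\left(\begin{smallmatrix} 0 & b \\ c & 0 \end{smallmatrix}\right)$ the characteristic polynomial is $t^2-bc$, hence the eigenvalues are $\pm\sqrt{bc}$. Both matrices produced by Propositions \ref{eigen1} and \ref{eigen2} have exactly this shape, so the eigenvalue computation reduces to extracting the off-diagonal product and taking its square root.

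For part (1), applied to the matrix
\begin{align*}
\left(\begin{array}{cc}0 & -a_{2i-1}\bigl(x_{i} - \tfrac{m_{j}}{m_{i}}x_{j}\bigr) \\[2pt] a_{2i}\bigl(x_{i} - \tfrac{m_{j}}{m_{i}}x_{j}\bigr) & 0\end{array}\right),
\end{align*}
the off-diagonal product is $-a_{2i-1}a_{2i}\bigl(x_i-\tfrac{m_j}{m_i}x_j\bigr)^2$. Taking the square root and factoring $m_i$ out of the linear form gives eigenvalues $\pm\sqrt{-a_{2i-1}a_{2i}}\cdot\tfrac{1}{m_i}(m_ix_i-m_jx_j)$. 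Using the key identity $\chi_i(\lambda)=a_{2i-1}a_{2i}/m_i^2$ from formula (\ref{chiFla2}), this is exactly $\pm\sqrt{-\chi_i(\lambda)}\,(m_jx_j-m_ix_i)$, which is $\pm\nu_{ij}(X)$.

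For part (2), the matrix from Proposition \ref{eigen2} has off-diagonal product $-a_{2i-1}a_{2i}x_i^2$, so its eigenvalues are $\pm x_i\sqrt{-a_{2i-1}a_{2i}}$; multiplying and dividing by $m_i$ and invoking (\ref{chiFla2}) again yields $\pm\sqrt{-\chi_i(\lambda)}\,m_ix_i=\pm\mu_i(X)$. There is no real obstacle here: the content of the statement is entirely in the geometric interpretation of the quantity $\chi_i(\lambda)$ as the product $a_{2i-1}a_{2i}/m_i^2$, which packages the parabolic-diagram data into the eigenvalues of the linearization. The only minor point worth noting is that the sign choice of the square root is irrelevant because the eigenvalues come in pairs $\pm\nu_{ij}$, $\pm\mu_i$, so the apparent sign discrepancy between $m_ix_i-m_jx_j$ and $m_jx_j-m_ix_i$ is absorbed into the $\pm$.
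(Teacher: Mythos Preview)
Your proposal is correct and follows exactly the approach indicated in the paper: the paper's proof consists of the single line ``Use Propositions \ref{eigen1}, \ref{eigen2},'' and you have simply spelled out the straightforward $2\times 2$ eigenvalue computation and the substitution of formula~(\ref{chiFla2}) that this entails.
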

\begin{proof}
Use Propositions \ref{eigen1}, \ref{eigen2}.
\end{proof}

\subsection{Spectrum of the linearized system}\label{sectLinSpec}

Let $M$ be a regular stationary rotation. Then the symplectic leafs of generic brackets $P_\alpha, \alpha \notin \Lambda(M)$ are tangent at $M$. By $\mathrm T(M)$ denote their common tangent space. By $\diff \mathbf X$ denote the linearization of (\ref{eae}) at $M$.
\begin{statement}\label{spectra}
	Let $M$ be a regular stationary rotation. Let also $\lambda_{ij}^{(1)}, \lambda_{ij}^{(2)}$ be two roots of the equation $\chi_{i}(\lambda) = \chi_{j}(\lambda)$.
	Then the spectrum of $\diff \mathbf X\mid_{\mathrm T(M)}$ is $$\sigma(\diff  \mathbf X\mid_{\mathrm T(M)}) = \{ \pm \sigma_{ij}^{(k)} \}_{1 \leq i <j \leq m}^ {k = 1,2} \cup \{\pm \tau_{ij} \}_{1 \leq i \leq m}^{ 2m+1 \leq j \leq n}$$ where
	\begin{align*}
		\sigma_{ij}^{(k)} = \frac{1}{\sqrt{-\chi_{i}\left(\lambda_{ij}^{(k)}\right)}}&\left( \frac{\lambda_{ij}^{(k)} + \lambda_{2i-1}\lambda_{2i}}{ \lambda_{2i-1} + \lambda_{2i}} - \frac{\lambda_{ij}^{(k)} + \lambda_{2j-1}\lambda_{2j}}{ \lambda_{2j-1} + \lambda_{2j}}   \right) \mbox{ \quad if } \lambda_{ij}^{(k)} \neq \infty,\\
		 \sigma&_{ij}^{(k)} =  (\omega_j - \omega_i) \sqrt{-1} \mbox{\quad if } \lambda_{ij}^{(k)} = \infty,  \\
		 \end{align*}
		 and
		 \begin{align*}
		 		 \tau_{ij} =& \frac{1}{\sqrt{-\chi_{i}\left(\lambda_{j}^{2}\right)}}  \frac{(\lambda_{j} - \lambda_{2i-1})(\lambda_{j} - \lambda_{2i})}{ \lambda_{2i-1} + \lambda_{2i}}.
		\end{align*}

\end{statement}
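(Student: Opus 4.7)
The plan is to apply the spectral formula of Lemma \ref{specLemma}. By Proposition \ref{reg1}, $M$ is a regular equilibrium with $\h(M) = K$, and Proposition \ref{specProp} identifies $\Lambda(M)$ with the horizontal coordinates of intersections in the parabolic diagram. Together with the decomposition (\ref{sonDec}), Proposition \ref{KcommonKer} and Propositions \ref{degonVij}, \ref{degonWij} yield
\begin{equation*}
\g_\lambda(M)/K \;=\; \bigoplus_{1 \leq i < j \leq m} V_{ij}(\lambda) \;\oplus\; \bigoplus_{\substack{1 \leq i \leq m \\ 2m < j \leq n}} W_{ij}(\lambda),
\end{equation*}
and each summand is invariant under $\ad_\lambda\,\diff H_\lambda$. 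By Proposition \ref{eigene}, the eigenvalues of $\ad_\lambda\,\diff H_\lambda$ on $V_{ij}(\lambda)$ and $W_{ij}(\lambda)$ are $\pm\nu_{ij}(\diff H_\lambda)$ and $\pm\mu_i(\diff H_\lambda)$ respectively, so the proof reduces to an explicit evaluation of these numbers at each $\lambda \in \Lambda(M)$.

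For the computation, formula (\ref{dhLambdaFormula}) shows that $\diff H_\lambda$ lies in $L$ with $i$-th block having off-diagonal entries
\begin{equation*}
x_i \;=\; -\,\frac{\omega_i}{(\lambda_{2i-1}+\sqrt{\lambda})(\lambda_{2i}+\sqrt{\lambda})}.
\end{equation*}
Using $m_i = (\lambda_{2i-1}+\lambda_{2i})\omega_i$, identity (\ref{chiFla2}), and the factorization $a_{2i-1}a_{2i} = (\lambda_{2i-1}-\sqrt{\lambda})(\lambda_{2i-1}+\sqrt{\lambda})(\lambda_{2i}-\sqrt{\lambda})(\lambda_{2i}+\sqrt{\lambda})$, a short calculation yields the key identity
\begin{equation*}
-\chi_i(\lambda)\,m_i x_i \;=\; \frac{\lambda + \lambda_{2i-1}\lambda_{2i}}{\lambda_{2i-1}+\lambda_{2i}} \;-\; \sqrt{\lambda}.
\end{equation*}

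The stated formulas then follow by rewriting $\sqrt{-\chi_i}\cdot m_i x_i$ as $(-\chi_i\,m_i x_i)/\sqrt{-\chi_i}$. For a finite root $\lambda_{ij}^{(k)}$ of $\chi_i = \chi_j$, one factor of $-\chi_i$ may be replaced by $-\chi_j$ before subtracting, which makes the $\sqrt{\lambda}$ terms cancel in $\sqrt{-\chi_i}(m_j x_j - m_i x_i)$ and produces $\sigma_{ij}^{(k)}$ up to an overall sign (irrelevant, since eigenvalues come in $\pm$ pairs). For $\tau_{ij}$ at $\lambda = \lambda_j^2$ one only needs to factor $\lambda_j^2 + \lambda_{2i-1}\lambda_{2i} - \lambda_j(\lambda_{2i-1}+\lambda_{2i}) = (\lambda_j - \lambda_{2i-1})(\lambda_j - \lambda_{2i})$.

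The case $\lambda_{ij}^{(k)} = \infty$ must be handled separately since (\ref{hLambdaFormula}) is singular there. Here $A = E$, $P_\infty$ is the standard Lie--Poisson bracket, and one takes $H_\infty = \tfrac{1}{2}\langle \Omega, M\rangle$ so that $x_i = \omega_i$, while $\sqrt{-\chi_i(\infty)} = \sqrt{-1}/m_i$. An intersection of $\chi_i$ and $\chi_j$ at infinity forces the leading coefficients to agree, i.e.\ $m_i^2 = m_j^2$; using this equality in $\sqrt{-\chi_i(\infty)}(m_j x_j - m_i x_i)$ reduces the expression to $\pm(\omega_j - \omega_i)\sqrt{-1}$. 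I expect the main bookkeeping difficulty throughout to be tracking the branch of $\sqrt{\lambda}$ and the factor of $i$ in $\sqrt{-\chi_i}$, but since the spectrum comes in $\pm$ pairs these sign choices merely relabel eigenvalues and do not affect the final answer.
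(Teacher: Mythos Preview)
Your proof is correct and follows exactly the approach the paper takes: the paper's own proof consists of the single line ``Use Lemma \ref{specLemma}, Proposition \ref{eigene} and formula (\ref{dhLambdaFormula}),'' and you have carried out precisely this computation, supplying the intermediate identity $-\chi_i(\lambda)\,m_i x_i = (\lambda + \lambda_{2i-1}\lambda_{2i})/(\lambda_{2i-1}+\lambda_{2i}) - \sqrt{\lambda}$ that makes everything fall out. Your handling of the $\lambda = \infty$ case and the sign bookkeeping is also correct, since the $\pm$ symmetry of the spectrum absorbs the ambiguities.
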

\begin{proof}
	Use Lemma \ref{specLemma}, Proposition \ref{eigene} and formula (\ref{dhLambdaFormula}).
\end{proof}
\begin{remark}
	It is also possible to find the spectrum of the linearized system explicitly, without introducing the bi-Hamiltonian structure. However, the bi-Hamiltonian framework is essential for the proof of the stability part of Theorem \ref{stabThm} (Section \ref{stabSect}), so it seems to be better to prove both parts using the same philosophy. At the same time, the bi-Hamiltonian approach is simpler from the computational viewpoint.
	Also note that Lemma \ref{specLemma} allows to find the spectrum for \textit{all} systems bi-Hamiltonian with respect to $\Pi$ \textit{at once}.

\end{remark}
\subsection{Completion of the proof}
For simplicity assume that all eigenvalues of $\Omega$ are distinct. Suppose that there is at least one intersection in the parabolic diagram which is either complex or belong to the lower half-plane. Then Proposition \ref{spectra} shows that $\diff \mathbf X$ has an eigenvalue with a non-zero real part unless
\begin{align}\label{exoticCondition}
 \frac{\lambda_{ij}^{(k)} + \lambda_{2i-1}\lambda_{2i}}{ \lambda_{2i-1} + \lambda_{2i}} - \frac{\lambda_{ij}^{(k)} + \lambda_{2j-1}\lambda_{2j}}{ \lambda_{2j-1} + \lambda_{2j}} = 0.
\end{align}
A simple computation shows that (\ref{exoticCondition}) implies the equality $\omega_i^2 =  \omega_j^2$. If all eigenvalues of $\Omega$ are distinct, then this is not possible, so $\diff \mathbf X$ has an eigenvalue with a non-zero real part, and the equilibrium is unstable.\par 
The sketch of the proof in the case when $\Omega$ has multiple eigenvalues is as follows. Assume that \eqref{exoticCondition} is satisfied. Then $\lambda_{ij}^{(k)}$ is a real number. So, we only need to consider the case when there is a real intersection in the lower half-plane.
Find a stationary rotation $M_\eps \in U_\eps(M)$ such that all eigenvalues of $\Omega(M_\eps)$ are distinct. Then the rotation $M_\eps$ is unstable. Moreover, an argument similar to the one of \cite{Marshall} can be used to show that there is a heteroclinic trajectory joining $M_\eps$ with another stationary rotation $M^{-}_\eps$ such that the distance $\mathrm{dist}(M_\eps, M_\eps^-)$ is uniformly bounded from below as $\eps \to 0$. Therefore $M$ is unstable.

 \section{Proof of Theorem \ref{stabThm}: stability}\label{stabSect}
 
According to Theorem \ref{genStabThm}, to prove the stability part of Theorem \ref{stabThm} we should do the following:
\begin{enumerate}
	\item check that the pencil $\Pi$ is fine at $M$ (Section \ref{sectFine});
	\item check that the equilibrium point $M$ is strongly regular (Section \ref{sectSReg});
	\item check that the spectrum of $\Pi$ at $M$ is real (Section \ref{sectSpecReal});
	\item check that the pencil $\Pi$ is diagonalizable at $x$ (Section \ref{sectDiag});
	\item check that for each $\lambda \in \Lambda_\Pi(x)$ the $\lambda$-linearization $\diff_\lambda \Pi(x)$ is compact (Section \ref{sectComp}).
\end{enumerate}

\subsection{The pencil $\Pi$ is fine}\label{sectFine}
Let $\alpha$ be such that $\alpha \notin \Lambda(M)$ and $\alpha < \lambda_{min}^2$ where $\lambda_{min}$ is the minimal eigenvalue of $J$. Take $\eps$ such that $\alpha + \eps <  \lambda_{min}^2$, and $(\alpha - \eps, \alpha + \eps) \cap \Lambda(M) = \varnothing$. Take $U = (\alpha - \eps, \alpha + \eps)$.
Then for each $\beta \in U$, the bracket $P_\beta$ is compact semisimple. Therefore conditions 1 and 2 of Definition \ref{finePencil} are satisfied. Further, for any $\beta \in U$, the map
$$
F_{\alpha\beta} \colon (\so(n),\LieBracket_\alpha)  \to (\so(n), \LieBracket_\beta)
$$ defined by
$$
F_{\alpha\beta}(X) = (J^2 - \beta E)^{-{1}/{2}}(J^2 - \alpha E)^{{1}/{2}}X(J^2 - \alpha E)^{{1}/{2}} (J^2 - \beta E)^{-{1}/{2}}
$$
is an isomorphism of Lie algebras. Therefore, for any $f_\alpha \in \zenter(P_\alpha)$, the function
$$
f_\beta(x) = f_\alpha(F_{\alpha\beta} ^*(x))
$$
is a Casimir function of $P_\beta$, so condition 3 of Definition \ref{finePencil} is also satisfied, and the pencil is fine at every point.

\subsection{Strong regularity}\label{sectSReg}
Proposition \ref{reg1} shows that a regular stationary rotation (in the sense of Definition \ref{regDef}) is a regular equilibrium (in the sense of Definition \ref{regEq}). 
Now, prove that each regular stationary rotation is strongly regular (see Definition \ref{regEq2}).
Introduce the following subspaces:

 \begin{itemize} \item $K_0$ is generated by $\{E_{2i-1, 2i} - E_{2i, 2i-1}\}_{ i = 1, \dots, m}$; \item $K_1$ is generated by $\{E_{ij} - E_{ji}\}_{2m < i < j \leq n}$. \end{itemize}
 Then
 $$
K = K_0 \oplus  K_1
$$
as a Lie pencil, which means that $K_0$ and $K_1$ are Lie subalgebras with respect to all Lie structures $\LieBracket_\alpha$, and $[K_0, K_1]_\alpha = 0$.\par
Clearly, $K_0$ is Abelian with respect to all structures $\LieBracket_\alpha$, and $ K_1$ is isomorphic to $\so(n)$ with a Lie pencil given by
\begin{align*}
		[X,Y&]_\lambda = [X,Y]_0 - \lambda[X,Y]_\infty =X ( J_1^{2} - \lambda E) Y - Y ( J_1^{2} - \lambda E) X
\end{align*}
where $J_1 = \mathrm{diag}(\lambda_{2m+1}, \dots, \lambda_n)$.
Therefore the center of $ K_1$ with respect to any Lie structure $\LieBracket_\alpha$ is trivial unless $n - 2m = 2$.
So, $\zenter_\alpha(K) = K_0$ for all $\alpha$ if $n-2m \neq 2$, and $\zenter_\alpha(K) = K$ if $n-2m=2$. In both cases $\zenter_\alpha(K)$ does not depend on $\alpha$, so $\M$ is strongly regular.
\subsection{The spectrum $\Lambda(\M)$ is real}\label{sectSpecReal}
By Proposition \ref{specProp}, the spectrum $\Lambda(\M)$ is the set of horizontal coordinates of the intersection points on the parabolic diagram of $\M$. So, under the conditions of Theorem \ref{stabThm}, the spectrum is real.
\subsection{Diagonalizability}\label{sectDiag}
It is convinient to use the the following alternative definition of diagonalizability.
\begin{statement}\label{diagCrit}
Assume that $x$ is a regular equilibrium of a bi-Hamiltonian system, and that the spectrum $\Lambda_\Pi(x)$ is real.
Then the pencil $\Pi$ is diagonalizable at the point $x$ if and only if
\begin{align*}
	 \T_x^*M / \h(x) = \bigoplus\limits_{\lambda \in \Lambda(x)} \Ker P_{\lambda}(x)/\h(x).
	\end{align*}
	
	\end{statement}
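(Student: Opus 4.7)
The plan is to translate the statement into linear algebra on the quotient $W \defeq \T_x^\ast M / \h(x)$: reduce diagonalizability of the pencil to ``no Jordan blocks'' for a single recursion operator, and then use the real-spectrum hypothesis to convert this into a direct-sum decomposition.

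First observe that $\h(x) \subset \Ker P_\lambda(x)$ for every $\lambda \in \CP$: each $P_\lambda$ is a linear combination of two brackets $P_\alpha, P_\beta$ with $\alpha, \beta \notin \Lambda(x)$, and their common kernel is $\h(x)$ by regularity of $x$. Consequently every $P_\lambda$ descends to a skew form $\bar P_\lambda$ on $W$, and $\bar P_\alpha$ is nondegenerate on $W$ for each $\alpha \notin \Lambda(x)$. Fix such an $\alpha$ and define the recursion operator $T \colon W \to W$ by $\bar P_0(u,v) = \bar P_\alpha(Tu,v)$. Since $\bar P_0$ and $\bar P_\alpha$ are both skew-symmetric, $T$ is self-adjoint with respect to $\bar P_\alpha$, i.e.\ $\bar P_\alpha(Tu,v) = \bar P_\alpha(u,Tv)$. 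Expressing $\bar P_\lambda$ through $\bar P_\alpha$ and $\bar P_\infty = (\bar P_0 - \bar P_\alpha)/\alpha$, a direct computation shows that $\Ker \bar P_\lambda = \Ker(T - \mu_\lambda I)$ in $W$, where $\mu_\lambda = \lambda/(\lambda - \alpha)$ is a M\"obius reparametrisation of $\CP$ sending $\alpha \mapsto \infty$. Thus $\Lambda(x)$ corresponds bijectively to the spectrum of $T$, and $\Lambda(x) \subset \RP$ iff all eigenvalues of $T$ are real.

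Next, the self-adjointness of $T$ with respect to $\bar P_\alpha$ combined with nondegeneracy of $\bar P_\alpha$ gives
\begin{equation*}
\Ker(T - \mu I)^{\perp_{\bar P_\alpha}} = \mathrm{Im}(T - \mu I);
\end{equation*}
one inclusion follows from $\bar P_\alpha((T-\mu I)u, w) = \bar P_\alpha(u, (T-\mu I)w) = 0$ whenever $w \in \Ker(T-\mu I)$, and equality is a dimension count using nondegeneracy of $\bar P_\alpha$. Consequently
\begin{equation*}
\Ker\bigl(\bar P_\alpha\mid_{\Ker \bar P_\lambda}\bigr) = \Ker(T - \mu_\lambda I) \cap \mathrm{Im}(T - \mu_\lambda I),
\end{equation*}
which vanishes precisely when $\mu_\lambda$ is a semisimple eigenvalue of $T$ (no Jordan block of size $\geq 2$). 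Lifting to $\T_x^\ast M$ adds exactly $\dim \h(x) = \corank \Pi(x)$ to this dimension, so condition (\ref{diagCond}) of Definition \ref{defDiag} at a given $\lambda$ is equivalent to semisimplicity of the corresponding eigenvalue $\mu_\lambda$ of $T$.

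Finally, diagonalizability of the pencil $\Pi$ at $x$ means that every eigenvalue of $T$ is semisimple. Under the hypothesis $\Lambda(x) \subset \RP$, the operator $T$ has only real eigenvalues, so semisimplicity of every eigenvalue is equivalent to $T$ being diagonalizable over $\R$, which is in turn equivalent to $W = \bigoplus_{\mu} \Ker(T - \mu I) = \bigoplus_{\lambda \in \Lambda(x)} \Ker \bar P_\lambda$. Pulling back to $\T_x^\ast M$ yields the asserted decomposition $\T_x^\ast M/\h(x) = \bigoplus_{\lambda \in \Lambda(x)} \Ker P_\lambda(x)/\h(x)$. The main technical care lies in the M\"obius reparametrisation and in the point $\lambda = \infty$; both are handled automatically by working in the affine chart centred at $\alpha$, and no step is conceptually difficult — this is essentially the Kronecker--Weierstrass reduction for a pencil of skew forms with one nondegenerate member.
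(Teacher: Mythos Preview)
Your argument is correct. The paper itself does not prove this proposition; it simply writes ``For the proof, see \cite{JGP, SBS}.'' Your proposal supplies a self-contained proof via the recursion operator on the quotient $W = \T_x^*M/\h(x)$, which is precisely the Jordan--Kronecker viewpoint underlying those references: after quotienting out the common kernel, one member of the pencil becomes nondegenerate, the pencil is encoded by a single $\bar P_\alpha$-self-adjoint operator $T$, and the diagonalizability condition of Definition~\ref{defDiag} becomes the absence of nontrivial Jordan blocks for $T$. The real-spectrum hypothesis is exactly what turns semisimplicity of each eigenvalue into a genuine direct-sum decomposition of $W$ over $\R$. One minor remark: when you write $\Ker\bigl(\bar P_\alpha\mid_{\Ker \bar P_\lambda}\bigr) = \Ker(T-\mu_\lambda I)\cap\mathrm{Im}(T-\mu_\lambda I)$, it is worth saying explicitly that this uses $\Ker P_\lambda(x)/\h(x) = \Ker \bar P_\lambda$, which holds because $\h(x)$ is contained in the radical of every $P_\lambda$; you state this at the start, so the identification is legitimate, but making it explicit at the point of use would tighten the exposition.
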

	For the proof, see \cite{JGP, SBS}.
\begin{statement}\label{diagMan}
Let $M$ be a regular stationary rotation.
	Then the pencil is diagonalizable at $M$ if and only if any two parabolas in the parabolic diagram of $M$ intersect at two different points.
\end{statement}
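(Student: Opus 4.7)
My plan is to reduce diagonalizability to a block-by-block dimension count on the pieces of the decomposition $\so(n) = K \oplus \bigoplus V_{ij} \oplus \bigoplus W_{ij}$ given in \eqref{sonDec}, then translate that count into the intersection pattern of the parabolas.

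First, I would invoke Proposition \ref{diagCrit}, which (given that the spectrum is real) characterizes diagonalizability as the equality
\[
 \T_M^*\so(n)^* / K \;=\; \bigoplus_{\lambda \in \Lambda(M)} \Ker P_{\lambda}(M) / K.
\]
By Proposition \ref{KcommonKer}, all the blocks $V_{ij}$ and $W_{ij}$ are mutually orthogonal with respect to every bracket of the pencil, hence the same is true for every $V_{ij}(\lambda)$ and $W_{ij}(\lambda)$ defined in \eqref{vijLambda}. Therefore the above decomposition splits, and diagonalizability is equivalent to the pair of conditions $V_{ij} = \sum_{\lambda \in \Lambda(M)} V_{ij}(\lambda)$ for every $i<j$ and $W_{ij} = \sum_{\lambda \in \Lambda(M)} W_{ij}(\lambda)$ for every relevant $(i,j)$.

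Next I would check the $W_{ij}$ condition for free: $W_{ij}$ is two-dimensional, and Proposition \ref{degonWij} says $P_\lambda|_{W_{ij}}$ vanishes for the single value $\lambda = \lambda_j^2$, at which $W_{ij}(\lambda) = W_{ij}$. So the $W$-blocks never obstruct diagonalizability, confirming that the vertical lines of the parabolic diagram play no role. The real content of the proposition concerns the four-dimensional blocks $V_{ij}$. By Proposition \ref{degonVij}, the set of $\lambda$ for which $V_{ij}(\lambda)\neq 0$ is exactly the zero set of $m_j^2 a_{2i-1}a_{2i} - m_i^2 a_{2j-1}a_{2j}$, which by \eqref{chiFla2} is the equation $\chi_i(\lambda) = \chi_j(\lambda)$, i.e.\ the set of horizontal coordinates of intersections of the two parabolas $y=\chi_i(x)$ and $y=\chi_j(x)$ (including possibly $\lambda = \infty$, corresponding to intersection at infinity in $\Complex \mathrm{P}^2$). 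Moreover Proposition \ref{degonVij} shows that whenever $V_{ij}(\lambda)\neq 0$ its dimension is exactly $2$, parametrized by $\alpha,\beta$.

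Since $\dim V_{ij} = 4$, the condition $V_{ij} = \sum_\lambda V_{ij}(\lambda)$ requires at least two distinct values of $\lambda$ where the bracket is degenerate, and for two such values $\lambda_1 \neq \lambda_2$ it requires $V_{ij}(\lambda_1) \cap V_{ij}(\lambda_2) = 0$. The existence of two distinct $\lambda$'s is exactly the condition that the two parabolas meet in two distinct points of $\Complex\mathrm P^2$; the alternative is a single point with higher contact (a tangency), in which case $\sum_\lambda V_{ij}(\lambda)$ has dimension at most $2$ and the condition fails. For the complementarity, I would plug the explicit parametrization (\ref{alphabeta}) into the two kernels at $\lambda_1$ and $\lambda_2$, noting that the coefficients $a_{2i-1},a_{2i},a_{2j-1},a_{2j}$ take different values at $\lambda_1$ and $\lambda_2$; a direct $4\times 4$ determinant (using that $\chi_i$ has distinct roots, which follows from $J$ having distinct eigenvalues) shows the resulting four vectors are independent, giving $V_{ij}(\lambda_1) \oplus V_{ij}(\lambda_2) = V_{ij}$. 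I expect the only delicate step to be handling $\lambda = \infty$ uniformly: one must verify both that the formulas remain correct with $A = E$ (so $a_\ast = 1$) and that an intersection at infinity in $\Complex\mathrm P^2$ occurs exactly when $m_i^2 = m_j^2$, matching the algebraic degeneracy of $P_\infty$ on $V_{ij}$. Putting these pieces together yields the equivalence claimed in the proposition.
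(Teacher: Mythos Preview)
Your proposal is correct and follows essentially the same route as the paper: invoke Proposition~\ref{diagCrit}, split block-by-block using the orthogonality of Proposition~\ref{KcommonKer}, dispose of the $W_{ij}$ blocks via Proposition~\ref{degonWij}, and reduce the $V_{ij}$ condition to the equation $\chi_i(\lambda)=\chi_j(\lambda)$ having two distinct roots. The paper's proof in fact stops at the assertion that \eqref{rel1} holds iff \eqref{specEq} has two distinct roots, without writing out the complementarity check $V_{ij}(\lambda_1)\cap V_{ij}(\lambda_2)=0$ or the $\lambda=\infty$ case; your plan to verify these via the explicit parametrization \eqref{alphabeta} is a reasonable (and slightly more careful) completion of the same argument.
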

\begin{proof}
	Proposition \ref{diagCrit} implies that the pencil is diagonalizable if and only if
	\begin{align}\label{diagEq}
	  \so(n) / K = \bigoplus\limits_{\lambda \in \Lambda(M)} \Ker\left( P_{\lambda}\right) / K.
	\end{align}
	Using (\ref{sonDec}), write
	\begin{align*}
 		 \so(n) / K = \bigoplus\limits_{1 \leq i < j \leq m}  V_{ij} \oplus 
	\bigoplus_{\substack{1 \leq i \leq m, \\ 2m <j \leq n}}  W_{ij}.
 	\end{align*}
	Since all the summands of this decomposition are pairwise orthogonal with respect to $P_\lambda$ (Proposition \ref{KcommonKer}), relation (\ref{diagEq}) is satisfied if and only if
	\begin{align}
 		 &  V_{ij} = \bigoplus\limits_{\lambda \in \Lambda(M)}V_{ij}(\lambda) \mbox{ for } 1 \leq i < j \leq m, \label{rel1}\\
 		   W_{ij} &= \bigoplus\limits_{\lambda \in \Lambda(M)}{ W_{ij}}(\lambda) 	\mbox{ for } 1 \leq i \leq m, 2m <j \leq n\label{rel2}
	 	\end{align}	
		where $V_{ij}(\lambda)$ and $W_{ij}(\lambda)$ are defined by (\ref{vijLambda}).\par
		Since there is a unique $\lambda = \lambda_j^2$ such that $  W_{ij} = W_{ij}(\lambda)$, equality (\ref{rel2}) is always satisfied.
Equality (\ref{rel1}) is satisfied if and only if equation (\ref{specEq}) has two distinct roots, i.e. if corresponding two parabolas are not tangent to each other, q.e.d.				
\end{proof}
\subsection{Compactness}\label{sectComp}
Show that under the conditions of Theorem \ref{stabThm}, the pencil $\diff_\lambda \Pi(M)$ is compact. \par First, consider the case $\lambda = \infty$. Then $\g_\lambda$ is the $\ad^*$ stabilizer of $M \in \so(n)^*$, so  $\g_\lambda$ is compact, and so is the pencil $\diff_\infty\Pi(M)$ (see Proposition \ref{compcomp})\par
So, let $\lambda \neq \infty$. The pencil $\diff_\lambda \Pi(M)$ is defined on $\g_\lambda$ by the cocycle $\mathcal B = P_\infty\mid_{\Ker P_\lambda}$. Prove that there exists $X \in \zenter(\Ker \mathcal B)$ such that the form
 $$\mathfrak B_X(Y,Y) \defeq P_\infty([X,Y]_\lambda, Y)$$ 
 is positive definite on $\Ker P_\lambda / \Ker \mathcal B$. \par
 By Proposition \ref{diagMan}, the pencil is diagonalizable at $M$. This implies that (see Definition \ref{defDiag})
 $$
\dim \Ker (P_\infty\mid_{\Ker P_\lambda}) = \dim \Ker P_\infty.
 $$
Since $\Ker P_\infty = K$ (Proposition \ref{reg1}),
 $$
\dim \Ker (P_\infty\mid_{\Ker P_\lambda}) = \dim K.
 $$
 On the other hand, $\Ker (P_\infty\mid_{\Ker P_\lambda}) \supset K$. So,
\begin{align}\label{kerOnKer}
 \Ker \mathcal B = \Ker (P_\infty\mid_{\Ker P_\lambda}) =  K,
\end{align}
 and the compactness condition can be reformulated as follows: there exists $X \in \zenter(K)$ such that the form $\mathcal B_X$ is positive definite on 
 	$$
		\Ker P_{\lambda} / K =  \bigoplus\limits_{1 \leq i < j \leq m} V_{ij}(\lambda) \oplus 
	\bigoplus_{\substack{1 \leq i \leq m, \\ 2m <j \leq n}}  W_{ij}(\lambda).
	$$
 At the same time, $\zenter(K) =  K_0$ or $\zenter(K) = K$ (Section \ref{sectSReg}), so $K_0 \subset \zenter(K)$, and it suffices to show that there exists $X \in K_0$ such that $\mathcal B_X$ is positive on $\Ker P_{\lambda} / K$. Represent an element $X \in K_0$ as
$$
X = \left(\begin{array}{ccccccc}0 & x_1 &  &  &  &  &    \\-x_1 & 0 &  &  &  &  &    \\ &  & \ddots &  &  &  &    \\ &  &  & 0 & x_m &  &    \\ &  &  & -x_m & 0 &  &    \\ &  &  &  &  & 0 &    \\ &  &  &  &  &  & \ddots  \end{array}\right).
$$
Denote 
$$
b_{i} \defeq a_{2i} + a_{2i-1}.
$$
 \begin{statement}\label{Pinfty}
Let $ V_{ij}(\lambda) \neq 0$,
 $$
Y=\left(\begin{array}{c}\alpha \\\beta\end{array}\right) \in V_{ij}(\lambda), \quad Z = \left(\begin{array}{c}\widetilde\alpha \\ \widetilde\beta\end{array}\right) \in V_{ij}(\lambda)
 $$ 
 where $\alpha, \beta$ are the coordinates on $V_{ij}(\lambda)$ given by (\ref{alphabeta}).
	Then
\begin{align*}
P_\infty(Y,Z) = 2m_ia_{2j-1}\left({m_i^2}{b_j}-{m_j^2}{b_i}\right)(\alpha\widetilde \beta - \widetilde \alpha \beta).
\end{align*}

 \end{statement}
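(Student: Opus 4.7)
The plan is a direct computation. Since the Poisson tensor $P_\infty$ corresponds to $\lambda = \infty$ in the pencil, Proposition \ref{PLambdaRestr} applies with $A = E$, i.e.\ with the two-by-two blocks $A_i$ and $A_j$ replaced by the identity. Thus
\begin{align*}
P_\infty(Y,Z) = 2\Tr\bigl(M_i Y Z^{\mathrm t} + M_j Y^{\mathrm t} Z\bigr).
\end{align*}
I would then substitute the explicit parametrization (\ref{alphabeta}) for both $Y$ and $Z$, with coordinates $(\alpha,\beta)$ and $(\widetilde\alpha,\widetilde\beta)$ respectively, and compute the two traces in turn.

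A straightforward expansion gives
\begin{align*}
\Tr(M_i Y Z^{\mathrm t}) = \bigl(m_i m_j^{2} a_{2i-1}a_{2i} + m_i^{3} a_{2j-1}^{2}\bigr)(\alpha\widetilde\beta - \widetilde\alpha\beta),
\end{align*}
after which the identity (\ref{specEq}) $m_j^{2} a_{2i-1}a_{2i} = m_i^{2} a_{2j-1}a_{2j}$, which holds because $V_{ij}(\lambda) \neq 0$, lets me rewrite this bracket as $m_i^{3} a_{2j-1}(a_{2j} + a_{2j-1}) = m_i^{3} a_{2j-1} b_j$. An analogous computation for the second trace gives
\begin{align*}
\Tr(M_j Y^{\mathrm t} Z) = -\,m_i m_j^{2} a_{2j-1}\, b_i\,(\alpha\widetilde\beta - \widetilde\alpha\beta),
\end{align*}
where $b_i = a_{2i-1} + a_{2i}$ appears naturally without needing (\ref{specEq}).

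Adding the two contributions and factoring $2 m_i a_{2j-1}$ yields exactly
\begin{align*}
P_\infty(Y,Z) = 2 m_i a_{2j-1}\bigl(m_i^{2} b_j - m_j^{2} b_i\bigr)(\alpha\widetilde\beta - \widetilde\alpha\beta),
\end{align*}
as claimed. The only nontrivial point is the bookkeeping: one must keep track of the correct indices $a_{2i-1},a_{2i},a_{2j-1},a_{2j}$ in every product and apply the degeneracy relation exactly once, in the trace coming from $M_i Y Z^{\mathrm t}$, to convert $m_j^{2}a_{2i-1}a_{2i}$ into $m_i^{2}a_{2j-1}a_{2j}$ and collapse the resulting sum into $b_j$. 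No geometric input beyond Proposition \ref{PLambdaRestr}, the parametrization (\ref{alphabeta}), and the spectrum equation (\ref{specEq}) is needed.
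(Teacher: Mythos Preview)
Your proposal is correct and follows exactly the approach the paper intends: the paper's own proof consists of the single line ``Use Proposition~\ref{PLambdaRestr},'' and you have carried out precisely that computation, invoking the parametrization~(\ref{alphabeta}) and the degeneracy relation~(\ref{specEq}) where needed. The intermediate trace values and the final factorization are all correct.
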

 \begin{proof}
Use Proposition \ref{PLambdaRestr}.
\end{proof}

 \begin{statement}\label{BXonV}
Let
 $$
 Y = \left(\begin{array}{c}\alpha \\\beta\end{array}\right) \in  V_{ij}(\lambda) \neq 0.
 $$ 
  where $\alpha, \beta$ are the coordinates on $V_{ij}(\lambda)$ given by (\ref{alphabeta}).
	Then
\begin{align}\label{aXfla1}
\mathcal B_X(Y,Y) = - 2a_{2j-1}  \left({m_i^2}{b_j}-{m_j^2}{b_i}\right)(m_ix_i-m_jx_j)({a_{2i}\alpha^2 + a_{2i-1}\beta^2}).
\end{align}

 \end{statement}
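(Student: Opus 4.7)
The plan is to reduce Proposition \ref{BXonV} to a direct application of two previously established calculations: the coordinate description of the operator $\ad_\lambda X$ on $V_{ij}(\lambda)$ (Proposition \ref{eigen1}) and the formula for the restriction of $P_\infty$ to $V_{ij}(\lambda)$ in $(\alpha,\beta)$--coordinates (Proposition \ref{Pinfty}). Since $\mathcal B_X(Y,Y) = P_\infty([X,Y]_\lambda,Y)$, and $V_{ij}(\lambda)$ is invariant under $\ad_\lambda X$ for $X \in K_0 \subset L$, the element $[X,Y]_\lambda$ lies in $V_{ij}(\lambda)$, so both arguments of $P_\infty$ live in the same two-dimensional space where Proposition \ref{Pinfty} applies.

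First I would take $Y \in V_{ij}(\lambda)$ with coordinates $(\alpha,\beta)$ and use Proposition \ref{eigen1} to read off the coordinates $(\widetilde\alpha,\widetilde\beta)$ of $Z \defeq [X,Y]_\lambda$. The matrix given there sends $(\alpha,\beta)^{\mathrm t}$ to
\begin{align*}
\widetilde\alpha &= -a_{2i-1}\Bigl(x_i - \tfrac{m_j}{m_i}x_j\Bigr)\beta, &
\widetilde\beta &= a_{2i}\Bigl(x_i - \tfrac{m_j}{m_i}x_j\Bigr)\alpha.
\end{align*}

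Next, I would substitute these into the expression from Proposition \ref{Pinfty}, namely
\[
P_\infty(Z,Y) = 2m_i a_{2j-1}\bigl(m_i^2 b_j - m_j^2 b_i\bigr)\bigl(\widetilde\alpha\,\beta - \alpha\,\widetilde\beta\bigr),
\]
and compute
\[
\widetilde\alpha\,\beta - \alpha\,\widetilde\beta = -\Bigl(x_i - \tfrac{m_j}{m_i}x_j\Bigr)\bigl(a_{2i-1}\beta^2 + a_{2i}\alpha^2\bigr) = -\tfrac{1}{m_i}\bigl(m_ix_i - m_jx_j\bigr)\bigl(a_{2i}\alpha^2 + a_{2i-1}\beta^2\bigr).
\]
The factor $m_i$ then cancels the $1/m_i$, leaving exactly the claimed formula \eqref{aXfla1}.

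There is no real obstacle here beyond bookkeeping: the proof is simply a composition of Propositions \ref{eigen1} and \ref{Pinfty} followed by a one-line algebraic simplification. The only thing to be careful about is the sign convention — one must verify that the orientation of $(\alpha,\beta)$ used in Proposition \ref{eigen1} agrees with that used in Proposition \ref{Pinfty}, since both come from the parametrization \eqref{alphabeta}, but this is automatic from consistent use of formula \eqref{alphabeta}.
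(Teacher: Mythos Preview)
Your proposal is correct and follows exactly the approach indicated in the paper, which simply says ``Use Proposition \ref{Pinfty} and Proposition \ref{eigen1}.'' You have merely filled in the bookkeeping details of that composition, and the algebra checks out.
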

  \begin{proof}
Use Proposition \ref{Pinfty} and Proposition \ref{eigen1}.
\end{proof}
  \begin{statement}\label{BXonW}
Let
 $$
 Y = \left(\begin{array}{c}\alpha \\\beta\end{array}\right) \in  W_{ij}(\lambda) \neq 0
 $$ 
 where $\alpha, \beta$ are the coordinates on $W_{ij}(\lambda) = W_{ij}$ given by its natural identification with $\R^2$.
	Then
\begin{align}\label{aXfla2}
\mathcal B_X(Y,Y) = -2m_ix_i(a_{2i-1}\alpha^2 + a_{2i}\beta^2).
\end{align}
\end{statement}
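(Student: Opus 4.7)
The plan is to derive the formula by a direct two-step computation, running in parallel to the proof of Proposition \ref{BXonV}. By definition $\mathcal{B}_X(Y,Y) = P_\infty([X,Y]_\lambda, Y)$, so we only need explicit expressions for the two factors on $W_{ij}(\lambda)$.

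First, I would apply Proposition \ref{eigen2}: for $X \in K_0$ with block entry $x_i$ and $Y = (\alpha,\beta)^{\mathrm t} \in W_{ij}(\lambda) = W_{ij}$ (under the natural identification with $\R^2$), one has
\begin{equation*}
[X,Y]_\lambda \;=\; \bigl( x_i a_{2i}\,\beta,\; -x_i a_{2i-1}\,\alpha\bigr)^{\mathrm t}.
\end{equation*}
Here I use that $W_{ij}(\lambda) \neq 0$ is assumed and that, by the analogue of the $V_{ij}$ case, $W_{ij}$ is $\ad_\lambda X$-invariant.

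Second, I would specialize Proposition \ref{PLambdaRestr} to $\lambda = \infty$, for which $A = E$ and hence $a_j = 1$. The formula $P_\lambda(v,w) = -2 a_j\, v^{\mathrm t} M_i w$ then becomes $P_\infty(v,w) = -2\, v^{\mathrm t} M_i w$, and since $M_i$ is the standard $2\times 2$ skew-symmetric block with entry $m_i$, this is $-2 m_i(v_1 w_2 - v_2 w_1)$.

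Substituting $v = [X,Y]_\lambda$ and $w = Y$ and simplifying should directly yield
\begin{equation*}
\mathcal{B}_X(Y,Y) \;=\; -2 m_i x_i\bigl(a_{2i-1}\alpha^2 + a_{2i}\beta^2\bigr),
\end{equation*}
as claimed. The only thing requiring care is matching sign conventions for $M_i$ and for $P_\infty$; there is no conceptual obstacle, as the calculation is completely algebraic and shorter than in the $V_{ij}$ case because $W_{ij}$ carries only the single rotation-block index $i$ (the index $j$ enters merely through $a_j$, which becomes $1$ at $\lambda = \infty$).
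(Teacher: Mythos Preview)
Your proposal is correct and follows exactly the approach indicated in the paper, which simply says ``Use Proposition \ref{PLambdaRestr} and Proposition \ref{eigen2}.'' Your explicit computation---applying the matrix of $\ad_\lambda X$ from Proposition \ref{eigen2} to $Y$, then evaluating $P_\infty$ via Proposition \ref{PLambdaRestr} with $a_j=1$---is precisely the intended calculation, and the signs and final formula check out.
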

 \begin{proof}
Use Proposition \ref{PLambdaRestr} and Proposition \ref{eigen2}. \end{proof}

\begin{statement}
Under the conditions of Theorem \ref{stabThm}, the pencil $\diff_\lambda \Pi(M)$ is compact.
\end{statement}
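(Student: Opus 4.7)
By Definition~\ref{compDef} and the analysis of Section~\ref{sectSReg}, the task reduces to producing, for each $\lambda \in \Lambda(M)$, an element $X \in \zenter(K)$ whose associated form $\mathcal B_X$ is positive definite on $\Ker P_\lambda / K$. Since $K_0 \subset \zenter(K)$ in either case discussed in Section~\ref{sectSReg}, I search within $K_0$, parameterized by $(x_1, \dots, x_m)$ as in the description preceding Proposition~\ref{Pinfty}, and introduce the auxiliary variables $y_i \defeq m_i x_i$. The case $\lambda = \infty$ is immediate: $\g_\infty = K$ is isomorphic to $\so(2)^{\oplus m} \oplus \so(n-2m)$, hence compact, so Proposition~\ref{compcomp} delivers the compactness of $\diff_\infty \Pi(M)$ at once.

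For finite $\lambda \in \Lambda(M)$, I combine the orthogonal decomposition
\[
\Ker P_\lambda / K \;=\; \bigoplus_{1 \leq i < j \leq m} V_{ij}(\lambda) \,\oplus\, \bigoplus_{\substack{1 \leq i \leq m \\ 2m < j \leq n}} W_{ij}(\lambda)
\]
with the explicit formulas (\ref{aXfla1}) and (\ref{aXfla2}). The upper-half-plane hypothesis of Theorem~\ref{stabThm} forces $\chi_i(\lambda) > 0$ for every index entering a non-trivial summand: for $V_{ij}(\lambda) \neq 0$ this follows from $\chi_i(\lambda) = \chi_j(\lambda) > 0$, while for $W_{ij_0}(\lambda) \neq 0$ (which can only occur when $\lambda = \lambda_{j_0}^2$) this follows from the fact that the intersection of the parabola $y=\chi_i(x)$ with the vertical line $x=\lambda_{j_0}^2$ must lie in the upper half-plane. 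Consequently $a_{2i-1}$ and $a_{2i}$ are both positive, making the quadratic factors $a_{2i}\alpha^2 + a_{2i-1}\beta^2$ and $a_{2i-1}\alpha^2 + a_{2i}\beta^2$ positive definite. The no-tangency hypothesis guarantees $\chi_i'(\lambda) \neq \chi_j'(\lambda)$ whenever $\chi_i(\lambda) = \chi_j(\lambda)$, equivalently $m_i^2 b_j - m_j^2 b_i \neq 0$.

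After these simplifications, positivity of $\mathcal B_X$ on $V_{ij}(\lambda)$ reduces to the sign condition $\sgn{(y_i - y_j)} = \sgn{(\chi_j'(\lambda) - \chi_i'(\lambda))}$, and positivity on $W_{ij_0}(\lambda)$ reduces to $y_i < 0$. I construct $X$ explicitly by partitioning $\{1, \dots, m\}$ into equivalence classes determined by the value of $\chi_i(\lambda)$; within each class the no-tangency hypothesis gives a strict total ordering of the slopes $\chi_i'(\lambda)$, so I take the corresponding $y_i$'s to be any strictly decreasing function of those slopes, translating each class into the negative half-line whenever $\lambda = \lambda_{j_0}^2$ for some $j_0$. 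Because the $V$-constraints couple only indices inside a single equivalence class, while the $W$-constraints impose a uniform sign on all $y_i$, the two families of constraints are mutually compatible. The most delicate ingredient is the sign computation identifying $m_i^2 b_j - m_j^2 b_i$ with $-m_i^2 m_j^2 (\chi_j'(\lambda) - \chi_i'(\lambda))$ and combining it with the uniform positivity of the $a_{2i-1}$'s extracted from the hypothesis of Theorem~\ref{stabThm}; it is precisely here that the upper-half-plane condition becomes indispensable.
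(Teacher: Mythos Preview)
Your reduction contains a sign error that breaks the construction. From $\chi_i(\lambda)>0$ you deduce that $a_{2i-1}$ and $a_{2i}$ are both positive, but $\chi_i(\lambda)=a_{2i-1}a_{2i}/m_i^2>0$ only forces them to have the \emph{same} sign $\epsilon_i \defeq \sgn a_{2i-1}=\sgn a_{2i}=\sgn b_i\in\{\pm1\}$. Carrying $\epsilon_i$ through \eqref{aXfla1}--\eqref{aXfla2}, the correct reductions are
\[
V_{ij}(\lambda):\ \sgn(y_i-y_j)=\epsilon_i\epsilon_j\,\sgn\!\bigl(\chi_j'(\lambda)-\chi_i'(\lambda)\bigr),
\qquad
W_{ij_0}(\lambda):\ \sgn y_i=-\epsilon_i.
\]
Your rule ``$y_i$ strictly decreasing in $\chi_i'(\lambda)$'' is incompatible with the first condition once $\epsilon_i\neq\epsilon_j$. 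Since $\chi_i'(\lambda)=-b_i/m_i^2$ has sign $-\epsilon_i$, indices with $\epsilon=+1$ have negative slope and those with $\epsilon=-1$ have positive slope; a decreasing choice puts the $\epsilon=+1$ values \emph{above} the $\epsilon=-1$ values, whereas the cross-constraint demands the opposite. Concretely, take $n=4$ with $\lambda_1^2=1,\lambda_2^2=2,\lambda_3^2=4,\lambda_4^2=5$ and $m_1^2=1,\,m_2^2=\tfrac12$: the parabolas meet at two finite points in the upper half-plane, and at the smaller one ($\lambda\approx3.23$) one has $\epsilon_1=-1,\ \epsilon_2=+1$, $\chi_2'<0<\chi_1'$. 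Your recipe yields $y_2>y_1$; the correct condition forces $y_1>y_2$.

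The paper sidesteps this combinatorics with a single closed formula, $x_i=-m_i/b_i$, i.e.\ $y_i=-m_i^2/b_i$. Plugging into \eqref{aXfla1} turns the product $(m_i^2b_j-m_j^2b_i)(y_i-y_j)$ into a perfect square over $b_ib_j$, giving
\[
\mathcal B_X(Y,Y)=2\,a_{2j-1}b_j\,(a_{2i}\alpha^2+a_{2i-1}\beta^2)\,b_i\Bigl(\tfrac{m_i^2}{b_i}-\tfrac{m_j^2}{b_j}\Bigr)^{2},
\]
whose sign is $\epsilon_j^2\epsilon_i^2=+1$; on $W_{ij}(\lambda)$ one gets $2(m_i^2/b_i)(a_{2i-1}\alpha^2+a_{2i}\beta^2)$, again of sign $\epsilon_i^2=+1$. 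This choice in fact satisfies the corrected sign constraints above (indeed $y_i-y_j=m_i^2m_j^2(\chi_j'-\chi_i')/(b_ib_j)$), so your strategy can be salvaged, but not with the monotone-in-slope ordering you propose.
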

\begin{proof}
It suffices to consider the case $\lambda \neq \infty$ (see beginning of the section).
 Show that there exists $X \in K_0$ such that $\mathcal B_X$ is positive on $$\Ker P_\lambda / K = \bigoplus V_{ij}(\lambda) \oplus 
	\bigoplus W_{ij}(\lambda).$$
The summands of this decomposition are pairwise orthogonal with respect to $\mathcal B_X$, so it suffices to show that there exists $X$ such that $\mathcal B_X$ is positive-definite on each of the summands. Propositions \ref{BXonV} and \ref{BXonW} show that \textit{for each summand there exists $X$ such that $\mathcal B_X$ is positive on this summand} (note that $a_{2i}$ and $a_{2i-1}$ are of the same sign; this is because all intersections are in the upper half-plane). However, it is not obvious a priori why there should exist $X$ such that $\mathcal B_X$ is positive on \textit{all summands}.
Nevertheless, such an $X$ exists and can be defined by the following magical formula
\begin{align}\label{magicFormula}
x_i \defeq -\frac{m_i}{b_i}
\end{align} 
for $ i = 1, \dots, m$. Show that $\mathcal B_X > 0$ on $ V_{ij}(\lambda)$.
Substituting (\ref{magicFormula}) into (\ref{aXfla1}), obtain
\begin{align*}
\mathcal B_X(Y,Y) = 2a_{2j-1}b_j ({a_{2i}\alpha^2 + a_{2i-1}\beta^2})b_i\left(\frac{m_i^2}{b_i}-\frac{m_j^2}{b_j}\right)^2.
\end{align*}
Since all intersections are in the upper half-plane, $a_{2j-1}$ and $b_j$ have the same sign. The same is true for $a_{2i}, a_{2i-1}$ and $b_i$. Consequently, for $Y \neq 0$, the following inequality is satisfied 
$$
a_{2j-1}b_j ({a_{2i}\alpha^2 + a_{2i-1}\beta^2})b_i > 0.
$$
Further, if $$\frac{m_i^2}{b_i}-\frac{m_j^2}{b_j} = 0,$$
then $P_\infty(Y,Y) = 0$ (see Proposition \ref{Pinfty}), and $$ V_{ij}(\lambda) \subset \Ker \left(P_\infty\mid_{\Ker P_\lambda}\right),$$ which contradicts (\ref{kerOnKer}). So, $\mathcal B_X$ is positive on $ V_{ij}(\lambda)$.\par \smallskip
Now, show that $\mathcal B_X > 0$ on $ W_{ij}(\lambda)$. Substituting (\ref{magicFormula}) into (\ref{aXfla1}), obtain
$$
\mathcal B_X(Y,Y) = 2\frac{m_i^2}{b_i}(a_{2i-1}\alpha^2 + a_{2i}\beta^2).
$$
Since $a_{2i}, a_{2i-1}$ and $b_i$ are of the same sign, and $m_i \neq 0$, the form $\mathcal B_X > 0$ on $ W_{ij}(\lambda)$.\par\smallskip

\end{proof}

\begin{remark}
	Let $\lambda > \lambda_{max}^2$ or $\lambda < \lambda_{min}^2$ where $\lambda_{min}$ and $\lambda_{max}$ are, respectively, the minimal and the maximal eigenvalues of $J$. Then the compactness of $\diff_\lambda \Pi(M)$ is natural, since the algebra $\so(n)$ with the $ \LieBracket_\lambda$ bracket is compact, and so is $\g_\lambda(M)$ which is the $\ad^*_\lambda$ stabilizer of $M$. \par However, for $\lambda \in [\lambda_{min}^2, \lambda_{max}^2]$, the algebra $\g_\lambda(M)$ is not necessarily compact. The classification of Lie algebras $\g_\lambda(M)$ up to an isomorphism is given in the appendix.
	
	\section*{Appendix:  classification of Lie algebras $\g_\lambda$}
		Let $M$ be a regular stationary rotation, $\lambda \in \CP$. By $\Sigma_\lambda$ denote the set of intersection points on the parabolic diagram of $M$ with abscissa $x = \lambda$. For $\lambda \in \R$, denote by $\Sigma_\lambda^+, \Sigma_\lambda^-$ the set of intersection points $\in \Sigma_\lambda$ which lie in the upper and lower half-plane respectively. For $z \in \Sigma_\lambda$ denote by $n_z$ the number of parabolas passing through $z$. For $z \in \Sigma_\lambda^+$ denote by $l_z, r_z$ the number of parabolas passing through $z$ such that their vertices are to the left or right from $z$ respectively. 
	Denote by $v$ the number of vertical lines on the parabolic diagram. For $\lambda \in \R$, denote by $l_\lambda$ and $r_\lambda$ the number of vertical lines to the left or to the right from the line $x = \lambda$ respectively.

\end{remark}
\begin{statement}\label{gLambda}
Let $M$ be a regular stationary rotation.  
\begin{enumerate}
\item If $\lambda \in \R$, and there is no vertical line $x = \lambda$ on the parabolic diagram of $M$, then
$$
\g_\lambda \simeq \so(l_\lambda, r_\lambda) \oplus \bigoplus_{z \in \Sigma_\lambda^+} \mathfrak u(l_z,r_z) \oplus \bigoplus_{z \in \Sigma_\lambda^-} \mathfrak{gl}(n_z, \R) \oplus \R^N.$$

\item If $\lambda = \infty$, then
$$
\g_\lambda \simeq \so(v, \R) \oplus \bigoplus_{z \in \Sigma_\lambda} \mathfrak{u}(n_z, \R) \oplus \R^N.$$

\item If $\lambda \in \Complex \setminus \R$, then
$$
\g_\lambda \simeq \so(v, \Complex) \oplus \bigoplus_{z \in \Sigma_\lambda} \mathfrak{gl}(n_z, \Complex) \oplus \Complex^N.$$
\item If $\lambda \in \R$, and there is a vertical line $x = \lambda$ on the parabolic diagram of $M$, then
\begin{align*}
\g_\lambda \simeq &\left(\so(l_\lambda, r_\lambda) \ltimes_{\rho_1} \R^{l_\lambda + r_\lambda} \right)\oplus \bigoplus_{z \in \Sigma_\lambda^+}\left( \mathfrak u(l_z,r_z) \ltimes_{\rho_2} \Complex^{l_z+r_z}\right)\oplus \\ 
&\qquad\quad\oplus \bigoplus_{z \in \Sigma_\lambda^-} \left(\mathfrak{gl}(n_z, \R) \ltimes_{\rho_3} \R^{2\vphantom{l}n_z}\right) \oplus \R^N
\end{align*}
where representations $\rho_1, \rho_2$ are standard, and
$$
\rho_3(A) = \left(\begin{array}{cc}A  & 0 \\0  &  -A^\mathrm{t}\end{array}\right).
$$
\end{enumerate}
In all cases $N$ is some number $\geq 0$.
\end{statement}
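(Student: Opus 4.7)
The plan is to compute $\g_\lambda$ block-by-block using the vector space decomposition of $\Ker P_\lambda(M)$ already established in Section~\ref{sectReg}, combined with the explicit bracket formula $[X,Y]_\lambda = XAY - YAX$ from (\ref{BracketOnTheKernel}). Throughout, the basic principle is that the matrix $M$ supplies a complex structure on each rotating plane $\Pi_i$ (since $M_i$ is a nonzero skew-symmetric $2\times 2$ block), while the diagonal matrix $A = J^2-\lambda E$ supplies a quadratic form whose signature on various index sets controls the real form of the resulting classical Lie algebra.

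First I would regroup the decomposition $\Ker P_\lambda = K_0\oplus K_1 \oplus\bigoplus V_{ij}(\lambda)\oplus\bigoplus W_{ij}(\lambda)$ according to intersection points of the parabolic diagram. An intersection of $n_z$ parabolas at a finite point $z=(\lambda,y)$ with $y\neq 0$ contributes the span of the corresponding basis vectors of $K_0$ together with all $V_{i_pi_q}(\lambda)$, $1\le p<q\le n_z$; a total vector-space dimension of $n_z+2\binom{n_z}{2}=n_z^2$. An intersection of a vertical line $x=\lambda_j^2$ with a parabola contributes a $W_{ij}(\lambda)$-block, and intersections between vertical lines (or the "residual" piece $K_1$ if no such intersections occur) are handled by $K_1$ with the bracket $[X,Y]_\lambda = XA|_{\Pi_0}Y - YA|_{\Pi_0}X$. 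The first step is to verify, via the block-diagonal structure of $A$ and $M$ and the support conditions defining $V_{ij}, W_{ij}, K$, that pieces belonging to distinct intersection points are mutually commuting subalgebras. The vectors of $K_0$ whose plane $\Pi_i$ participates in no intersection at $\lambda$ become central abelian generators and account for the summand $\R^N$ (respectively $\Complex^N$).

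The core step is identifying the isomorphism type of each point-block. For an intersection $z\in \Sigma_\lambda^+$ of $n_z$ parabolas, the relevant indices of $A$ yield a signature determined by whether the vertex of each parabola lies to the left of $z$ (so $a_{2i-1},a_{2i}<0$, contributing $(0,2)$) or to the right (both positive, contributing $(2,0)$); the total is $(2r_z,2l_z)$. The bracket $XAY-YAX$ on the corresponding skew-symmetric matrices is therefore isomorphic to $\so(2r_z,2l_z)$, and intersecting with the $M$-compatibility condition that defines $\Ker P_\lambda$ (equivalently, imposing invariance under the complex structure on $\bigoplus\Pi_{i_k}$ given by $M$) reduces this to $\mathfrak u(l_z,r_z)$; the explicit change of basis suggested by formula (\ref{alphabeta}) makes this identification direct. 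For $z\in\Sigma_\lambda^-$ the signature of each parabola's contribution is $(1,1)$, yielding $\so(n_z,n_z)$ before $M$-compatibility and $\mathfrak{gl}(n_z,\R)$ after. The piece from $K_1$ is $\so(l_\lambda,r_\lambda)$ for the same reason, using the signs of $\lambda_j^2-\lambda$ on $\Pi_0$. The cases $\lambda=\infty$ and $\lambda\in\Complex\setminus\R$ are obtained by the same scheme: when $\lambda=\infty$, $A=E$ is positive, $\g_\infty$ is the $\ad^*$-stabilizer of $M$ and one reads off the compact forms directly; when $\lambda$ is non-real one forfeits the real structure and obtains the complex classical algebras $\mathfrak{gl}(n_z,\Complex)$ and $\so(v,\Complex)$.

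Finally, Case~4 requires an additional ingredient: when $\lambda=\lambda_j^2$ for some fixed axis $e_j\in\Pi_0$, the matrix $A$ acquires a zero on the diagonal, so the bracket $XAY-YAX$ is degenerate on the rows/columns indexed by $j$. The resulting null directions span the $W_{ij}(\lambda)$ and the corresponding rows of $K_1$, and I expect them to form an abelian ideal acted upon by the remaining semisimple part via the standard/defining representations $\rho_1,\rho_2,\rho_3$ displayed in the statement. The semidirect structure follows by checking that the adjoint action of each classical factor preserves the annihilator subspace and acts on it by the claimed representation, which is a straightforward but notationally heavy computation. The main obstacle I anticipate is precisely this last step: matching the signs appearing in $A$ and $M$ to the standard matrix realizations of $\mathfrak u(p,q)\ltimes\Complex^{p+q}$ and $\mathfrak{gl}(n,\R)\ltimes\R^{2n}$, including the particular transposition appearing in $\rho_3$. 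The rest reduces to bookkeeping of signatures, with the combinatorial quantities $l_z,r_z,l_\lambda,r_\lambda,v$ of the parabolic diagram reading off exactly the invariants needed.
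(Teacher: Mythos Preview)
The paper does not actually supply a proof of this proposition: the Appendix states Proposition~\ref{gLambda}, follows it with a figure and a short example, and ends. So there is nothing to compare your argument against.

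That said, your outline is the natural one and is essentially correct. The two structural observations you isolate are exactly the right ones: (i) on $(\so(n),[\,,\,]_\lambda)$ with $[X,Y]_\lambda=XAY-YAX$, a nondegenerate diagonal $A$ of signature $(p,q)$ makes the algebra isomorphic to $\so(p,q)$ via $X\mapsto |A|^{1/2}X|A|^{1/2}$ followed by $X\mapsto SX$ with $S=\mathrm{sign}(A)$; and (ii) $\g_\lambda$ is the coadjoint stabilizer of $M$ in this algebra, hence decomposes along the eigenspace decomposition of $M$, which is precisely the regrouping by intersection points that you describe. The identification of the point-blocks is also right: for $z\in\Sigma_\lambda^+$ the form $A$ restricted to $\bigoplus_k\Pi_{i_k}$ has signature $(2r_z,2l_z)$ and $M$ acts as a complex structure compatible with it, so the centralizer is $\mathfrak u(l_z,r_z)$; for $z\in\Sigma_\lambda^-$ each $\Pi_{i_k}$ carries signature $(1,1)$ and $M_i$ is a boost rather than a rotation for that form, giving a para-complex structure whose centralizer in $\so(n_z,n_z)$ is $\mathfrak{gl}(n_z,\R)$. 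Your treatment of $K_1$, of $\lambda=\infty$ (stabilizer in the compact $\so(n)$), and of non-real $\lambda$ is likewise correct.

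The only place that genuinely needs more than bookkeeping is Case~4, and you flag it yourself. When $a_j=0$ the map $X\mapsto XAY-YAX$ degenerates along the $j$th row/column; the resulting abelian ideal is spanned by the $W_{ij}(\lambda)$ and the $j$th row of $K_1$, and one must check by hand that the semisimple quotient acts on it by the listed representations. For $\rho_1$ and $\rho_2$ this is immediate once the isomorphisms above are in place; for $\rho_3$ the pairing of the two real coordinates of $W_{ij}$ with the $\pm$ eigenspaces of the para-complex structure is what produces the block form $\mathrm{diag}(A,-A^{\mathrm t})$. None of this is deep, but it is the step most prone to sign errors, as you anticipate.
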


\begin{figure}

{\begin{picture}(500,70)

\put(20,-10){
\qbezier(24,80)(47,-50)(70,80)
\qbezier(64,80)(87,-50)(110,80)
\qbezier(58,80)(127,-50)(192,80)

\put(0,30){\vector(1,0){180}}
{\color{black}\multiput(67,30)(0,5){7}{\line(0,1){3}}}
\put(63,22){\small{$ \lambda$}}

}
\put(210,-10){
\qbezier(154,80)(97,-50)(40,80)
\qbezier(44,80)(67,-50)(90,80)
\qbezier(38,80)(107,-50)(172,80)
\put(0,30){\vector(1,0){180}}
{\color{black}\multiput(47,30)(0,5){7}{\line(0,1){3}}}
\put(43,22){\small{$ \lambda$}}

}
\end{picture}}
\caption{Rotations with $\g_\lambda \simeq \mathfrak{u}(1,2) \oplus \R^N$ and $\g_\lambda \simeq \mathfrak{u}(3) \oplus \R^N$ respectively.}\label{supd}
\end{figure}
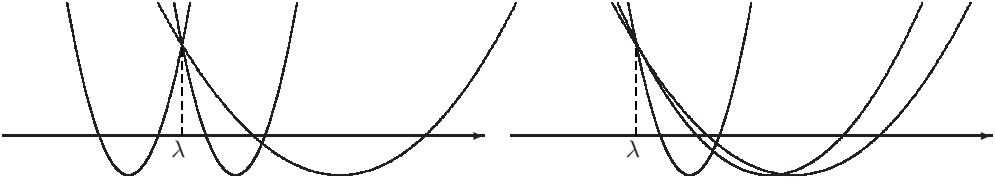

\begin{example}
Rotations with $\g_\lambda \simeq \mathfrak{u}(1,2) \oplus \R^N$ and $\g_\lambda \simeq \mathfrak{u}(3) \oplus \R^N$ respectively are depicted in Figure \ref{supd}.  Proposition \ref{spectra} can be used to check that both cases correspond to a $(1:1:1)$ resonance. 
Note that, in both cases, the bi-Hamiltonian system corresponding to the linear pencil $\diff_\lambda\Pi$ coincides with the three-wave interaction system \cite{Alber}. So, the three-wave interaction system is the bi-Hamiltonian linearization of the multidimensional rigid body at a $(1:1:1)$ resonance.

\end{example}
	\bibliographystyle{unsrt} 
\bibliography{Diss} 
\end{document}